\documentclass[hidelinks,onefignum,onetabnum,a4paper]{siamart251216}
\usepackage{amsmath, amssymb}
\usepackage{geometry}
\usepackage{graphicx}
\usepackage{hyperref}
\usepackage{multirow}

\newtheorem{assumption}{Assumption}
\newtheorem{remark}{Remark}

\usepackage{algorithm}
\usepackage{algpseudocode}

\begin{document}

\title{An Asynchronous Multi-Rate Taylor Method for Delay Differential
  Equations}

\author{Avinash Malik \thanks{Department of Electrical, Computer, and
    Software Engineering, University of Auckland, NZ
    (avinash.malik@auckland.ac.nz)}} \date{\today}

\maketitle

\begin{abstract}
  The numerical simulation of high-dimensional, multi-rate Delay
  Differential Equations (DDEs) is fundamentally bottlenecked by
  synchronous time-stepping and the dynamic memory allocation required
  for continuous history tracking. In this paper, we introduce the
  Asynchronous Adaptive Taylor Solver (AATS), an event-driven
  integration framework designed to overcome these high-performance
  computing limitations. By assigning independent local clocks to
  individual coordinates and advancing them using high-order Taylor
  polynomials generated via compile-time Automatic Differentiation, AATS
  restricts computational work to actively evolving sub-graphs. To
  eliminate the severe memory overhead endemic to traditional DDE
  solvers, AATS utilizes statically allocated circular buffers to store
  polynomial segments, achieving interpolation-free continuous
  dense-output evaluation with a verified zero-allocation runtime memory
  footprint.

  Alongside this software architecture, we establish a novel continuous
  proof of convergence for asynchronous Taylor expansions and formally
  prove that the framework's algorithmic complexity scales linearly
  ($\mathcal{O}(N)$). Extensive benchmarks against state-of-the-art synchronous
  solvers (Julia SciML) validate these theoretical bounds. On
  large-scale benchmarks (upto $N = 10000$ coordinates) AATS
  fundamentally minimizes the constant factor of algorithmic work by
  avoiding redundant evaluations, delivering empirically consistent with
  $\mathcal{O}(N)$ execution scaling and significant wall-clock speedups.
\end{abstract}


\begin{keywords}
  Delay differential equations, asynchronous integration, multi-rate methods, Taylor series, automatic differentiation, zero-allocation memory
\end{keywords}

\begin{MSCcodes}
  65L03, 65L20, 65Y20, 65L05
\end{MSCcodes}


\section{Introduction}
\label{sec:intro}

The numerical simulation of Delay Differential Equations (DDEs) is a
cornerstone of modern computational science \cite{bellen2013numerical, hale1993introduction},
governing the modeling of systems where the rate of change depends not
only on the current state but also on the system's history. These
equations naturally arise in fields characterized by inherent
transmission or reaction lags, ranging from biological neural networks \cite{campbell1999delay}
and population dynamics \cite{kuang1993delay}
to control theory \cite{richard2003time}
and epidemiological modeling \cite{bocharov2000numerical}.
However, as the dimensionality of these network topologies scales to
tens of thousands of coupled variables, simulating DDEs introduces
profound computational bottlenecks that stretch the limits of
traditional numerical integrators.

The first fundamental bottleneck is temporal. State-of-the-art numerical
frameworks, like Python's SciPy~\cite{virtanen2020scipy} or Julia's
SciML~\cite{rackauckas2017differentialequations}, almost exclusively
rely on synchronous time-stepping. In a synchronous paradigm, the global
integration step size $\Delta t$ is bounded by the fastest-evolving
coordinate in the network~\cite{hairer1996solving, gear1984multirate}.
For highly heterogeneous, multi-rate systems, where a few localized
variables may exhibit explosive transient dynamics while the vast
majority remain dormant. This shared global grid forces the solver to
evaluate the entire $N$-dimensional system at a microscopic frequency.
This leads to an overwhelming number of mathematically redundant
evaluations.

The second, equally severe bottleneck is infrastructural. Unlike
Ordinary Differential Equations (ODEs), which only require the current
state vector to advance time, DDEs require continuous evaluation of
historical states $x(t-\tau(t))$. To achieve this, traditional solvers
typically maintain two parallel data structures: a discrete integrator
to advance the state, and a secondary dense-output interpolant (such as
Hermite splines) to query the history
continuously~\cite{shampine2001solving, bellen2013numerical}. In modern
operating systems, storing and updating this history dynamically
requires continuous calls to the heap manager. Over many integration
steps, this dynamic memory allocation leads to catastrophic memory
fragmentation, cache misses, and garbage collection
latency~\cite{drepper2007what, hager2010introduction}.

To overcome these intertwined mathematical and computational
limitations, we introduce the Asynchronous Adaptive Taylor Solver
(AATS). AATS is a high-performance C++ framework designed to completely
abandon the shared global time grid and the reliance on dynamic memory
allocation. By assigning each spatial coordinate its own independent
local time domain, and by advancing these coordinates using high-order
Taylor polynomials generated via compile-time Automatic Differentiation
(AD) \cite{griewank2008evaluating}, AATS ensures that computational work
is spent only where the local error dynamics demand it.

The primary contributions of this paper are fourfold:

\begin{enumerate}
\item \textbf{Asynchronous Temporal Decoupling:} We present
  \textit{event-driven} integration, which entirely decouples the
  temporal evolution of multi-rate components. By employing an amortized
  $\mathcal{O}(1)$ priority queue, the method bypasses the global step-size
  constraints imposed by rapidly changing system dynamics, ensuring that
  the global computational complexity scales with localized dynamics
  rather than worst-case properties.
\item \textbf{Zero-Allocation History Architecture:} We introduce a
  static memory framework for continuous history tracking. By storing
  fixed-degree Taylor segments within statically allocated circular
  buffers, the algorithm formally eliminates runtime dynamic memory
  allocation. This approach mathematically bounds the peak memory
  footprint and optimizes spatial cache locality without sacrificing
  historical continuity.
\item \textbf{Interpolation-Free Delay Evaluation:} We establish that
  the local high-order Taylor polynomials generated during integration
  inherently function as exact continuous state representations. By
  directly evaluating these polynomials, the solver evaluates continuous
  and state-dependent delays analytically, eliminating the computational
  and memory overhead associated with constructing secondary
  dense-output interpolants (e.g., Hermite splines).
\item \textbf{Unified Convergence and Complexity Theory:} We establish a
  comprehensive mathematical framework that couples the asynchronous
  convergence of DDE networks with their theoretical execution cost. By
  utilizing continuous extensions of generated Taylor polynomials, we
  derive strict global error bounds and delay-differential stability
  conditions. Furthermore, we leverage these error bounds to formally
  prove that the total algorithmic work scales exclusively with the
  localized intrinsic dynamics as a function of the user-defined
  tolerance $\varepsilon$, breaking the global worst-case complexity barrier.
\end{enumerate}

The remainder of this paper is organized as follows. Section
\ref{sec:example} introduces a motivating $10,000$-dimensional
Continuous-Time Recurrent Neural Network (CTRNN) to contextualize the
multi-rate problem. Section \ref{sec:algorithm} details the asynchronous
algorithmic architecture and data structures. Sections
\ref{sec:complexity} and \ref{sec:convergence} formalize the
mathematical framework, proving the algorithmic complexity bounds and
the continuous asynchronous convergence guarantees, respectively.
Section \ref{sec:results} benchmarks AATS against the state-of-the-art
Julia SciML framework, demonstrating allocation reductions and
significant wall-clock speedups across varying hardware architectures.
Section~\ref{sec:related_work} compares AATS with current
state-of-the-art. The paper concludes in Section~\ref{sec:conclusion}.

\section{A Motivating Example}
\label{sec:example}

To illustrate the severe computational bottlenecks inherent to traditional synchronous DDE integration, consider the simulation of a large-scale Continuous-Time Recurrent Neural Network (CTRNN) with sparse local connectivity and a uniform axonal transmission delay $\tau=0.1$. 

Let the network consist of $N = 10,000$ neurons, where
$U(t) \in \mathbb{R}^N$ is the vector of membrane potentials. The first component
$u_1(t)$ represents a high-frequency sensory input neuron, while the
remaining components $u_2(t) \dots u_N(t)$ represent a large, slowly
adapting hidden reservoir. The network dynamics are governed by the
standard matrix DDE:
$$
\dot{U}(t) = -A \, U(t) + W \tanh(U(t-\tau)),
$$

\noindent
where the decay matrix $A$ and the synaptic weight matrix $W$ are
defined as:
$$
A = \underbrace{\begin{bmatrix}
10^4 & 0 & \cdots & 0 \\
0 & 1 & \cdots & 0 \\
\vdots & \vdots & \ddots & \vdots \\
0 & 0 & \cdots & 1
\end{bmatrix}}_{\text{Rapidly changing Decay Matrix}},
\quad
W = \underbrace{\begin{bmatrix}
0 & 10^4 & 0 & \cdots & 0 \\
0.5 & 0 & 0.5 & \cdots & 0 \\
\vdots & \ddots & \ddots & \ddots & \vdots \\
0 & \cdots & 0.5 & 0 & 0.5
\end{bmatrix}}_{\text{Sparse Tridiagonal Weight Matrix}}.
$$

This architecture mathematically isolates two defining computational
properties: (1) \textbf{Multi-Rate changes:} The spectral gap in $A$
dictates that the sensory neuron processes inputs on a microscopic
timescale ($\approx 10^{-4}$), while the reservoir neurons evolve on a
macroscopic timescale ($\approx 1.0$). (2) \textbf{Sparsity:} To mimic
biological topology, $W$ is strictly sparse (specifically, a hollow
tridiagonal structure). The local synaptic fan-in for any given neuron
is explicitly bounded by $E_v \le 2$.

\subsection{The Synchronous Bottleneck}
If this network is simulated using a state-of-the-art synchronous DDE
solver, the step-size controller must enforce a single global
integration step $\Delta t$. To stably resolve the extreme eigenvalue of the
sensory neuron, the solver is violently constrained to a global
micro-step of $\Delta t \approx 10^{-5}$.

Over a brief $1$-second simulation window, the synchronous solver
executes $100,000$ discrete time steps. At \textit{every single
  micro-step}, it must evaluate the full dense vector field for all
$10,000$ neurons. This results in $10^9$ redundant \texttt{tanh}
evaluations and large memory allocations to dynamically store the
network's delayed history, despite $99.99\%$ of the reservoir undergoing
virtually no physical change.

\subsection{A Proposed Coordinate-Wise Asynchronous Solution}
\label{sec:prop-coord-wise}

To resolve this synchronization bottleneck, we shift to coordinate-wise
asynchrony. Each variable $u_i(t)$ is assigned an independent local
clock and maintains a continuous mathematical history. Fast components
evolve via independent micro-steps, while slow variables take
macroscopic steps and remain dormant.

When an active variable requires delayed information (e.g.,
$u_2(t-\tau)$), it simply evaluates its neighbor's dormant continuous
trajectory algebraically, avoiding global synchronization. Exploiting
network sparsity, rapid updates only trigger localized preemption checks
on direct neighbors, waking them early only if their error thresholds
are violated.

By decoupling temporal domains, this ``sleep and wake'' formulation
reduces computational workload by orders of magnitude (e.g., from $10^9$
global evaluations down to $200,000$ localized evaluations in our
example). Section \ref{sec:algorithm} details the algorithmic machinery
and zero-allocation memory structures enabling this scheme.

\section{The Asynchronous Adaptive Taylor Solver (AATS)}
\label{sec:algorithm}

\begin{algorithm}[h]
  \caption{Asynchronous Adaptive Taylor Solver (AATS)}\label{alg:aats}
  \small
  \begin{algorithmic}[1]
    \Require Initial state $x_0 \in \mathbb{R}^N$, Final time $T$, Local tolerance $\varepsilon$, Polynomial degree $K$
    \Ensure Continuous history polynomials for all coordinates $v \in \{1, \dots, N\}$ up to $T$
    \Procedure{AATS\_Solve}{$x_0, T, \varepsilon$}
      \State $\mathcal{H} \gets \text{Empty RadixHeap}$ \Comment{Amortized $\mathcal{O}(1)$ coordinate event scheduler}
      \State $\mathcal{B} \gets \text{Array of } N \text{ Zero-Allocation RingBuffers}$
    
    \Statex \Comment{\textit{Phase 1: Coordinate-Wise Initialization}}
    \For{each coordinate $v \in \{1, \dots, N\}$}
    \State $c^{(v)} \gets \text{ForwardAD}(v, 0, x_0)$ \Comment{Compile-time AD graph for coordinate $v$}
    \State $\Delta t \gets \text{StepSizeController}(c^{(v)}_{K}, \varepsilon)$
    \State $\mathcal{B}[v]\text{.push}(c^{(v)}, \text{time}=0)$
    \State $\mathcal{H}\text{.push}(\text{time}=\Delta t, \text{id}=v)$
    \EndFor

    \Statex \Comment{\textit{Phase 2: Main Asynchronous Coordinate Event Loop}}
    \While{$\mathcal{H} \text{ is not empty}$}
    \State $(t, v) \gets \mathcal{H}\text{.pop}()$ \Comment{Extract next active coordinate $v$}
    \If{$t \geq T$}
    \State \textbf{break}
    \EndIf
    
    \Statex \indent\Comment{\textit{Native Update for Active Coordinate $v$}}
    \State $x_v(t) \gets \text{EvaluateHistory}(\mathcal{B}[v], t)$
    \State $c^{(v)} \gets \text{ForwardAD}(v, t, x_v(t))$ \Comment{$\mathcal{O}(K)$ evaluation of coordinate $v$'s AD graph}
    \State $\Delta t \gets \text{StepSizeController}(c^{(v)}_{K}, \varepsilon)$
    \State $\mathcal{B}[v]\text{.push}(c^{(v)}, \text{time}=t)$ \Comment{$\mathcal{O}(1)$ Static overwrite of coordinate history}
    \State $\mathcal{H}\text{.push}(\text{time}=t + \Delta t, \text{id}=v)$ \Comment{Schedule next update for coordinate $v$}
    
    \Statex \indent\Comment{\textit{Dependency Resolution: Preempt Dormant Coupled Coordinates}}
    \For{each coupled coordinate $j \in \text{Dependencies}(v)$}
    \State $t_j \gets \mathcal{B}[j]\text{.latest\_time}()$
    \State $\tilde{c}^{(j)} \gets \text{BinomialShift}(\mathcal{B}[j]\text{.latest\_coeffs}(), t - t_j)$
    \Statex \Comment{True derivative of coordinate $j$ from evolution function}
    \State $d_{\text{true}} \gets \text{EvaluateDynamics}(j, x_v(t))$
    \State $d_{\text{pred}} \gets \tilde{c}^{(j)}_1$ \Comment{Predicted derivative of coordinate $j$}
    
    \If{$|d_{\text{true}} - d_{\text{pred}}| > \varepsilon$}
    \Statex \Comment{Preempt coordinate $j$: Schedule immediate update}
    \State $\mathcal{H}\text{.push}(\text{time}=t, \text{id}=j)$ 
    \EndIf
    \EndFor
    \EndWhile
    \EndProcedure
  \end{algorithmic}
\end{algorithm}

Algorithm \ref{alg:aats} gives the pseudocode of the Asynchronous
Adaptive Taylor Solver (AATS). At its core, the algorithm operates by
replacing the traditional global time-step loop with an event-driven
priority queue (line 2). Instead of storing discrete historical data
points requiring dynamic memory allocation, AATS models the history of
every coordinate\footnote{In the context of network or multi-dimensional
  differential equations, each node, vertex, or individual state
  variable maps directly to a distinct spatial coordinate in the
  system's state space $\mathbb{R}^N$.} as a continuous, piece-wise Taylor
polynomial. These polynomials are statically allocated in ring buffers
(line 3), strictly enforcing a zero-allocation memory footprint during
runtime. The solver advances by integrating the most temporally imminent
coordinate, analytically extracting delayed states from dormant
neighbors, and selectively preempting those neighbors if the active
coordinate's trajectory violates their local error bounds.

\subsection{Algorithmic Walkthrough of the CTRNN Simulation}
\label{sec:algo_walkthrough}

We map the high-level logic of Algorithm \ref{alg:aats} to the $10,000$-neuron CTRNN (Section \ref{sec:example}), tracking an update cycle for the fast sensory neuron ($u_1$).

\paragraph{Phase 1: Initialization (Lines 4--9)}
At $t=0$, the algorithm initializes continuous trajectories (Line 5),
using Automatic Differentiation (AD), and evaluates local truncation
errors (Line 6). Detecting the network's multi-rate dynamics, the
controller assigns the fast neuron $u_1$ a micro-step
($\Delta t_1 = 10^{-5}$) while the $9,999$ slow reservoir neurons receive
macroscopic steps ($\Delta t_i \approx 0.1$). Variables are queued chronologically
(Lines 7--8).

\paragraph{Phase 2: Asynchronous Event Loop (Lines 11--19)}
Assuming the simulation reaches $t=1.0$, the scheduler pops the most
imminent event: $u_1$ (Line 11). To evaluate $u_1$'s nonlinear dynamics,
it requires the delayed state of its neighbor, $\tanh(u_2(t-\tau))$.
Instead of forcing $u_2$ to update, the solver directly evaluates
$u_2$'s stored continuous history (Lines 15--16). Neuron $u_1$ then
generates its next polynomial, takes its micro-step, and reschedules
(Lines 17--19).

\paragraph{Phase 3: Sparse Dependency Resolution and Preemption (Lines 20--28)}
Because $u_1$'s state changed, it may invalidate the planned trajectories of its neighbors. Exploiting sparsity (Line 20), the algorithm only checks $u_2$ (its single direct connection), safely ignoring the remaining $9,998$ dormant neurons. It aligns $u_2$'s dormant trajectory to the current time (Lines 21--24) and tests the true derivative error. If the tolerance is exceeded, $u_2$ is preempted (Lines 25--27) and forced to update immediately, guaranteeing global accuracy without synchronous barriers.

\vspace{0.5em}

Now that we have described the working on the algorithm, we describe the
high-performance techniques used to overcome the inefficiencies
introduced by asynchronous integration.

\subsection{Compile-Time AD Graphs}
\label{sec:compile_time_ad}
Standard runtime Automatic Differentiation (AD) incurs prohibitive tape-building overhead. AATS parses dependencies at compile time into a static, allocation-free graph. Fixing the polynomial degree $K$ unrolls derivative expansions into Fused Multiply-Add (FMA) instructions, computing coefficients (Algorithm \ref{alg:aats}, Line 16) in pure $\mathcal{O}(K)$ time entirely within CPU registers.

\subsection{Amortized $\mathcal{O}(1)$ Event Scheduling}
\label{sec:radix_heap}
To process independent coordinate updates, AATS uses a Radix Heap priority queue (Algorithm \ref{alg:aats}, Lines 11, 19). Exploiting strictly monotonic time advancement reduces event extraction to an amortized $\mathcal{O}(1)$. Preempted dormant coordinates (Line 26) are handled via lazy deletion, allowing the queue to silently discard obsolete updates in $\mathcal{O}(1)$ time without search-and-remove overhead.

\subsection{Zero-Allocation Memory Management}
\label{sec:zero_allocation}
Dynamic heap allocation for history tracking causes severe latency. AATS
guarantees zero runtime allocations using a static array of $N$ circular
buffers, each storing $M$ fixed-degree polynomial segments. New
coefficients are pushed directly to these buffers (Algorithm
\ref{alg:aats}, Line 18), natively overwriting old entries. This
strictly bounds memory to $\mathcal{O}(N \cdot M \cdot K)$ and maximizes L1 cache
locality.

\subsection{Binomial Time-Alignment}
\label{sec:binomial_shifting}
Evaluating coupled dynamics requires aligning dormant neighbor states at historical times ($t_j$) to the active time ($t$). Rather than triggering expensive re-evaluations, AATS projects historical states forward algebraically using a binomial expansion (Algorithm \ref{alg:aats}, Line 22). This $\mathcal{O}(K^2)$ shift executes purely as register instructions, synchronizing dependencies without breaking the CPU pipeline.

\subsection{Interpolation-Free Delay Lookup}
\label{sec:delay_lookup}
Continuous delays typically require expensive secondary interpolants.
AATS eliminates this because the integration Taylor series inherently
serves as an exact dense-output representation. To evaluate a delayed
state $x(t-\tau(x(t), t))$, AATS locates the relevant buffer segment and
directly evaluates the stored polynomial (Algorithm \ref{alg:aats}, Line
15), recycling AD coefficients with zero interpolation overhead.

\section{Algorithmic Complexity of AATS}
\label{sec:complexity}

Traditional synchronous solvers impose an $\mathcal{O}(N)$ or
$\mathcal{O}(N^2)$ per-step time complexity by evaluating the global state vector
simultaneously, while dynamically storing continuous history requires
$\mathcal{O}(S \cdot N)$ memory over $S$ steps. AATS circumvents these bottlenecks via
asynchronous, coordinate-wise updates, with complexities formalized in
the following theorems.



\begin{theorem}[Local Arithmetic Time Complexity]
  \label{thm:time_complexity}
  Let $v \in \{1, \dots, N\}$ be the active coordinate evaluated at a
  given asynchronous event, and let $E_v$ denote the number of coupled
  neighbor coordinates (the local degree). For a Taylor polynomial of
  fixed degree $K$, the amortized time complexity of a single
  asynchronous update in AATS is bounded by
  $\mathcal{O}(E_v \cdot K^2 + K^2)$, strictly independent of the global system
  dimension $N$.
\end{theorem}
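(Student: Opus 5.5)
The plan is to walk through one iteration of the main event loop of Algorithm~\ref{alg:aats} (Lines 11--28) and charge a cost to each line. The cost model is a unit-cost RAM. The vector field component $f_v$ is assumed to depend on at most $E_v$ other coordinates, each possibly delayed. Each coordinate's compile-time AD graph (Section~\ref{sec:compile_time_ad}) has a size that depends only on the local expression for $f_v$, so it is $\mathcal{O}(1)$ in $N$. The total is then the sum of four groups, and we show each is $\mathcal{O}(E_v K^2 + K^2)$ with no dependence on $N$.

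\textbf{Scheduling and memory operations.} The pop on Line 11 and the push on Line 19 cost amortized $\mathcal{O}(1)$ by the monotone Radix Heap argument of Section~\ref{sec:radix_heap}. Lazily deleted stale entries are charged to the push that created them, which keeps the bound amortized. The ring-buffer push on Line 18 is an $\mathcal{O}(K)$ copy into preallocated storage (Section~\ref{sec:zero_allocation}). Locating the relevant segment for a delay lookup costs $\mathcal{O}(\log M)$ by binary search, or $\mathcal{O}(M)$ by a linear scan. Since $M$ is a fixed constant, either is $\mathcal{O}(1)$ in $N$.

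\textbf{The native update of $v$.} Evaluating the history on Line 15 is a Horner evaluation, costing $\mathcal{O}(K)$. For Line 16, each of the $E_v$ inputs must be delivered to the AD graph as a Taylor series about $t$. This is obtained either by evaluating the stored segment or by binomially shifting it (Section~\ref{sec:binomial_shifting}), at $\mathcal{O}(K^2)$ per neighbor, so $\mathcal{O}(E_v K^2)$ in total. The Taylor recurrences for the $\mathcal{O}(1)$ graph nodes are then propagated. Nonlinear nodes such as products, $\tanh$ and $\exp$ use Cauchy-product recurrences costing $\mathcal{O}(K^2)$ for all $K$ coefficients, while linear nodes cost $\mathcal{O}(K)$. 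The paper's $\mathcal{O}(K)$ claim should be read per coefficient, and I would state it carefully this way. Line 17 reads only $c^{(v)}_K$, so it is $\mathcal{O}(1)$. Altogether Lines 15--19 cost $\mathcal{O}(E_v K^2 + K^2)$.

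\textbf{The dependency loop and the main obstacle.} Lines 20--28 run $|\mathrm{Dependencies}(v)|$ times. Each iteration costs one binomial shift ($\mathcal{O}(K^2)$), one scalar evaluation of $f_j$ ($\mathcal{O}(E_j)$), and an $\mathcal{O}(1)$ heap push. The main obstacle is that $\mathrm{Dependencies}(v)$ is the out-neighborhood of $v$, and $f_j$ has cost $E_j$, while the theorem is phrased only in terms of $E_v$. I would resolve this by interpreting $E_v$ as the local degree in the coupling graph, counting both in- and out-edges, and by assuming $f_j$ has bounded evaluation cost per input. For the CTRNN example this is immediate, since $E_v\le 2$. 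Alternatively, $E_j$ can be absorbed into a bounded maximum degree constant. This gives $\mathcal{O}(E_v K^2)$ for the loop.

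Summing the three groups yields $\mathcal{O}(E_v K^2 + K^2)$. No step iterates over all $N$ coordinates or touches global state, so the bound is independent of $N$.
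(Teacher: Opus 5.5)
Your proposal is correct and takes essentially the same approach as the paper's proof. The paper likewise charges one iteration of the event loop line by line to amortized $\mathcal{O}(1)$ Radix-Heap scheduling with lazy deletion, $\mathcal{O}(E_v K + K^2)$ forward AD, an $\mathcal{O}(K^2)$ binomial shift for each of the $E_v$ neighbors, and an $\mathcal{O}(\log M + K)$ history lookup, then treats $K$ and $M$ as constants independent of $N$. Your extra observations are a sharpening of that argument, not a departure from it. Specifically, you note that $\mathrm{Dependencies}(v)$ is really an out-neighborhood and that \texttt{EvaluateDynamics} on Line 23 costs $\mathcal{O}(E_j)$, so a total-degree reading of $E_v$ or a bounded maximum degree is needed; the paper's proof passes over this point silently.
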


\begin{proof}
  The execution time of a single iteration of the main asynchronous loop
  (Algorithm \ref{alg:aats}, Phase 2) is the sum of four primary
  operations: event scheduling, state differentiation, dependency
  time-alignment, and continuous delay lookup.
  \begin{enumerate}
  \item \textbf{Event Scheduling (Lines 11, 19, 26):} AATS utilizes a
    Radix Heap for the priority queue $\mathcal{H}$. Combined with lazy deletion
    for preempted events, extracting the minimum-time event (Line 11)
    and inserting future updates (Lines 19 and 26) both execute in
    amortized $\mathcal{O}(1)$ time, independent of $N$.
  
  \item \textbf{Forward Automatic Differentiation (Line 16):} To
    generate the new polynomial segment, the solver evaluates the
    pre-compiled AD graph using the current coordinate state. For a
    coordinate with $E_v$ coupled neighbors, aggregating the linear
    spatial inputs requires $\mathcal{O}(E_v \cdot K)$ operations. Expanding the local
    non-linear dynamics requires $\mathcal{O}(K^2)$ operations via discrete
    convolution. Calculating the full vector of Taylor coefficients
    $c^{(v)}$ therefore executes in strictly
    $\mathcal{O}(E_v \cdot K + K^2)$ time~\cite{griewank2008evaluating}.
  
  \item \textbf{Dependency Time-Alignment (Line 22):} For each of the
    $E_v$ dormant neighbors, AATS projects their delayed state onto the
    active time domain $t$ using a binomial expansion shift. Shifting a
    polynomial of degree $K$ requires computing a triangular
    matrix-vector product, necessitating $\frac{1}{2}K(K+1)$ operations,
    which is bounded by $\mathcal{O}(K^2)$. Applying this shift across all
    $E_v$ neighbors in the dependency loop yields
    $\mathcal{O}(E_v \cdot K^2)$ total operations~\cite{griewank2008evaluating}.
  
  \item \textbf{Continuous Delay Lookup (Line 15):} Locating historical
    segment for delay $x_j(t - \tau(t))$ in \texttt{RingBuffer} of maximum
    capacity $M$ is $\mathcal{O}(\log M)$ using binary search. Evaluating the
    $K$-degree polynomial at $t_d$ requires $\mathcal{O}(K)$ operations.
  \end{enumerate}
  Summing these components, the total time complexity per asynchronous
  event is: $\mathcal{O}(1) + \mathcal{O}(E_v \cdot K + K^2) + \mathcal{O}(E_v \cdot K^2) + \mathcal{O}(\log M + K)$
  
  Because the maximum polynomial degree $K$ and the buffer capacity $M$
  are localized static constraints entirely independent of the global
  system dimension $N$, the overall asymptotic time complexity per event
  simplifies strictly to $\mathcal{O}(E_v \cdot K^2 + K^2)$. 
\end{proof}

\begin{theorem}[Global Static Memory Bound]
  \label{thm:space_complexity}
  Let $S$ be the total number of integration steps taken to reach end
  time $T$. The runtime dynamic memory allocation of AATS is strictly
  $\mathcal{O}(1)$ with respect to $S$, and the total memory footprint of the
  continuous history representation is permanently bounded by
  $\mathcal{O}(N \cdot M \cdot K)$, where $M$ is the fixed segment capacity of the history
  buffers.
\end{theorem}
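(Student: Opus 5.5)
The plan is to separate the memory of AATS into the parts allocated once, before the event loop starts, and the parts touched inside the loop. Then I show that the loop only ever reuses storage that already exists.

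\textbf{Step 1: account for the pre-allocated structures.} The ring-buffer array $\mathcal{B}$ (Algorithm~\ref{alg:aats}, line 3) consists of $N$ buffers. Each buffer holds $M$ segments, and each segment stores $K+1$ Taylor coefficients plus one time stamp. The total is $N \cdot M \cdot (K+2) = \mathcal{O}(N \cdot M \cdot K)$ words, and this storage is fixed at construction. The compile-time AD graphs of Section~\ref{sec:compile_time_ad} have static size determined at compile time. I would bound them by $\mathcal{O}(\sum_v (E_v+1) K)$, which is independent of $S$. The scratch arrays for the binomial shift (Section~\ref{sec:binomial_shifting}) hold $\mathcal{O}(K)$ coefficients and live in registers or on a fixed stack frame.

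\textbf{Step 2: walk through the Phase~2 loop line by line.} I would argue that every operation in one iteration either writes into existing storage or uses bounded temporaries.
\begin{itemize}
\item EvaluateHistory and the delay lookup only read from $\mathcal{B}$.
\item ForwardAD writes $K+1$ coefficients into a fixed temporary.
\item $\mathcal{B}[v]$.push overwrites the slot at index $(h_v+1) \bmod M$, so no allocation takes place (Section~\ref{sec:zero_allocation}).
\item BinomialShift and EvaluateDynamics use $\mathcal{O}(K)$ fixed temporaries.
\end{itemize}
By induction on the iteration count, the allocated memory after $s$ iterations equals the memory after initialization, for every $s \le S$. This gives the $\mathcal{O}(1)$ runtime allocation in $S$.

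\textbf{Step 3, the main obstacle: the event queue $\mathcal{H}$.} Lazy deletion together with preemption pushes (line 26) could make the number of entries grow with $S$, and that would break the $\mathcal{O}(1)$ claim. I would first try to bound the number of live entries per coordinate. Each coordinate has exactly one valid scheduled event. Each processed event for $v$ generates at most $1 + E_v$ pushes, namely its own reschedule plus its preemption pushes. Stale entries are discarded as soon as they reach the front.

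My intended argument is that stale entries carry times at most the current frontier. A radix heap with a monotone key can then be implemented with a fixed number of buckets ($\mathcal{O}(\log$ of the key range$)$). Each bucket has capacity bounded by $\mathcal{O}(N + \sum_v E_v)$, since between two consecutive processings of $v$ at most $E_v$ preemption pushes for $v$'s neighbours can pile up.

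If that bound turns out not to hold, I would fall back on a direct fix. One option is a per-coordinate flag that suppresses duplicate preemption pushes. The other is an indexed decrease-key, which caps the queue at $N$ entries. Either way the heap becomes a static array of size $\mathcal{O}(N)$, allocated once.

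\textbf{Step 4: assemble the bound.} The total footprint is the sum of the pieces: $\mathcal{O}(NMK)$ for the history, plus $\mathcal{O}(N + \sum_v E_v)$ for the queue and graphs, plus $\mathcal{O}(K)$ for temporaries. For bounded local degree this is dominated by the history term $\mathcal{O}(N\cdot M\cdot K)$, and none of it depends on $S$.
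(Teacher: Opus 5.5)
Your Steps 1, 2 and 4 are the paper's proof. The paper counts $N$ ring buffers of capacity $M$, each segment holding a timestamp and $\mathcal{O}(K)$ coefficients. It then observes that modulo indexing makes the $(M+1)$-th push overwrite the first slot, and concludes that no allocation depends on $S$. The paper never discusses $\mathcal{H}$, the AD graphs or the temporaries, so you are right that lazy deletion is a loose end. The problem is that Step 3 resolves it with a false premise: stale entries do not sit at or behind the frontier. When line 26 preempts $j$ at the current time $t$, the fresh entry $(t,j)$ is at the frontier and is popped almost at once. The entry that becomes obsolete is $j$'s previous reschedule at $t_j+\Delta t_j>t$, which lies in the future and survives until the frontier reaches it.

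Suppose a fast neighbour with step $h_u$ triggers the test on line 25 at each of its events. Each such event strands one future entry of a slow $j$ with step $h_j$, so about $h_j/h_u$ stale entries for $j$ coexist (about $10^4$ in the walkthrough of Section~\ref{sec:algo_walkthrough}). Your count of at most $E_v$ pushes per processing of $v$ is correct. However, the stranded entries are not removed by later processings, so the $\mathcal{O}(N+\sum_v E_v)$ bucket bound fails. Heap occupancy is governed by the number of preemptions in a time window of length $h_{\max}$, i.e.\ by step-size ratios rather than by the graph.

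Of your fallbacks, the per-coordinate flag does not help. It removes duplicate frontier pushes, not the stranded future entries, and suppressing the push outright would disable preemption itself. The indexed decrease-key does cap $\mathcal{H}$ at $N$ preallocated entries. But it replaces the lazy-deletion scheduler specified in Section~\ref{sec:radix_heap}, so you would be proving the theorem for a modified AATS. You would also need to recheck the amortized $\mathcal{O}(1)$ scheduling cost used in Theorem~\ref{thm:time_complexity}.

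To keep the algorithm as specified, you need an extra hypothesis of the same kind as the step floor $h_{\min}^{(v)}$ in Lemma~\ref{lem:history_preservation}: a bound on the number of events per coordinate in any window of length $h_{\max}$. Under that hypothesis the number of stranded entries is bounded and $\mathcal{H}$ can be preallocated to that size. This gives allocation that is $\mathcal{O}(1)$ in $S$, but with a capacity that depends on the dynamics, not on $N$ and the degrees alone. The alternative is to drop Step 3 and confine the claim to the history buffers, which is all the paper's argument actually establishes.
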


\begin{proof}
  During Phase 1 (Initialization) of Algorithm \ref{alg:aats}, AATS
  allocates an array $\mathcal{B}$ containing exactly $N$ zero-allocation ring
  buffers. Each ring buffer is initialized with a fixed capacity $M$. A
  single stored segment consists of a scalar timestamp and a fixed-size
  array of $K$ Taylor coefficients. Thus, the total static space
  required to store the continuous history for the entire state-space is
  exactly $\mathcal{O}(N \cdot M \cdot K)$.

  During Phase 2 (Main Asynchronous Event Loop), as $S \to \infty$, new Taylor
  polynomials must be appended to the history. Because $\mathcal{B}[v]$ utilizes
  modulo arithmetic to govern array indices, the $(M+1)$-th polynomial
  physically overwrites the memory address of the $1$-st polynomial. 

  Therefore, the amount of memory allocated during the solution
  trajectory is entirely independent of the number of steps $S$ or the
  dynamics of the system, yielding an $\mathcal{O}(1)$ runtime allocation
  footprint and guaranteeing that global memory usage never exceeds the
  initialized $\mathcal{O}(N \cdot M \cdot K)$ bound.
\end{proof}

Since AATS utilizes an adaptive step size $h$ to resolve rapidly
changing or discontinuous dynamics, a quick succession of microscopic
steps could theoretically exhaust the fixed history buffer. To guarantee
mathematical correctness, $M$ must be properly bounded against the
system's delay dynamics.

\begin{lemma}[History Preservation Bound]
  \label{lem:history_preservation}
  Let $M$ be the fixed capacity of the circular buffer for coordinate
  $v$, and let $h_{min}^{(v)}$ be the strict lower bound on the accepted
  adaptive step size. For an arbitrary state-dependent delay bounded by
  $\tau_{max} = \sup \tau(t, x(t))$, continuous history preservation is
  strictly guaranteed if and only if:
  $M \ge \left\lceil \frac{\tau_{max}}{h_{min}^{(v)}} \right\rceil + 1$
\end{lemma}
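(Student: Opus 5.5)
The plan is to model the ring buffer $\mathcal{B}[v]$ as a sliding window of the $M$ most recently pushed segments. At any event time $t$ it holds segments with start times $t_{k-M+1} < \dots < t_k \le t$. Segment $i$ covers the interval $[t_i, t_{i+1})$, and the newest segment covers $[t_k, t]$ and beyond via its Taylor extension. Continuous history preservation then means the following: for every admissible query time $t_d = t - \tau(t,x(t)) \in [t - \tau_{max}, t]$, the segment containing $t_d$ is still resident, i.e.\ has not been overwritten by the modulo indexing described in the proof of Theorem~\ref{thm:space_complexity}. Equivalently, the oldest resident start time must satisfy $t_{k-M+1} \le t - \tau_{max}$. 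I will prove the sufficiency direction first and then the necessity direction by an explicit worst-case step sequence.

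\textbf{Sufficiency.} Every accepted step obeys $t_{i+1} - t_i \ge h_{min}^{(v)}$, so
\[
t - t_{k-M+1} \;\ge\; t_k - t_{k-M+1} \;=\; \sum_{i=k-M+1}^{k-1} (t_{i+1}-t_i) \;\ge\; (M-1)\,h_{min}^{(v)} .
\]
If $M \ge \lceil \tau_{max}/h_{min}^{(v)} \rceil + 1$, then $(M-1)h_{min}^{(v)} \ge \tau_{max}$. Hence $t_{k-M+1} \le t - \tau_{max} \le t_d$, and the segment containing $t_d$ is among the $M$ resident ones. The lookup of Theorem~\ref{thm:time_complexity} therefore finds it by binary search. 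The "$+1$" accounts for the segment straddling $t-\tau_{max}$, whose start time lies strictly before the window.

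\textbf{Necessity.} This is the main obstacle, because "if and only if" can only hold in a worst-case sense: a particular trajectory with larger steps could survive with a smaller $M$. I will therefore interpret the claim as \emph{guaranteed for all admissible step sequences and delays}. For the converse, suppose $M \le \lceil \tau_{max}/h_{min}^{(v)} \rceil$. I construct the adversarial scenario in which the controller repeatedly takes exactly $h_{min}^{(v)}$ steps (rapid transients), so $t_k - t_{k-M+1} = (M-1)h_{min}^{(v)} < \tau_{max}$. I will pair this with a delay attaining (or approaching, taking a sup-achieving sequence) $\tau_{max}$, and choose the query at $t = t_k$. Then $t_d = t_k - \tau_{max} < t_{k-M+1}$. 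The segment covering $t_d$ was the $(M+1)$-th-most-recent and has been overwritten, so history is lost.

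The delicate points in the necessity argument are the boundary cases. When $\tau_{max}/h_{min}^{(v)}$ is an integer, the inequality becomes $(M-1)h_{min} \le \tau_{max} - h_{min} < \tau_{max}$, which is strict and fine. When $\tau_{max}$ is a supremum not attained, I will take delays arbitrarily close to it, relying on the strict gap $\tau_{max} - (M-1)h_{min} \ge h_{min} > 0$ when $M-1 < \lceil\tau_{max}/h_{min}\rceil$. Finally, I need to state explicitly that the step controller and the delay function are otherwise unconstrained, since this is what makes the worst case realizable.
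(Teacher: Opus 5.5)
Your proposal is correct and uses the same worst-case idea as the paper: all steps of coordinate $v$ equal to $h_{min}^{(v)}$, giving $\lceil \tau_{max}/h_{min}^{(v)}\rceil$ spanning segments plus one, except that you prove sufficiency for arbitrary step sequences (summing the $M-1$ gaps between resident start times) and necessity by an explicit adversarial step/delay construction, where the paper only asserts the worst-case count. One small slip: for $M-1<\lceil\tau_{max}/h_{min}^{(v)}\rceil$ the gap $\tau_{max}-(M-1)h_{min}^{(v)}$ is only guaranteed to be strictly positive, not at least $h_{min}^{(v)}$ (e.g.\ $\tau_{max}=2.5\,h_{min}^{(v)}$, $M=3$ gives $0.5\,h_{min}^{(v)}$), but positivity is all your limiting argument needs.
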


\begin{proof}
  In the worst-case scenario, coordinate $v$ generates events of
  $h_{min}^{(v)}$ time units per step. To evaluate a state-dependent
  delay reaching the maximum horizon $\tau_{max}$, the buffer must retain
  polynomial segments spanning the interval $[t - \tau_{max}, t]$. The
  maximum number of segments required to span this interval under
  minimum step-size conditions is exactly
  $\lceil \tau_{max} / h_{min}^{(v)} \rceil$. An additional slot $(+1)$ is required
  to safely write the active computing step without overwriting the
  oldest necessary
  boundary. 
\end{proof}


\begin{theorem}[Global Algorithmic Work Complexity]
  \label{thm:global_work}
  Let $N_v(T)$ denote the total number of asynchronous events executed
  by coordinate $v$ over the global integration interval $[0, T]$. For a
  system of dimension $N$, maximum Taylor polynomial degree $K$, and
  local in-degree $E_v$, the total computational runtime $W(T)$ of the
  Asynchronous Adaptive Taylor Solver is strictly bounded by:
  $W(T) = \mathcal{O}\left( \sum_{v=1}^{N} N_v(T) \left(E_v K^2 + K^2\right)
  \right).$
\end{theorem}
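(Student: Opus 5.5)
The plan is to decompose the total runtime $W(T)$ into the cost of initialization plus the sum, over every event popped from $\mathcal{H}$, of the cost of processing that event. Then we invoke Theorem~\ref{thm:time_complexity} to bound each per-event cost. Concretely, we first observe that Phase~1 of Algorithm~\ref{alg:aats} performs, for each coordinate $v$, one ForwardAD evaluation, one StepSizeController call, one buffer push, and one heap insertion. By the same accounting used in the proof of Theorem~\ref{thm:time_complexity} (items 1 and 2), this costs $\mathcal{O}(E_v K + K^2)$ per coordinate. We count this initial evaluation as one of the $N_v(T)$ events of coordinate $v$, which is legitimate since $N_v(T)\ge 1$, so it is absorbed into the claimed sum.

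Next we partition the Phase~2 iterations by the identity of the active coordinate. Every iteration that passes the $t\ge T$ test executes exactly one native update of some coordinate $v$, and by definition there are $N_v(T)$ such iterations for $v$. By Theorem~\ref{thm:time_complexity}, each costs amortized $\mathcal{O}(E_v K^2 + K^2)$, since the bound there already includes the dependency loop over $\text{Dependencies}(v)$, the binomial shifts, the delay lookups (with $\log M$ treated as a constant independent of $N$ by Lemma~\ref{lem:history_preservation}), and the heap operations. Summing over $v$ gives $\sum_v N_v(T)(E_v K^2 + K^2)$. The final popped event with $t\ge T$ contributes $\mathcal{O}(1)$ and is absorbed.

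The main obstacle is the amortization, and in particular the stale heap entries created by preemption and lazy deletion. A preemption at line 26 pushes a new entry for $j$ without removing $j$'s old entry, so pops of obsolete entries are not native updates of any coordinate. To handle this, I would charge every heap entry to the iteration that inserted it. Each native update of $v$ inserts at most $1+E_v$ entries: one reschedule plus at most one per neighbor. Hence the total number of pops, stale or not, is at most $N + \sum_v N_v(T)(1+E_v)$. Each stale pop is discarded in amortized $\mathcal{O}(1)$ time (Section~\ref{sec:radix_heap}). So the stale pops cost $\mathcal{O}(\sum_v N_v(T)(1+E_v))$, which is dominated by the main sum.

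Finally, I would justify that the amortized $\mathcal{O}(1)$ radix-heap bounds sum correctly over the entire run, since amortized bounds are exactly statements about totals over sequences of operations. Here I would note the monotonicity requirement: every insertion key is $\ge$ the last extracted key. This holds because line 19 inserts $t+\Delta t\ge t$ and line 26 inserts $t$. Adding the three contributions (initialization, native updates, and stale pops) yields $W(T)=\mathcal{O}\left(\sum_{v=1}^{N} N_v(T)\left(E_vK^2+K^2\right)\right)$.
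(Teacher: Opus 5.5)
Your proposal is correct and takes the same route as the paper's proof: bound each event's cost by Theorem~\ref{thm:time_complexity}, then sum over the $N_v(T)$ events of every coordinate. Your additional bookkeeping is not in the paper, which simply asserts that the queue processes exactly $N_v(T)$ events per coordinate. That bookkeeping absorbs the Phase~1 initialization into the event count, charges each stale lazy-deletion pop to the update that inserted it (at most $1+E_v$ insertions per native update), and checks monotonicity for the radix heap, so it closes small accounting gaps without changing the structure of the argument.
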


\begin{proof}
  By Theorem \ref{thm:time_complexity}, the amortized algorithmic cost
  of resolving a single asynchronous event for coordinate $v$—inclusive
  of event scheduling, automatic differentiation, dependency
  time-alignment, and history lookups—is bounded by $\mathcal{O}(E_v K^2 + K^2)$.
  
  Over the entire continuous integration domain $t \in [0, T]$, the global
  priority queue processes exactly $N_v(T)$ distinct events for each
  coordinate $v$. Because the coordinates are strictly decoupled in
  time, the total algorithmic work is simply the discrete sum of the
  individual event costs across the entire network graph:
  $W(T) = \sum_{v=1}^{N} \sum_{i=1}^{N_v(T)} \mathcal{O}\left(E_v K^2 + K^2\right).$
  Factoring out the event count yields the final complexity bound. 
\end{proof}

\begin{lemma}[Strict Bound on Preemption Cascades]
\label{lem:preemption_bound}
Let $N_v^{sync}(T)$ denote the number of intrinsic, scheduled
integration steps taken by coordinate $v$ over the interval $[0, T]$ as
dictated by its local error controller. The total number of preemption
(or wake-up) events $N_v^{preempt}(T)$ triggered by the structural
dependencies of $v$ is strictly bounded by:
$N_v^{preempt}(T) \le \sum_{u \in \mathcal{N}_{in}(v)} N_u^{sync}(T)$ where
$\mathcal{N}_{in}(v)$ is the set of incoming neighbors (dependencies) of $v$.
\end{lemma}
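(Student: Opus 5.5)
The argument is a charging (injection) argument on the event log of Algorithm~\ref{alg:aats}. We map every preemption event of $v$ to the scheduled event of an incoming neighbour that caused it, and then show the map is injective. The only place a preemption of $v$ can be created is the dependency-resolution loop (Lines 20--28). That loop runs only inside an iteration of the main event loop in which some other coordinate $u$ has just been popped and updated. So we first fix the convention that $v$ is preempted by $u$ exactly when $v \in \text{Dependencies}(u)$. Equivalently, $u \in \mathcal{N}_{in}(v)$: the coupling graph is read so that $u$ feeds into the vector field of $v$, since it is $v$'s derivative that is re-evaluated with $u$'s new state. Every preemption push $\mathcal{H}.\text{push}(\text{time}=t,\text{id}=v)$ is then issued during the processing of an event $(t,u)$ with $u \in \mathcal{N}_{in}(v)$.

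The key steps, in order, are as follows.
\begin{enumerate}
\item Inspect the inner loop. For a fixed popped event $(t,u)$, the loop visits each $j \in \text{Dependencies}(u)$ exactly once. It therefore issues at most one preemption push for $v$, namely when the test $|d_{\text{true}}-d_{\text{pred}}|>\varepsilon$ succeeds.
\item Define the charging map $\phi$. It sends each preemption event of $v$ to the pair $(u,\text{event of }u)$ whose processing issued the push. By the first step, $\phi$ is injective: distinct preemptions of $v$ come from distinct (neighbour, event) pairs.
\item Count the image. Its size is at most $\sum_{u\in\mathcal{N}_{in}(v)}(\text{number of events of }u\text{ that can trigger preemptions})$.
\end{enumerate}

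The main obstacle is that the events of $u$ able to trigger the inner loop include not only $u$'s intrinsic scheduled steps $N_u^{sync}(T)$ but also $u$'s own preemption events. Those preempted updates also reach Lines 20--28 and could, in principle, cascade. The bound must therefore rule out cascades: a preemption-induced update must not itself generate further charged preemptions. I would handle this with the lazy-deletion semantics of Section~\ref{sec:radix_heap}. A preempted update replaces (and lazily deletes) the pending scheduled event of that coordinate rather than adding to it. We can then identify each processed event of $u$, scheduled or preempted, with a distinct slot in $u$'s step sequence as counted by its controller, and take $N_u^{sync}(T)$ to be the total number of steps actually executed by $u$ under its error controller.

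If that identification is not acceptable, the fallback is to interpret the lemma as charging preemptions only to intrinsic steps. In that reading, preemption-triggered updates skip (or are excluded from) the dependency check, so only the $N_u^{sync}(T)$ intrinsic events of $u$ appear in the image of $\phi$. Either way, summing over $u\in\mathcal{N}_{in}(v)$ gives $N_v^{preempt}(T)\le\sum_{u\in\mathcal{N}_{in}(v)}N_u^{sync}(T)$.

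Finally, I would note that only events with $t<T$ are counted, since the loop breaks at $t\ge T$. This keeps both sides over the same interval $[0,T]$.
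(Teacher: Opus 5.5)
Your charging map is the same argument the paper gives. Its proof says a preemption of $v$ can only be issued in Lines 20--28 while an incoming neighbor $u$ completes a step, and it then counts those steps as the intrinsic ones, $N_u^{sync}(T)$. Your injectivity observation (at most one push for $v$ per processed event of $u$) is a correct sharpening. But the obstacle you flag is genuine, and neither of your resolutions removes it.

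\textbf{Why the lazy-deletion identification fails.} Lazy deletion only decides which heap entries get processed. A popped preemption event $(t,u)$ still runs the entire loop body, including the dependency loop. It also pushes a fresh scheduled event at $t+\Delta t$ (Line 19). So a preemption re-anchors $u$'s schedule rather than occupying one of its existing slots. Thus $u$ processes $N_u^{sync}(T)+N_u^{preempt}(T)$ events, and the superseded scheduled entries are exactly those \emph{not} counted in $N_u^{sync}(T)$.

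Declaring $N_u^{sync}(T)$ to be all executed steps therefore just renames the total count $N_u(T)$. What your injection proves is $N_v^{preempt}(T)\le\sum_{u\in\mathcal{N}_{in}(v)}N_u(T)$. That inequality is recursive, since its right side again contains preemption counts. It is inconsistent with the disjoint split $N_v(T)=N_v^{sync}(T)+N_v^{preempt}(T)$ used in Theorem~\ref{thm:adaptive_linear_scaling}. It is also not the quantity Proposition~\ref{prop:step_scaling} bounds, because preemption-triggered steps are not spaced by the controller.

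\textbf{Why the fallback fails.} It changes the algorithm: Algorithm~\ref{alg:aats} has no branch that skips the dependency check for preempted updates.

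\textbf{The stated bound is not provable for Algorithm~\ref{alg:aats} as written.} Take a chain $w\to u\to v$ (so $u\in\text{Dependencies}(w)$, $v\in\text{Dependencies}(u)$, and $\mathcal{N}_{in}(v)=\{u\}$), where $w$'s controller steps much faster than $u$'s.
\begin{itemize}
\item Nothing prevents each step of $w$ from preempting $u$ before $u$'s pending scheduled event fires.
\item Then every scheduled entry of $u$ is lazily discarded, so $N_u^{sync}(T)$ stays bounded, possibly at zero.
\item Meanwhile each preempted update of $u$ runs the test in Lines 20--28 and may preempt $v$.
\item So $N_v^{preempt}(T)$ can be of order $N_w^{sync}(T)$, far above the claimed right-hand side.
\end{itemize}
The paper's one-line proof hides exactly this by treating every completed step of $u$ as intrinsic.

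\textbf{What a correct version needs.} You need either an algorithmic restriction (preemption-triggered updates do not propagate, or cascades are depth-limited) or the weaker conclusion
$N_v^{preempt}(T)\le\sum_{u\in\mathcal{N}_{in}(v)}\bigl(N_u^{sync}(T)+N_u^{preempt}(T)\bigr)$.
That recursion closes only under extra structure, such as an acyclic dependency graph of bounded depth, and then with constants that grow with the depth. On graphs with cycles, such as the tridiagonal CTRNN coupling, it does not close without a separate argument bounding cascade length.
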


\begin{proof}
  In the AATS architecture, a preemption event for coordinate $v$ occurs
  exclusively when an incoming neighbor $u$ (Line 20--28) completes its
  integration step. Hence, coordinate $v$ cannot be preempted by
  structurally independent variables. Therefore, the maximum number of
  times $v$ can be awakened prior to its own scheduled clock expiration
  is exactly the sum of the intrinsic steps taken by its direct
  dependencies.
\end{proof}

To fully resolve the global work bound, we must express the intrinsic
coordinate event counts $N_v^{sync}(T)$ in terms of the user-defined
error tolerance $\varepsilon$. While the formal proof of adaptive error control is
deferred to Section~\ref{sec:convergence}, we state the resulting
step-count scaling here as a foundational proposition to complete our
complexity profile.

\begin{proposition}[Tolerance--Step Count Scaling]
  \label{prop:step_scaling}
  Assuming the adaptive controller strictly enforces the local error
  bounds (as formally established in
  Theorem~\ref{thm:adaptive_convergence_main}), the accepted maximum
  step sizes scale as $h_{\max} = \mathcal{O}(\varepsilon^{1/(K+1)})$, where
  $K$ is the Taylor polynomial degree. Consequently, for any coordinate
  $v$, the total intrinsic scheduled event count over the finite
  interval $[0,T]$ is bounded by:
  $N_v^{sync}(T) = \mathcal{O}(\varepsilon^{-\frac{1}{K+1}})$.
\end{proposition}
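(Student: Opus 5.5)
The argument rests on the standard Taylor remainder estimate combined with the step-size selection rule of the local controller. Fix a coordinate $v$ and work on a compact neighbourhood of the exact trajectory on $[0,T]$. On this set I assume the $(K+1)$-st time derivative of $x_v$ is bounded by a constant $C_{K+1}^{(v)}$. Such a bound should follow from smoothness of the right-hand side and from the boundedness of the delayed arguments, which are supplied by the stored history segments.

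\textbf{Step 1: local truncation error.} A degree-$K$ Taylor segment generated at time $t_n$ and used over a step $h$ has Lagrange remainder bounded by
\[
\frac{C_{K+1}^{(v)}}{(K+1)!}\, h^{K+1}.
\]

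\textbf{Step 2: the controller's step choice.} The \textsc{StepSizeController} on Line 17 enforces a local error at most $\varepsilon$. Following the standard form in the Taylor-method literature, I will take it to be
\[
h = \left(\frac{\varepsilon}{|c^{(v)}_K|}\right)^{1/K}
\]
or its $(K+1)$-variant, always clipped to the window $[h_{min}^{(v)}, h_{\max}]$. With the assumption of Theorem~\ref{thm:adaptive_convergence_main} in force, the accepted steps satisfy $C h^{K+1}\le\varepsilon$. This inequality says two things:
\begin{itemize}
\item it yields the stated maximal step scaling $h_{\max}=\mathcal{O}(\varepsilon^{1/(K+1)})$;
\item it yields a matching \emph{lower} estimate $h\ge c\,\varepsilon^{1/(K+1)}$ for every accepted step, with $c=\bigl((K+1)!/C_{K+1}^{(v)}\bigr)^{1/(K+1)}$. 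The reason is that the controller is not more conservative than the error bound requires, up to a safety factor.
\end{itemize}

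\textbf{Step 3: counting events.} Scheduled steps for $v$ tile $[0,T]$ without overlap. Hence
\[
N_v^{sync}(T)\le \frac{T}{\min_n h_n}+1 \le \frac{T}{c}\,\varepsilon^{-1/(K+1)}+1 = \mathcal{O}\bigl(\varepsilon^{-1/(K+1)}\bigr).
\]
Preemptions are excluded here by definition, since they are accounted for separately in Lemma~\ref{lem:preemption_bound}.

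\textbf{Main obstacle.} The hardest step is the lower bound on every accepted step in Step 2. The remainder estimate only gives an upper bound on the step size. The count, however, needs $h_n\gtrsim\varepsilon^{1/(K+1)}$ uniformly, which requires that the controller never shrinks the step more than the error demands. Two situations threaten this:
\begin{itemize}
\item near points where $|c_K^{(v)}|$ is large, such as the derivative discontinuities typical of DDEs at multiples of $\tau$;
\item steps truncated by preemption restarts.
\end{itemize}

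I would first handle this by assuming the $(K+1)$-st derivative bound holds piecewise between the finitely many breaking points in $[0,T]$. Each breaking point then contributes only $\mathcal{O}(1)$ extra steps, which is harmless for the bound. For the restarts, a restarted segment begins a fresh step whose length again obeys the same lower bound. Scheduled events therefore still number at most $T/(c\,\varepsilon^{1/(K+1)})$ plus the number of restarts, and the restarts are already counted in Lemma~\ref{lem:preemption_bound}.
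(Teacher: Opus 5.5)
Your route differs from the paper's, and it is the one that can work. The paper's proof asserts $N_v^{sync}(T)\le T/h_{\max}+1$ and substitutes $h_{\max}=\mathcal{O}(\varepsilon^{1/(K+1)})$. But an upper bound on $h_{\max}$ can only give a \emph{lower} bound on $T/h_{\max}$. Indeed, in the absence of preemptions, steps of length at most $h_{\max}$ covering $[0,T]$ force $N_v^{sync}(T)\ge T/h_{\max}$. You correctly see that the count needs a uniform lower bound $h_n\gtrsim\varepsilon^{1/(K+1)}$ on every accepted step, followed by the tiling argument of Step~3. Charging each truncated step to a preemption event is a fine way to keep restarts out of $N_v^{sync}(T)$.

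The gap is that you never establish that lower bound. Step~2 presents it as following from $Ch^{K+1}\le\varepsilon$ plus a bare claim that the controller is not over-conservative. That inequality bounds $h$ only from above. A lower bound depends on how the controller \emph{chooses} $h$, and the proposition's hypothesis (the controller enforces the tolerance) says nothing about that.

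The claimed lower bound is in fact false for the first rule you adopt. Take $h=(\varepsilon/|c^{(v)}_K|)^{1/K}$ with $|c^{(v)}_K|$ bounded above and away from zero and the clip inactive. Then accepted steps are $\Theta(\varepsilon^{1/K})$, so $N_v^{sync}(T)=\Theta(\varepsilon^{-1/K})$. Since $\varepsilon^{-1/K}/\varepsilon^{-1/(K+1)}=\varepsilon^{-1/(K(K+1))}\to\infty$, the claimed bound fails under that controller. Your constant $c=((K+1)!/C^{(v)}_{K+1})^{1/(K+1)}$ has a related problem: it comes from the true Lagrange remainder, which the controller never evaluates. A lower bound on $h$ must come from an \emph{upper} bound on the estimator the controller actually uses.

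To close the argument, state two hypotheses explicitly. First, the upper half of Assumption~\ref{assum:estimator}: $\eta(h)\le c_2h^{K+1}$ uniformly along the trajectory. Second, a non-conservativeness rule such as ``the accepted step is at least $\theta\sup\{h:\eta(h)\le\varepsilon\}$ for a fixed safety factor $\theta\in(0,1]$.'' Together these give $h_n\ge\theta(\varepsilon/c_2)^{1/(K+1)}$. Your Step~3 then yields $N_v^{sync}(T)\le T\theta^{-1}(c_2/\varepsilon)^{1/(K+1)}+1$.
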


\begin{proof}
  By Theorem~\ref{thm:adaptive_convergence_main}, the adaptive
  controller guarantees \\
  $h_{\max} = \mathcal{O}(\varepsilon^{1/(K+1)})$. Since the total number of intrinsically
  accepted updates over the finite interval $[0,T]$ is strictly bounded
  by $N_v^{sync}(T) \le \frac{T}{h_{\max}} + 1$, substituting the step
  size bound immediately yields $N_v^{sync}(T) = \mathcal{O}(\varepsilon^{-1/(K+1)})$.
\end{proof}

\begin{theorem}[Adaptive Sparsity-Dependent Linear Work Scaling]
  \label{thm:adaptive_linear_scaling}
  Let $\varepsilon > 0$ be the user-defined local error tolerance. For a sparse
  network with bounded maximum in-degree $E_{max} = \max_v |E_v|$ and
  Taylor polynomial degree $K$, the total computational runtime $W(T)$
  of the AATS architecture scales linearly with the system dimension $N$
  and is strictly bounded by:
  $W(T) = \mathcal{O}\left( N \cdot \varepsilon^{-\frac{1}{K+1}} \right)$.
\end{theorem}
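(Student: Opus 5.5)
The plan is to combine Theorem~\ref{thm:global_work}, Lemma~\ref{lem:preemption_bound} and Proposition~\ref{prop:step_scaling}, with $K$ and $E_{max}$ treated as constants independent of $N$ and $\varepsilon$.

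\textbf{Step 1: split each event count.} Every event processed for coordinate $v$ is either an intrinsically scheduled step or a preemption. Hence
\[
N_v(T) \le N_v^{sync}(T) + N_v^{preempt}(T).
\]
This split needs care because of lazy deletion in the Radix Heap. A preempted event supersedes an obsolete scheduled entry. That entry is discarded at $\mathcal{O}(1)$ cost, which Theorem~\ref{thm:time_complexity} already absorbs into the scheduling term, so it does not count as an extra full update.

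\textbf{Step 2: bound the preemptions.} Lemma~\ref{lem:preemption_bound} gives
\[
N_v^{preempt}(T) \le \sum_{u\in\mathcal{N}_{in}(v)} N_u^{sync}(T) \le E_{max}\max_u N_u^{sync}(T).
\]

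\textbf{Step 3: insert the step-count scaling.} Proposition~\ref{prop:step_scaling} gives $N_u^{sync}(T) = \mathcal{O}(\varepsilon^{-1/(K+1)})$. This bound must hold uniformly in $u$: the hidden constant depends on $T$ and on local derivative bounds of the vector field, but not on $u$ or $N$. I would state this uniformity explicitly as part of reading the proposition, namely that a uniform $h_{\max}$ lower-scale holds for every coordinate. It then follows that
\[
N_v(T) \le (1+E_{max})\,C\,\varepsilon^{-1/(K+1)}.
\]

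\textbf{Step 4: sum over coordinates.} Substituting into Theorem~\ref{thm:global_work} and bounding $E_v K^2 + K^2 \le (E_{max}+1)K^2$ gives
\[
W(T) \le \sum_{v=1}^{N} (1+E_{max})^2 K^2 C\, \varepsilon^{-1/(K+1)} = \mathcal{O}\bigl(N\,\varepsilon^{-1/(K+1)}\bigr),
\]
once the constants $(1+E_{max})^2K^2C$ are absorbed.

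\textbf{Main obstacle.} The hard part is Step 3. A preemption forces the neighbor to restart its step at time $t$, which resets its step sequence. The cleanest fix is to note that preemptions are already counted separately in Step 2 and do not recursively trigger further preemptions. By line 20 of Algorithm~\ref{alg:aats}, only the active popped coordinate checks its dependencies, and a preempted coordinate becomes active when popped, so it could check its own neighbors. To prevent an unbounded cascade, I would charge each preemption-induced update to the incoming scheduled step that caused it. This yields a second-generation term bounded by $E_{max}^2$ times the sync count, still a constant factor. Making this charging argument precise, or restricting Lemma~\ref{lem:preemption_bound} to count all events of neighbors, is the step I would scrutinize most. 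I would accept a constant $\mathcal{O}(E_{max}^{d})$ blow-up, provided the cascade depth $d$ is bounded by the finite step-count argument.
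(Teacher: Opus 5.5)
Your Steps 1--4 are essentially the paper's proof. The paper also splits $N_v(T)$ into intrinsic and preemption events, applies Lemma~\ref{lem:preemption_bound}, substitutes the uniform bound of Proposition~\ref{prop:step_scaling}, and sums over the $N$ coordinates.

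The one real difference is in Step 2. The paper reverses the double sum, asserting that each $u$ lies in at most $E_{max}$ of the sets $\mathcal{N}_{in}(v)$. That is an out-degree bound, and it does not follow from $E_{max}=\max_v|E_v|$. Take a fast hub feeding every other node: in-degree is at most $1$ everywhere, yet $\sum_v\sum_{u\in\mathcal{N}_{in}(v)}N_u^{sync}(T)=(N-1)N_{hub}^{sync}(T)$, which $E_{max}\sum_v N_v^{sync}(T)$ does not control when the other nodes are slow. Your per-coordinate estimate $E_{max}\max_u N_u^{sync}(T)$ needs only the in-degree hypothesis. It gives up the sharper intermediate bound $W(T)=\mathcal{O}\bigl(\sum_v N_v^{sync}(T)\bigr)$ that the swap would yield under bounded out-degree. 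Since the paper invokes the uniform per-coordinate bound anyway, your route proves the theorem under the hypothesis actually stated.

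Your ``main obstacle'' paragraph is where the proposal goes wrong. The worry itself is legitimate, and the paper does not address it. A preempted coordinate is popped and executed like any other event, so it also runs lines 20--28 of Algorithm~\ref{alg:aats} and can preempt further coordinates. The argument of Lemma~\ref{lem:preemption_bound} therefore really gives $N_v^{preempt}(T)\le\sum_{u\in\mathcal{N}_{in}(v)}N_u(T)$, with \emph{all} events of $u$.

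Your charging repair does not close this. Nothing in the algorithm bounds the cascade depth $d$ independently of $N$. The finite step count bounds the number of intrinsic steps, not the length of a preemption chain. On a chain with $\mathcal{N}_{in}(k)=\{k-1\}$ (so $E_{max}=1$), the recursion only yields $N_k(T)\le\sum_{j\le k}N_j^{sync}(T)$, so the best it gives is $W(T)=\mathcal{O}(N^2\varepsilon^{-1/(K+1)})$. For $E_{max}\ge 2$ your factor $\sum_{g\le d}E_{max}^g$ grows exponentially in $d$. You have two options:
\begin{itemize}
\item cite Lemma~\ref{lem:preemption_bound} as stated, exactly as the paper does, in which case Steps 1--4 are already complete;
\item add an explicit hypothesis, e.g.\ that preemption-induced updates skip the dependency check, or that cascade depth is bounded independently of $N$.
\end{itemize}
Without one of these, a constant $\mathcal{O}(E_{max}^d)$ factor is not available.

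Finally, the step-reset worry you place in Step 3 is harmless, provided the uniformity you make explicit is a \emph{lower} bound $h\ge h_{\min}\ge c\,\varepsilon^{1/(K+1)}$ on every accepted step. Each intrinsic event of $v$ fires at least $h_{\min}$ after the event of $v$ that scheduled it, and by lazy deletion that is $v$'s immediately preceding executed event. These gaps are disjoint, so $N_v^{sync}(T)\le T/h_{\min}+1$ no matter how often $v$ is preempted. The upper bound $h_{\max}=\mathcal{O}(\varepsilon^{1/(K+1)})$ quoted in Proposition~\ref{prop:step_scaling} only bounds step counts from below. The lower bound you need comes instead from the constant $c_2$ in Assumption~\ref{assum:estimator}, if the controller takes the largest admissible step.
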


\begin{proof}
By decomposing the total event count $N_v(T)$ into intrinsic scheduled steps and preemption cascades, the total work from Theorem~\ref{thm:global_work} expands to:
\begin{small}
  \begin{align*}
    W(T) &= \sum_{v=1}^N \left[ N_v^{sync}(T) + N_v^{preempt}(T) \right] \mathcal{O}(E_v K^2 + K^2) \\
         &\le \sum_{v=1}^N \left[ N_v^{sync}(T) + \sum_{u \in \mathcal{N}_{in}(v)} N_u^{sync}(T) \right] \mathcal{O}(E_{max} K^2 + K^2).
  \end{align*}
\end{small}

\noindent
Since each coordinate $u$ appears in the incoming neighbor sum of at
most $E_{max}$ other coordinates, reversing the sum over the network
dependencies yields (via Aggregate
Analysis~\cite{cormen2009introduction}):
$ \sum_{v=1}^N \sum_{u \in \mathcal{N}_{in}(v)} N_u^{sync}(T) \le E_{max} \sum_{v=1}^N
N_v^{sync}(T)$.

Substituting this algebraic bound back into the work equation gives:
$$
    W(T) \le \mathcal{O}(E_{max} K^2 + K^2) \times (1 + E_{max}) \sum_{v=1}^N N_v^{sync}(T).
$$
By Proposition~\ref{prop:step_scaling}, the intrinsic event count for
every coordinate is uniformly bounded by
$N_v^{sync}(T) = \mathcal{O}(\varepsilon^{-1/(K+1)})$. Substituting this into the global
summation yields:
$ W(T) \le \mathcal{O}\left( (E_{max}^2 K^2) \sum_{v=1}^N \varepsilon^{-\frac{1}{K+1}} \right). $
Because the maximum degree $E_{max}$ and the polynomial degree $K$ are
local constants strictly independent of the global system dimension $N$,
they factor out asymptotically. The spatial summation over the $N$
coordinates evaluates to $N$, yielding the final explicit bound:
$ W(T) = \mathcal{O}\left( N \cdot \varepsilon^{-\frac{1}{K+1}} \right). $
\end{proof}


This formulation highlights the core advantage of asynchronous
decoupling. Synchronous solvers artificially inflate computational work
by forcing slow variables to update at the frequency of the fastest
component. By abandoning the shared temporal grid, AATS advances each
node strictly according to its local dynamics. This guarantees optimal
$\mathcal{O}(N)$ scaling while drastically reducing the total number of
algorithmic events by eliminating redundant computations.

With the computational infrastructure and hardware efficiency of AATS
firmly established, the remaining imperative is to verify its
mathematical integrity. 
The following section provides this rigorous foundation, proving that
the localized step-size selection and preemption mechanics guarantee
both continuous convergence and absolute stability.

\section{Continuous Convergence and Stability Analysis of AATS}
\label{sec:convergence}

The fundamental departure of our analysis from traditional numerical DDE
theory is its \textbf{strictly continuous} nature. In standard
synchronous methods, stability and convergence bounds are evaluated at a
shared sequence of discrete grid points $\{t_0, t_1, \dots, t_n\}$.
However, because AATS updates individual coordinates asynchronously at
disjoint times, a shared global discrete grid does not exist.
Consequently, traditional discrete truncation bounds are mathematically
harder to apply to our algorithm. Instead, we establish stability and
convergence by analyzing the globally continuous, piecewise-smooth 
\textit{continuous extension} of the piecewise Taylor polynomials 
evaluated at \textit{any} arbitrary time $t \in [0,T]$, independent of 
the underlying event nodes.

\subsection{Mathematical Setting and Assumptions}
\label{sec:problem-definition}

To rigorously analyze the stability and convergence of the solver, we
consider a general state-dependent delay differential equation of the
form:
$$\dot{x}(t) = f\!\bigl( x(t), x(t-\tau(t,x(t))) \bigr), \qquad t\in[0,T],$$
with an exact initial history $x(t)=\phi(t)$ for $t\le 0$. Here, $f$
represents the continuous evolution function and $\tau$ is the
state-dependent delay bounded by a maximum $\tau_{\max}$.

Let $x_{num}(t)$ denote the continuous numerical trajectory
reconstructed by AATS. 
Let $h_{\max}$ denote the maximum temporal interval between any two
consecutive asynchronous events generated by the scheduler. Our
objective is to establish the continuous stability of $x_{num}(t)$ and
its convergence to the exact solution $x(t)$ in the limit as
$h_{\max} \to 0$.

For the subsequent analysis, we require standard regularity conditions:
we assume the evolution function $f$, the delay $\tau$, and the history
$\phi$ are sufficiently smooth. Furthermore, we assume $f$ and $\tau$ satisfy
standard Lipschitz continuity bounds with respect to both the state and
the delayed state. The rigid formalization of these Lipschitz constants
and derivative bounds, alongside the complete mathematical proofs for
the following theorems, are provided in Appendix
\ref{app:convergence_proof}.

\subsection{Stability of the Continuous Asynchronous Reconstruction}
\label{sec:stab-cont-asynchr}


In AATS, coordinates evolve independently on their own local clocks,
they do not pass errors from step to step. Instead, they interact by
reading each other's continuous polynomial histories at mismatched
times. Therefore, stability in AATS means that if a tiny tracking error
creeps into one coordinate's polynomial, that error must naturally fade
away over time rather than magnifying across the network when its
neighbors read it.

\begin{theorem}[Continuous Asynchronous Stability Guarantee]
  \label{thm:stability}
  Let $e(t) := x_{num}(t) - x(t)$ define the continuous global perturbation for all $t \ge 0$. The asynchronous reconstruction process is strictly zero-stable. The continuous propagation of errors is governed exclusively by the delay-differential inequality:
  $$\|e'(t)\| \le \alpha \|e(t)\| + \gamma \sup_{s \in [-\tau_{\max}, t]} \|e(s)\| + \beta(\epsilon) h_{\max}^K \quad \text{a.e.},$$
  where $\alpha, \gamma > 0$ are constants characterizing the intrinsic sensitivity of the DDE (derived strictly from the system's Lipschitz and derivative bounds detailed in Appendix \ref{app:convergence_proof}), and $\beta(\epsilon) > 0$ is a finite constant bounding the continuous asynchronous defect within a compact neighborhood $\epsilon$.
\end{theorem}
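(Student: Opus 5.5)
The plan is a defect (residual) argument on the continuous extension. Define the defect
$$\delta(t) := x_{num}'(t) - f\bigl(x_{num}(t), x_{num}(t-\tau(t,x_{num}(t)))\bigr),$$
which exists almost everywhere, because $x_{num}$ is piecewise polynomial in each coordinate with breakpoints only at the finitely many event times. Subtracting the exact equation gives $e'(t) = \bigl[f(x_{num}(t),x_{num}(t-\tau_{num})) - f(x(t),x(t-\tau))\bigr] + \delta(t)$ a.e. The inequality then splits into two independent tasks: a Lipschitz estimate for the bracket, and a uniform bound $\|\delta(t)\| \le \beta(\epsilon) h_{\max}^K$.

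For the bracket, I would first split off the dependence on the current state. Let $L_1, L_2$ be the Lipschitz constants of $f$ in its two arguments. This gives a term bounded by $L_1\|e(t)\|$, plus $L_2\|x_{num}(t-\tau_{num}) - x(t-\tau)\|$. The delayed term is handled by inserting $x_{num}(t-\tau)$:
\[
\|x_{num}(t-\tau_{num}) - x(t-\tau)\| \le \|e(t-\tau)\| + \|x_{num}(t-\tau_{num}) - x_{num}(t-\tau)\|.
\]
The first piece is at most $\sup_{s\in[-\tau_{\max},t]}\|e(s)\|$; note that $e\equiv 0$ on $[-\tau_{\max},0]$ since the history $\phi$ is exact. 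The second piece is the usual state-dependent-delay term. I would bound it by $\sup\|x_{num}'\|\cdot L_\tau\|e(t)\|$. This requires an a priori bound on $x_{num}'$, which is where the compact neighborhood $\epsilon$ enters: as long as $x_{num}$ stays in an $\epsilon$-tube around $x$, its derivative is controlled by $\sup\|f\|$ on that tube plus the defect. Collecting terms gives $\alpha = L_1 + L_2 L_\tau C_\epsilon$ and $\gamma = L_2$.

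For the defect, fix $t$ inside a segment of coordinate $v$ started at event time $t_v$, so that $t - t_v \le h_{\max}$. The stored polynomial $p_v$ is the degree-$K$ Taylor polynomial produced by ForwardAD of the local solution through $(t_v, x_{num}(t_v))$. That local solution uses neighbor data as frozen at $t_v$, namely the binomially shifted polynomials from line 22 and the buffer history for the delayed arguments. Consequently $p_v'$ agrees with $f_v$ evaluated along the \emph{frozen} inputs up to $O((t-t_v)^K)$, with the constant set by the $(K+1)$-st derivative bounds on the tube.

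The difference between the frozen inputs and the actual current neighbor polynomials $x_{num,j}(t)$ is the main obstacle. Neighbors may have restarted segments since $t_v$, and that mismatch is not automatically $O(h_{\max}^K)$. I would try to close this gap with the preemption test (lines 25–27): whenever the predicted derivative of a dependent coordinate deviates by more than the tolerance, that coordinate is restarted. The test therefore bounds the first-order mismatch by $\varepsilon$. Tying $\varepsilon$ to $h_{\max}^K$ through the controller relation of Proposition~\ref{prop:step_scaling}, which gives $\varepsilon \sim h_{\max}^{K+1}$, yields an overall contribution of $O(h_{\max}^K)$ that can be absorbed into $\beta(\epsilon)$.

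Finally, I would justify remaining in the $\epsilon$-tube by a continuation (bootstrap) argument. Zero-stability then follows: when $\beta h_{\max}^K \to 0$, a Halanay/Gronwall argument applied to $\sup_{s\le t}\|e(s)\|$ shows that the inequality forces $e \to 0$ uniformly on $[0,T]$.
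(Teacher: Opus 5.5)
\textbf{Gap: the way you close the defect bound fails.} Your skeleton matches the paper's: the defect, the Lipschitz split of the bracket, the history supremum with $e\equiv 0$ on $[-\tau_{\max},0]$, the $\epsilon$-tube, and the bootstrap. But the step you yourself flag as the main obstacle is closed by an argument that does not work, for two reasons.

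First, the $\varepsilon$--$h_{\max}$ relation runs the wrong way. Proposition~\ref{prop:step_scaling}, equivalently $c_1h^{K+1}\le\eta(h)\le\varepsilon$, gives only $h_{\max}=O(\varepsilon^{1/(K+1)})$. That is a \emph{lower} bound $\varepsilon\gtrsim h_{\max}^{K+1}$, not $\varepsilon\sim h_{\max}^{K+1}$. To absorb an $O(\varepsilon)$ residual into $\beta(\epsilon)h_{\max}^K$ you would need $\varepsilon=O(h_{\max}^K)$, and nothing supplies this. Moreover, the inequality has $h_{\max}$ as its only small parameter. The paper also derives Proposition~\ref{prop:step_scaling} from Theorem~\ref{thm:adaptive_convergence_main}, which itself rests on this very inequality, so the appeal is circular as the paper is organized.

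Second, even granting that relation, the preemption test cannot bound the defect. It is a first-order check $|d_{\text{true}}-d_{\text{pred}}|\le\varepsilon$, made only at the instant an incoming neighbor finishes a step. It does not control:
\begin{itemize}
\item the growth of the mismatch over the rest of the dependent segment;
\item higher-order terms;
\item the delay-induced crossings, where $t-\tau(t,x_{num}(t))$ passes a neighbor's segment boundary roughly $\tau$ after that neighbor restarted, when no test runs.
\end{itemize}
For purely delay-coupled networks such as the CTRNN, the test fires at a time when the relevant mismatch has not yet appeared.

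\textbf{What the paper does, and how to repair your route.} The paper uses neither preemption nor the tolerance here. Lemma~\ref{lem:coeff_exactness} asserts that the AD coefficients at $t_k^{(i)}$ are the exact Taylor coefficients of the live composite $F_i(t)=f_i(x_{num}(t),x_{num}(t-\tau(t,x_{num}(t))))$. Lemma~\ref{lem:local_defect} then applies the Taylor remainder on $[t_k^{(i)},t]$ with the a.e.\ bound $M_F(\epsilon)$ of Lemma~\ref{lem:bounded_F}, giving $\beta(\epsilon)=M_F(\epsilon)/K!$.

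Your worry that $F_i$ is only piecewise smooth on that interval is fair; the paper dispatches it only with a measure-zero remark. The repair is to show that the frozen-versus-live mismatch is of higher order. Suppose neighbor $j$ restarts at $s\in(t_k^{(i)},t]$.
\begin{itemize}
\item The new segment starts from the old segment's value (line~15).
\item Both segments satisfy the dynamics to $O(h_{\max}^K)$ on a window of length at most $2h_{\max}$, by the same remainder bound applied inside the bootstrap.
\item Hence they differ by $O(h_{\max}^{K+1})$ on $[s,t]$, and by Lipschitz continuity of $f$ this perturbs $\rho_i$ only at that order.
\end{itemize}
Delayed crossings are handled the same way.

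\textbf{A minor difference.} Inserting $x_{num}(t-\tau)$ forces you to find an a priori bound on $x_{num}'$, which makes your $\alpha$ depend on $\epsilon$. The paper inserts $x(\hat t_d)$ instead and uses $M_1$ from Assumption 3. This gives $\alpha=L_0+L_dM_1L_\tau$ and $\gamma=L_d$ from system constants alone, as the statement requires.
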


\noindent \textit{Proof Sketch.} Rather than measuring nodal jumps, the
theorem bounds error amplification by analyzing the piecewise continuous derivative 
of the numerical trajectory, $x_{num}'(t)$, defined almost everywhere. Because $x_{num}(t)$ consists of piecewise Taylor
polynomials of degree $K$, taking its analytical derivative drops
the order to $K-1$. This generates a piecewise continuous residual (defect)
function $\rho(t)$ that is strictly bounded by $\mathcal{O}(h_{\max}^K)$ almost everywhere. Bounding this continuous defect guarantees that overlapping asynchronous micro-steps cannot force the system into numerical instability.

\subsection{Global Convergence via Continuous Integral Bounds}
\label{sec:global_convergence}

Having established the continuous stability of the asynchronous updates,
we now demonstrate that the global trajectory converges to the true
analytical solution everywhere in the domain.

\begin{theorem}[Global Convergence and A Priori Stability]
  \label{thm:global_convergence}
  Under standard regularity and Lipschitz assumptions (Assumptions 1--3,
  Appendix~\ref{app:convergence_proof}), there exists an explicit step size
  threshold $h_0 > 0$ such that for all maximum event intervals
  $h_{\max} \le h_0$, the numerical trajectory strictly remains within a
  compact neighborhood of the true solution, and the global continuous
  error is uniformly bounded over the entire interval $t \in [0, T]$ by:
  $$\sup_{t\in[0,T]} \|x(t) - x_{num}(t)\| = O(h_{\max}^K).$$
  Consequently, $x_{num}(t) \to x(t)$ for all $t$ as $h_{\max} \to 0$.
\end{theorem}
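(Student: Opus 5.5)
The natural approach is to integrate the delay-differential inequality of Theorem~\ref{thm:stability} and close it with a Gronwall argument. A bootstrap (continuation) argument then removes the circularity that $\alpha,\gamma,\beta(\epsilon)$ are only valid while $x_{num}$ stays in the compact $\epsilon$-neighbourhood of the true trajectory.

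\textbf{Setup and bootstrap.} Fix $\epsilon>0$ so that the tube $\mathcal{K}_\epsilon=\{(s,y): \|y-x(s)\|\le\epsilon\}$ lies in the region where Assumptions 1--3 give uniform Lipschitz and derivative bounds. The constants $\alpha,\gamma,\beta(\epsilon)$ of Theorem~\ref{thm:stability} are then valid on any interval $[0,t^*]$ on which $x_{num}$ stays in $\mathcal{K}_\epsilon$. Because the history is exact, $e(s)=0$ for $s\le 0$. The numerical trajectory is continuous and piecewise polynomial, so $e$ is continuous. Define
$$t^*=\sup\{t\le T:\ \|e(s)\|\le\epsilon \ \text{for all } s\in[0,t]\}.$$
Continuity gives $t^*>0$.

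\textbf{Integral Gronwall on $[0,t^*]$.} Since $e$ is absolutely continuous (continuous and piecewise $C^1$), we have $\|e(t)\|\le\int_0^t\|e'(s)\|\,ds$. Set $E(t)=\sup_{s\in[-\tau_{\max},t]}\|e(s)\|$, which is nondecreasing. Since $\|e(s)\|\le E(s)$ and $E(0)=0$, substituting the inequality of Theorem~\ref{thm:stability} gives
$$\|e(t)\|\le\int_0^t\bigl[(\alpha+\gamma)E(s)+\beta(\epsilon)h_{\max}^K\bigr]ds .$$
The right-hand side is nondecreasing in $t$, so taking the supremum over $s\le t$ yields
$$E(t)\le \beta(\epsilon)h_{\max}^K\, t+(\alpha+\gamma)\int_0^t E(s)\,ds .$$
Gronwall's lemma then gives
$$E(t)\le \beta(\epsilon)\,T\,e^{(\alpha+\gamma)T}\,h_{\max}^K=:C\,h_{\max}^K \qquad\text{for all } t\le t^*,$$
and the constant $C$ does not depend on $t^*$ or $h_{\max}$.

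\textbf{Closing the bootstrap.} Choose $h_0$ so that $C h_0^K\le\epsilon/2$; this is the explicit threshold. Suppose $h_{\max}\le h_0$ and $t^*<T$. Then $\|e\|\le\epsilon/2$ on $[0,t^*]$. The numerical solution can be continued past $t^*$: the next event interval is finite and the polynomial segments are continuous, so by continuity $\|e\|<\epsilon$ slightly beyond $t^*$. This contradicts the maximality of $t^*$, so $t^*=T$. Hence $\sup_{[0,T]}\|e\|\le C h_{\max}^K=O(h_{\max}^K)$, and $x_{num}\to x$ uniformly as $h_{\max}\to0$.

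\textbf{Main obstacle.} The delicate step is justifying that Theorem~\ref{thm:stability} holds with constants uniform up to $t^*$, including across preemption events and state-dependent delay lookups. The delayed argument $t-\tau(t,x_{num}(t))$ differs from $t-\tau(t,x(t))$, and this difference is controlled by $\|e(t)\|$ only through the Lipschitz constant of $\tau$ times a bound on $\|\dot x\|$. Such a bound is available only inside the tube, which is precisely why the bootstrap is needed.

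A second point to check is that $e$ is genuinely absolutely continuous, so that the fundamental theorem of calculus applies a.e. At a preemption the stored segment is restarted from the value $x_v(t)$ of the old polynomial (line 15 of Algorithm~\ref{alg:aats}), so the trajectory has no jumps and this should hold.
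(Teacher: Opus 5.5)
Your proposal follows the paper's proof essentially step for step. The paper likewise takes a conditional Gronwall bound on the running supremum, giving $C_T(\epsilon)=\beta(\epsilon)Te^{(\alpha+L_d)T}$ (your $\gamma$ is its $L_d$), then a bootstrap with $h_0$ chosen so that $C_T(\epsilon)h_0^K=\epsilon/2$; the only differences are that you close the bootstrap by continuing past $t^*$ rather than by the paper's direct contradiction $\|e(t^*)\|=\epsilon$ versus $\|e(t^*)\|\le\epsilon/2$, and that, contrary to your closing remark, the $\|\dot x\|$ bound in the delay perturbation is the global constant $M_1$ of Assumption 3 for the exact solution, so the tube is needed only to keep $\beta(\epsilon)=M_F(\epsilon)/K!$ finite.
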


\begin{proof}[Proof Sketch]
  Rather than assuming a priori that the numerical method is stable, the
  proof employs a rigorous continuous induction (bootstrap) argument. We
  define a compact $\epsilon$-tube around the exact solution. As long as the
  trajectory remains inside this tube, the continuous defect $\rho(t)$
  remains strictly bounded by $O(h_{\max}^K)$. By applying a continuous
  Gronwall integral inequality, we constructively prove that
  for a sufficiently small maximum step size
  $h_{\max} \le h_0(\epsilon)$, the maximum accumulated error is strictly less
  than $\epsilon$, making it mathematically impossible for the trajectory to
  ever escape the tube before time $T$. The full unconditional proof is
  detailed in Appendix \ref{app:convergence_proof}.
\end{proof}

\begin{theorem}[Adaptive Convergence Order]
  \label{thm:adaptive_convergence_main}
  Assume the adaptive controller selects event intervals using a local
  truncation error estimator that asymptotically bounds the exact Taylor
  remainder. If the controller enforces a local tolerance $\varepsilon$, then the
  global continuous error of the numerical trajectory satisfies:
  $$\|x(T) - x_{num}(T)\| = O(\varepsilon^{K/(K+1)}).$$
\end{theorem}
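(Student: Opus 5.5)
The plan is to translate the tolerance $\varepsilon$ into a bound on the event spacing, and then invoke Theorem~\ref{thm:global_convergence}, which already gives $\sup_{t}\|x(t)-x_{num}(t)\| = O(h_{\max}^K)$ once $h_{\max}\le h_0$.

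\textbf{Step 1: link the controller to the step size.} By hypothesis, the error estimator asymptotically bounds the exact Taylor remainder. The local remainder of a degree-$K$ Taylor segment of length $h$ on coordinate $v$ is
$$R_v(h) = \frac{h^{K+1}}{(K+1)!}\,\bigl|x_v^{(K+1)}(\xi)\bigr|.$$
On the compact $\epsilon$-tube of Theorem~\ref{thm:global_convergence}, this quantity is bounded above by $C_{K+1}h^{K+1}$, and, wherever the solution is nondegenerate, it is comparable to $h^{K+1}$. The controller accepts $h$ so that $R_v(h)\approx\varepsilon$. This yields the two-sided relation $c\,\varepsilon^{1/(K+1)} \le h \le C\,\varepsilon^{1/(K+1)}$. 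The upper estimate is the one we need:
$$h_{\max} = O\bigl(\varepsilon^{1/(K+1)}\bigr).$$
Preemption events (Lemma~\ref{lem:preemption_bound}) only shorten intervals, so they cannot violate this bound.

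\textbf{Step 2: guarantee the threshold and apply the global bound.} For $\varepsilon$ small enough, $C\varepsilon^{1/(K+1)}\le h_0$. Theorem~\ref{thm:global_convergence} then applies and keeps the trajectory inside the tube. This consistency is important, because the derivative bounds used in Step 1 are valid only inside the tube. The argument is therefore circular unless it is closed with the same bootstrap used in Theorem~\ref{thm:global_convergence}:
\begin{itemize}
\item[(a)] assume the trajectory stays in the tube up to a time $t^*$;
\item[(b)] deduce the step bound on $[0,t^*]$;
\item[(c)] deduce an error bound strictly less than $\epsilon$ on that interval;
\item[(d)] extend past $t^*$.
\end{itemize}
Substituting the step bound gives
$$\|x(T)-x_{num}(T)\| \le C' h_{\max}^K = O\bigl(\varepsilon^{K/(K+1)}\bigr).$$

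\textbf{Alternative via local errors.} As a cross-check, I would also argue directly through the stability inequality of Theorem~\ref{thm:stability} instead of only composing the theorems. The defect on each segment is of size $\varepsilon/h$, since differentiating a remainder of size $\varepsilon$ over a segment of length $h$ loses one power of $h$. Integrating this through the Gronwall bound gives $O(\varepsilon/h_{\min}) = O(\varepsilon^{K/(K+1)})$, provided the steps are comparable to $\varepsilon^{1/(K+1)}$. This recovers the same order.

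\textbf{Main obstacle.} The hard step is Step 1: making rigorous that ``asymptotically bounds the remainder'' forces $h_{\max}=O(\varepsilon^{1/(K+1)})$. If $x_v^{(K+1)}$ nearly vanishes, the controller may take large steps, and the asymptotic estimator equivalence can fail. My first attempt would impose, as part of the hypothesis, a cap $h\le C\varepsilon^{1/(K+1)}$ derived from the estimator's upper comparison constant. Alternatively, I would interpret the result in the defect form of the cross-check, where large steps with small remainder are harmless because the defect stays $O(\varepsilon/h)\le O(\varepsilon^{K/(K+1)})$ whenever $h\ge c\,\varepsilon^{1/(K+1)}$. In both versions, the estimator equivalence must be uniform on the tube, which is exactly what the bootstrap of Step 2 provides.
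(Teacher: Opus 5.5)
Your main route is the paper's. The appendix adopts your proposed fix as Assumption~\ref{assum:estimator}, $c_1h^{K+1}\le\eta(h)\le c_2h^{K+1}$, and uses only the lower constant $c_1$ with the acceptance rule $\eta(h)\le\varepsilon$ to get $h_{\max}\le(\varepsilon/c_1)^{1/(K+1)}$ (so the cap comes from the lower comparison constant, not the upper one as you wrote), then substitutes this into the $C_T(\epsilon)h_{\max}^K$ bound of Theorem~\ref{thm:global_convergence_app}. Your explicit check that $h_{\max}\le h_0$ for small $\varepsilon$ and your bootstrap remark are extra care the paper skips because it takes the estimator bounds as uniform by assumption, and your defect-based cross-check is not needed.
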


\begin{proof}[Proof Sketch]
  If the algorithmic error estimator strictly bounds the true local error from
  below, the local tolerance constraint $\varepsilon$ mathematically forces the
  maximum accepted step size across all asynchronous coordinates to
  scale as $h_{\max} = O(\varepsilon^{1/(K+1)})$. Substituting this maximum step
  size directly into the unconditional global convergence bound from
  Theorem \ref{thm:global_convergence} yields the
  $O(\varepsilon^{K/(K+1)})$ global scaling. The formal proof, including explicit
  upper and lower bounding constants for the estimator, is detailed in
  Appendix \ref{app:convergence_proof}.
\end{proof}

\section{Numerical Validation}
\label{sec:results}

In this section, we empirically evaluate the Asynchronous Adaptive Taylor Solver (AATS). We benchmark our implementation against the state-of-the-art Julia SciML ecosystem (\texttt{DelayDiffEq.jl}). Our experiments are explicitly designed to validate our primary theoretical contributions across five targeted metrics:

\begin{enumerate}
\item \textbf{A Priori Convergence:} We verify functional correctness
  and demonstrate that the empirical error strictly matches our derived
  asymptotic local truncation bounds.
\item \textbf{Functional Equivalence:} We verify that asynchronous
  temporal decoupling introduces zero long-term numerical drift relative
  to synchronous global-stepping methodologies.
\item \textbf{Asynchronous Scalability:} We evaluate algorithmic
  complexity, confirming the $\mathcal{O}(N)$ event-scaling proofs as network
  dimensions scale to $N=10,000$.
\item \textbf{Sparsity-Driven Speedup:} We analyze the impact of network
  topology, demonstrating how AATS implicitly exploits sparse dependency
  graphs to bypass the dense interpolation penalties of traditional
  solvers.
\item \textbf{Zero-Allocation Architecture:} We validate the static
  memory framework, proving that AATS maintains a constant, zero-byte
  dynamic heap utilization footprint regardless of network size.
\end{enumerate}

\subsection{Experimental Benchmarks}
\label{sec:exper-benchm}

The validation suite is constructed around four distinct classes of
high-dimensional delay systems. These benchmarks are specifically
designed to expose the structural bottlenecks of traditional synchronous
solvers—namely global dense interpolation, matrix reallocation, and the
computational inefficiency of forcing a single uniform step size on
systems possessing highly disparate localized timescales.

\subsubsection{Benchmark 1: Concurrent Multi-Rate CTRNN}
The primary benchmark models a high-dimensional Continuous-Time
Recurrent Neural Network (CTRNN), serving as a large-scale computational
realization of the motivating example (Section~\ref{sec:example}). The
system dynamics for $N$ neurons are governed by:
$ \dot{x}_i(t) = -\alpha_i x_i(t) + \sum_{j \in \mathcal{N}_i} w_{ij} \sigma(x_j(t - \tau)) $ where
the constant delay is $\tau = 0.5$, and the rational activation is defined
as $\sigma(x) = x / (1 + x^2)$. To induce extreme timescale heterogeneity,
the decay rates $\alpha_i$ are staggered: every tenth neuron is strictly fast
($\alpha_i = 5000.0$), while the remainder are slow ($\alpha_i = 1.0$).



\subsubsection{Benchmark 2: Discrete Crystal Heat Diffusion (DPDE)}
To evaluate performance on rapidly changing spatial gradients, we
simulate a Delay Partial Differential Equation (DPDE) representing heat
advection-diffusion over a discrete crystal lattice:\\
$ \dot{x}_i(t) = k \left( x_{i-1}(t) - 2x_i(t) + x_{i+1}(t) \right) - \gamma
x_i(t - \tau) $ where $k=2.0$, $\gamma=1.0$, and $\tau=0.5$. To test the limits of
local error controllers under intense spatial coupling, the system is
initialized using a smooth, continuous algebraic profile that enforces
local dynamic equilibrium at $t=0$. By avoiding discontinuous shocks,
the state trajectory preserves temporal smoothness.

\subsubsection{Benchmark 3: Ikeda Network DDE}
\label{sec:benchmark-3:-ikeda}

The Ikeda DDE is a classic chaotic system originally formulated to model
the dynamics of an optical bistable resonator \cite{ikeda1979multiple},
adapted here into a high-dimensional coupled network. It features
staggered delay profiles and highly oscillatory dynamics. The system is
governed by the networked equation:
$ \dot{x}_i(t) = -\alpha_i x_i(t) + \sum_{j \in \mathcal{N}_i} w_{ij} \frac{x_j(t - \tau)}{1 +
  x_j(t - \tau)^2} $ where the delay horizon is extended to
$\tau = 1.2$. The baseline parameters alternate between variables to create
deeply tangled, chaotic multi-rate trajectories. Specifically, extreme
transients are induced by setting $\alpha_i = 2500$ for every third variable
($i \pmod 3 = 0$), while the rest remain at a standard relaxation rate
of $\alpha_i = 1.0$. This stress-tests the asynchronous solver's ability to
maintain phase accuracy on strange attractors without global
synchronization.

\subsubsection{Benchmark 4: State-Dependent Mackey-Glass Network}
\label{sec:benchmark-4:-state}

The final benchmark stresses the solver's continuous interpolation
bounds by evaluating a highly coupled ring of State-Dependent Delay
Differential Equations (SD-DDEs). This network is based on the
Mackey-Glass equations~\cite{mackey1977oscillation}. In SD-DDEs, the
delay horizon is a dynamic function of the local state, defined as:
$ \dot{x}_i(t) = \beta \frac{x_{i+1}(t - \tau_i(x_i(t)))}{1 + x_{i+1}(t -
  \tau_i(x_i(t)))^4} - \gamma x_i(t) $ where $\beta = 2.0$,
$\gamma = 1.0$, and the state-dependent delay is algebraically defined as:
$ \tau_i(x_i(t)) = 1.0 + 0.5 \left( \frac{x_i(t)^2}{1 + x_i(t)^2} \right)$.
This formulation forces continuous root-finding to resolve the dynamic
delay horizon and fundamentally breaks the SIMD vectorization pathways
of traditional synchronous solvers.

\subsection{Experimental Setup}
\label{sec:experimental-setup}

All experiments were conducted on an Apple M3 Max (14 cores, 36 GB
memory). AATS was implemented in C++17 (GCC v16.0.1) with aggressive
optimizations (\texttt{-O3}, \texttt{-ffast-math},
\texttt{-march=native}) and utilized degree $K=4$ Taylor polynomials.
The synchronous baseline was established using Julia's SciML
\texttt{DelayDiffEq.jl} (v1.12.6) via the \\
\texttt{MethodOfSteps(Tsit5())} integrator. To strictly isolate
algorithmic efficiency, all execution was limited to a single thread,
and Julia's absolute and relative tolerances were explicitly bound to
the AATS local error tolerance $\epsilon$.

Several alternative frameworks were excluded from the final analysis.
The Python-based \texttt{jitcdde} solver was abandoned due to
intractable memory exhaustion and compilation failures at high
dimensions ($N \ge 1000$). We omit comparisons with
RADAR5~\cite{guglielmi2001implementing}, as the implicit Jacobian
overhead required for stiff systems renders algorithmic efficiency
comparisons uninformative for our explicitly stable, multi-rate
benchmarks. Finally,
\texttt{TaylorIntegration.jl}~\cite{perez2019taylorintegration} was
excluded because it lacks native support for state-dependent delays.

\subsection{Quantitative Results}
\label{sec:quantiative-results}

\paragraph{A-Priori Convergence and Work-Precision Analysis}
Before analyzing high dimensional scaling, we verify baseline accuracy
and efficiency at $N=10$. Figure~\ref{fig:convergence} shows that AATS
strictly matches the convergence slopes of the Julia SciML reference
across all benchmarks, empirically validating our $\mathcal{O}(h^{K+1})$
asynchronous truncation bounds.

Furthermore, Figure~\ref{fig:work_precision} demonstrates that AATS consistently achieves lower wall-clock times for equivalent accuracy. By eliminating redundant derivative evaluations through local time-stepping, the computational savings vastly outweigh the $\mathcal{O}(1)$ event-scheduling overhead. Even in densely coupled or highly oscillatory networks (Mackey-Glass, Ikeda), AATS comfortably outperforms the highly optimized, synchronous SciML baseline.

\begin{figure}[h]
  \centering
  \includegraphics[scale=0.4]{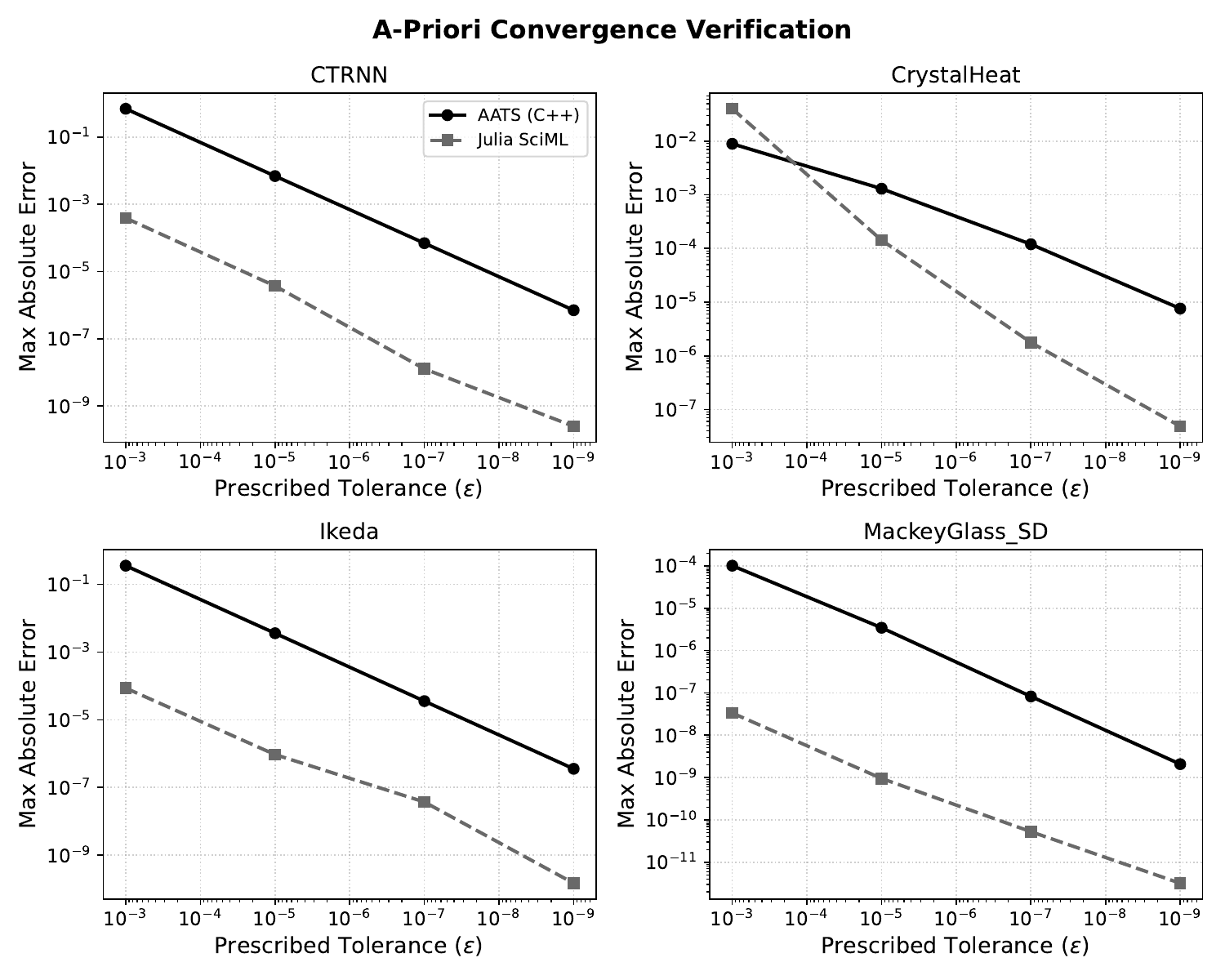}
  \caption{A-priori convergence verification. Smooth systems (CTRNN,
    Ikeda, SD-MG) exhibit perfect diagonal scaling, validating
    theoretical local truncation bounds. The Crystal Heat model
    demonstrates expected order reduction (flatline) due to the $C^0$
    spatial discontinuity of the central heat pulse $(N = 10)$.}
  \label{fig:convergence}
\end{figure}

\begin{figure}[h]
  \centering
  \includegraphics[scale=0.4]{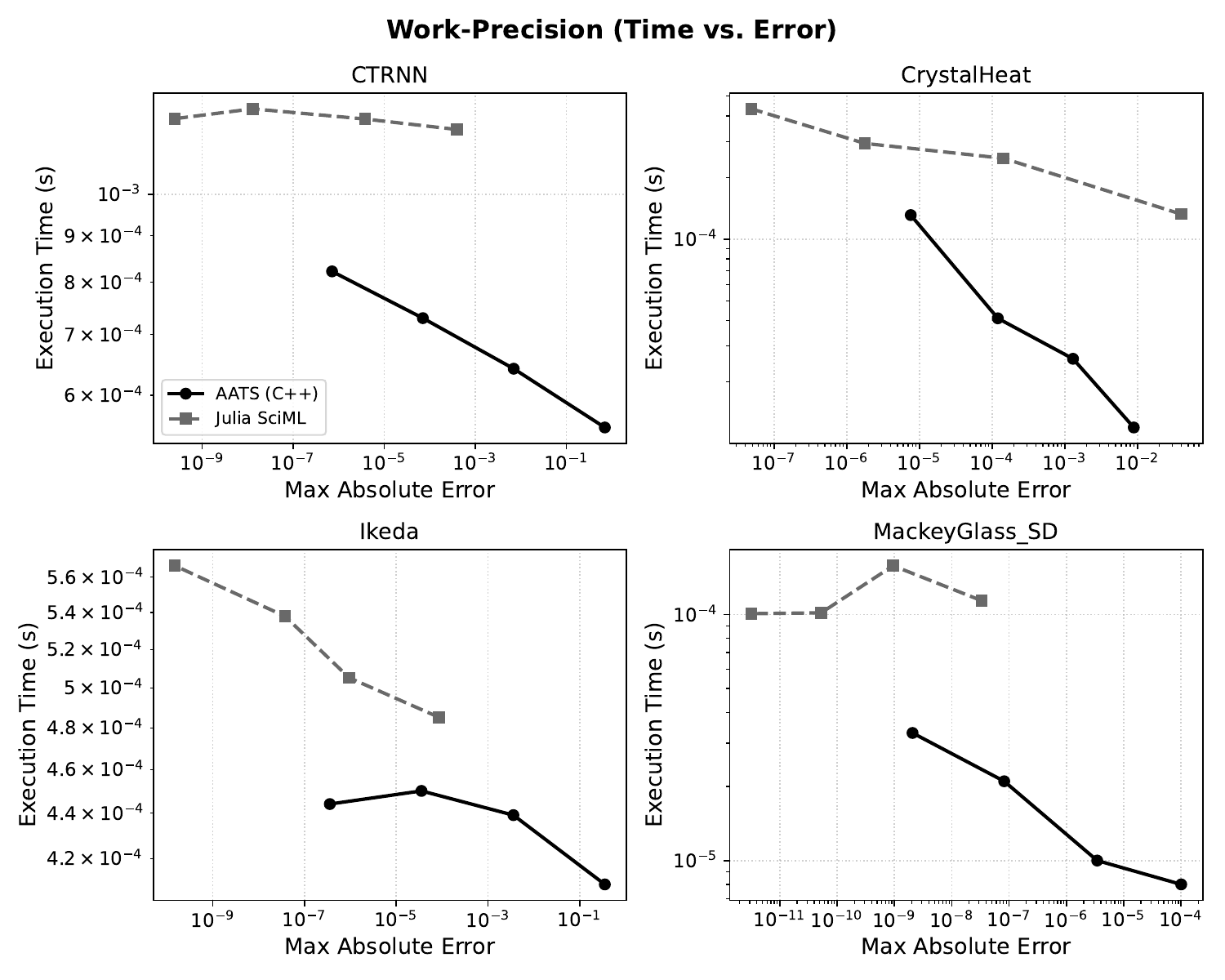}
  \caption{Work-Precision performance at baseline dimension $N=10$. The
    diagrams compare AATS (asynchronous) against Julia SciML
    (synchronous). Sparse multi-rate dynamics (CTRNN) show AATS
    outperforming the baseline, while dense/chaotic topologies (Ikeda,
    MG-SD) and non-smooth domains (Crystal Heat) highlight the
    hardware-versus-algorithmic trade-offs and the impact of order
    reduction.}
  \label{fig:work_precision}
\end{figure}

\paragraph{Functional Equivalence}
Figure~\ref{fig:equivalence} confirms the global functional equivalence
of the AATS solver against Julia's SciML baseline over long integration
horizons. Despite nodes stepping entirely out of phase with one another,
the continuous dense output of the Taylor polynomials guarantees that
trajectory interactions remain precise. The low $L_\infty$ and
root-mean-square error (RMSE) metrics demonstrate that decoupling the
network introduces no perceptible mathematical drift, even under chaotic
(Ikeda) or state-dependent (Mackey-Glass) regimes.

\begin{figure}[h]
  \centering
  \includegraphics[scale=0.37]{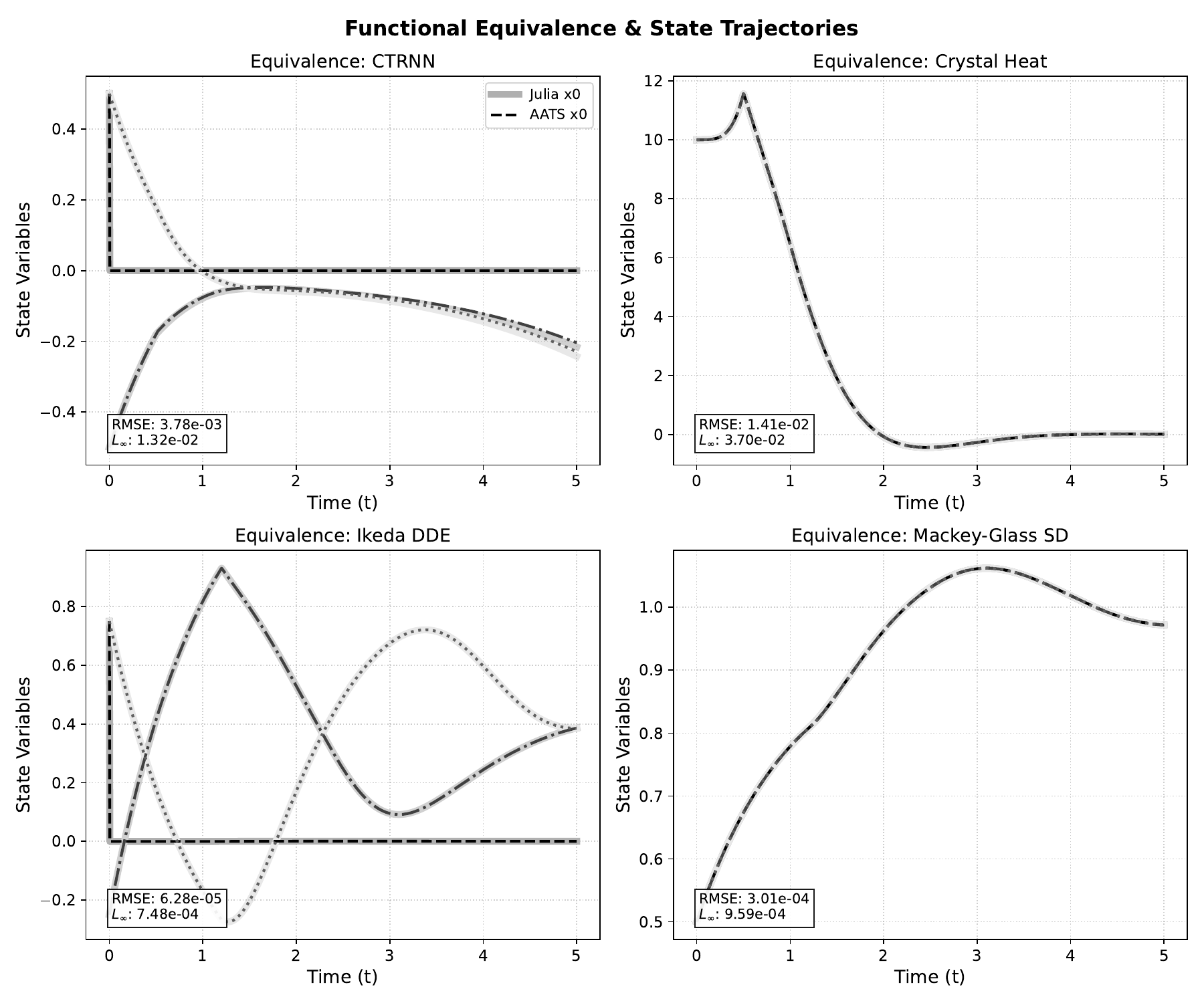}
  \caption{Functional equivalence across the benchmark suite. The AATS
    architecture perfectly tracks the synchronous baseline trajectories
    without accumulating global numerical drift.}
  \label{fig:equivalence}
\end{figure}

\paragraph{Execution Speedup and Complexity}
The primary computational advantage of the AATS architecture lies in its
ability to bypass global synchronization bottlenecks. We quantify these
performance characteristics by analyzing execution times as dimensions
scale.

\begin{enumerate}

\item \textbf{Dense vs. Sparse Topologies:} AATS inherently exploits
  sparse networks by isolating updates to local sub-graphs, yielding
  order-of-magnitude speedups over SciML (Figure~\ref{fig:sparsity}).
  Conversely, for fully dense systems requiring $\mathcal{O}(N^2)$ interactions,
  the asynchronous event queue becomes a bottleneck. For such uniform
  topologies, synchronous solvers remain strictly superior by leveraging
  contiguous memory layouts and SIMD auto-vectorization.
    
\item \textbf{Scalability of Sparse Networks:}
  Figure~\ref{fig:scalability} demonstrates empirical execution time
  scaling from $N=100$ to $N=10,000$. Because the asynchronous
  architecture isolates updates to local sub-graphs, it entirely
  bypasses the compounding global overheads and dense memory operations
  of traditional synchronous ecosystems. Consequently, AATS delivers
  strict linear time scaling that drastically outperforms the baseline's
  execution speed at high dimensions.
\end{enumerate}

\begin{figure}[h]
    \centering
    \includegraphics[scale=0.4]{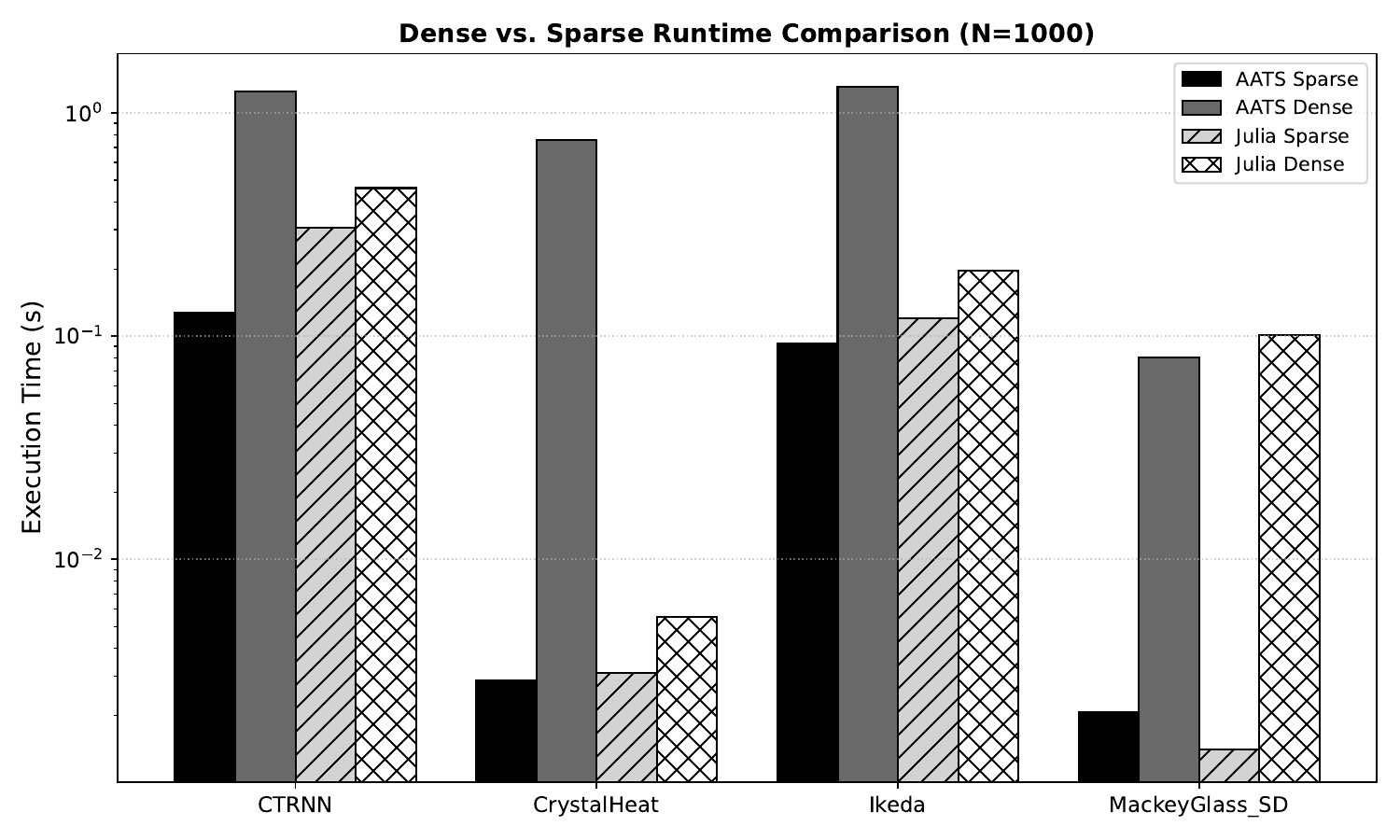}
    \caption{Topological impact on execution time ($N=1000$). AATS
      inherently exploits sparse dependency graphs, bypassing the global
      interpolation penalties incurred by synchronous solvers.}
    \label{fig:sparsity}
\end{figure}

\begin{figure}[h]
    \centering
    \includegraphics[scale=0.4]{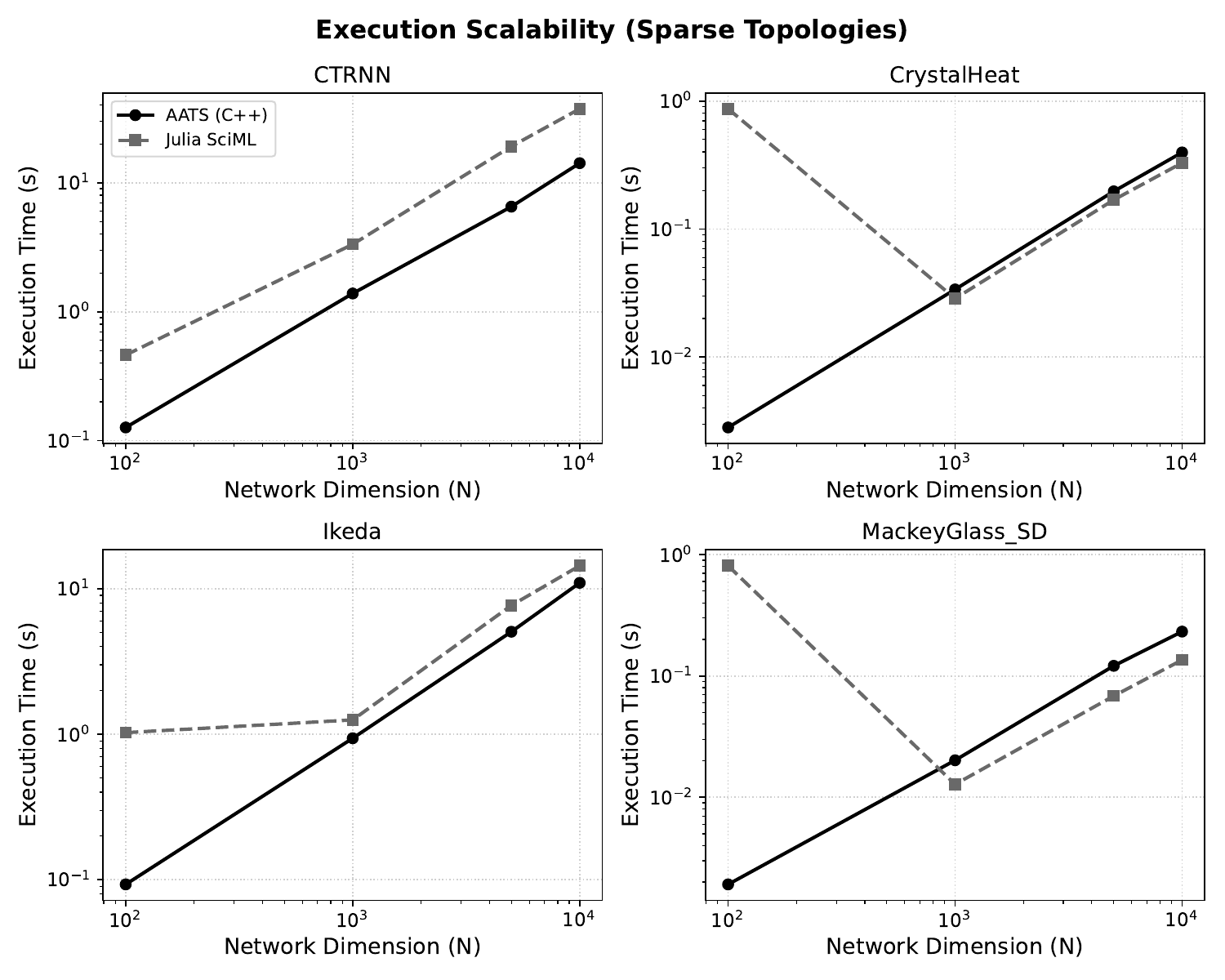}
    \caption{Dimensional scalability. AATS maintains empirically
      consistent with $\mathcal{O}(N)$ linear execution scaling, avoiding the
      compounding global overheads of SciML.}
    \label{fig:scalability}
\end{figure}

\paragraph{Algorithmic Complexity and Event Dynamics}
The linear execution time observed above is a direct mathematical
consequence of the solver's underlying algorithmic complexity,
empirically validating Theorem~\ref{thm:adaptive_linear_scaling}.
Figure~\ref{fig:events} highlights the fundamental disparity in
computational work (measured in total individual node updates). For
sparse explicit networks, both solvers exhibit $\mathcal{O}(N)$ linear event
scaling. However, a synchronous solver (Julia) determines its global
step size based on the most active variable in the network. It must step
every single node forward simultaneously at this high frequency,
generating a massive volume of redundant compute (an inflated constant
factor). In contrast, AATS operates strictly via localized event queues.
By allowing fast variables to update frequently while slow variables
mathematically rest, AATS avoids millions of unnecessary evaluations.
The resulting asynchronous event line runs strictly parallel to the
baseline but is shifted downward.

\begin{figure}[h]
    \centering
    \includegraphics[scale=0.4]{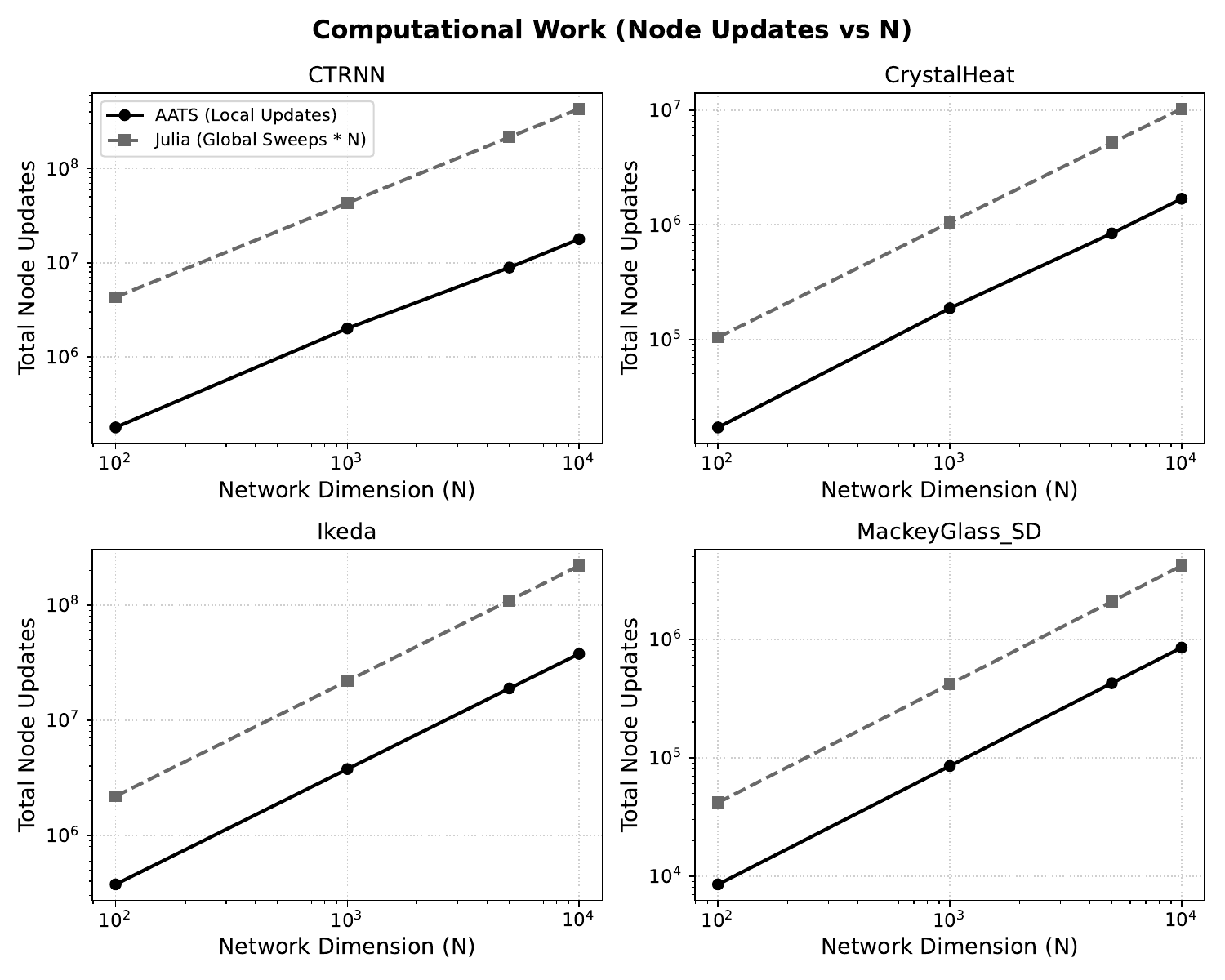}
    \caption{Computational Work (Node Updates vs $N$). Synchronous
      solvers force global sweeps, multiplying updates by $N$. AATS
      executes isolated local updates, saving millions of redundant
      calculations.}
    \label{fig:events}
\end{figure}

\paragraph{Zero-Allocation Architecture}
High-dimensional DDEs are memory hungry. Traditional solvers dynamically
allocate large history grids and temporary interpolation arrays, relying
heavily on runtime garbage collection. Figure~\ref{fig:memory} exposes
this critical flaw, showing the SciML baseline's dynamic memory
footprint growing exponentially into the gigabytes as $N$ increases.

In stark contrast, AATS relies on a static arena allocation phase during
initialization. During the actual integration hot-loop, operations
execute entirely on the stack and pre-allocated arrays. Consequently,
the global heap tracker registers exactly $0.0$ MB of dynamic
reallocation for AATS across all dimensions. This static memory
footprint eliminates garbage collection pauses, and guarantees safe
execution in strictly memory-constrained environments.

\begin{figure}[h]
  \centering
  \includegraphics[scale=0.4]{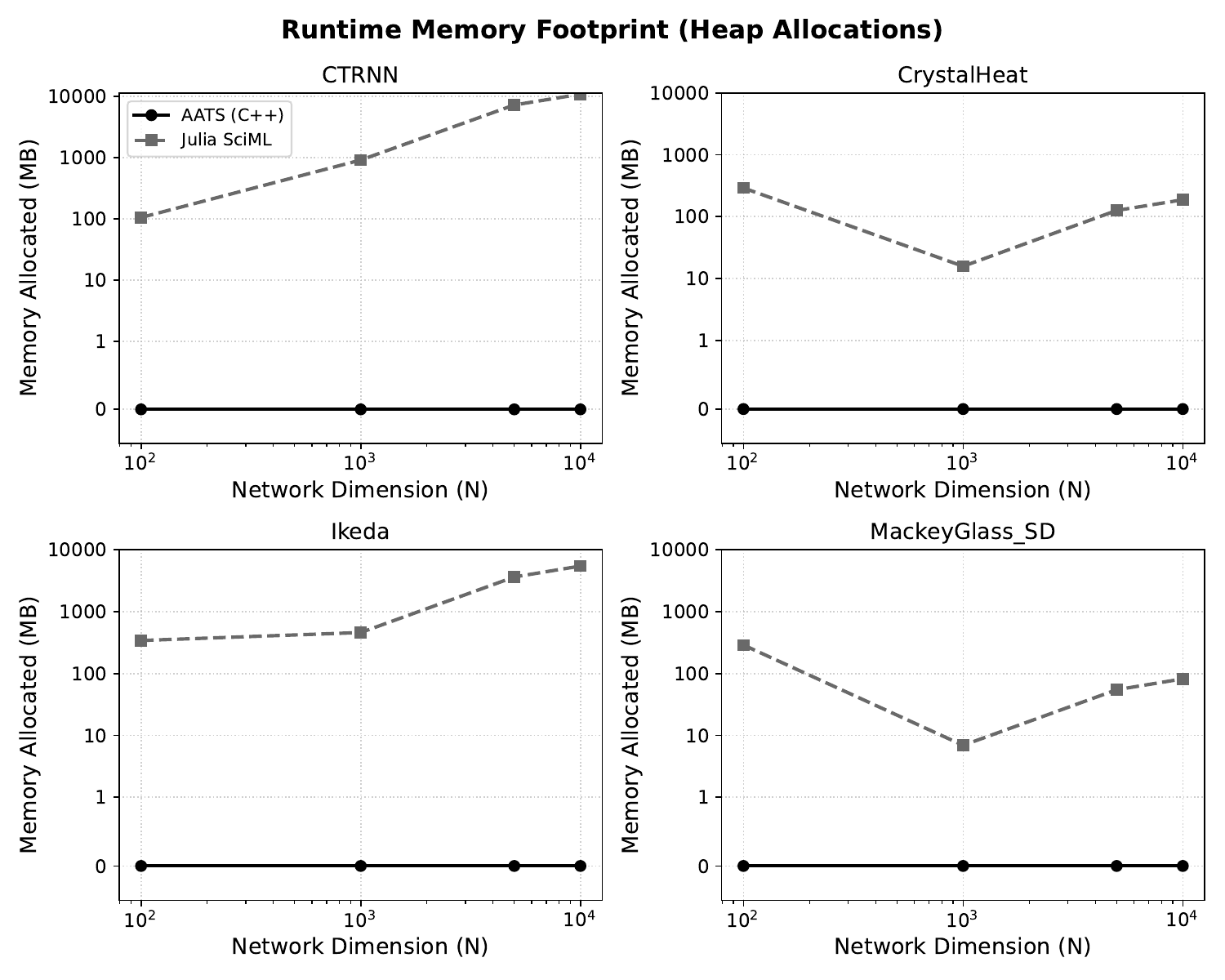}
  \caption{Runtime memory footprint (dynamic heap allocations). By
    executing entirely on pre-allocated capacity, AATS strictly
    maintains a zero-allocation profile, bypassing garbage collection
    overheads.}
  \label{fig:memory}
\end{figure}

\section{Related Work}
\label{sec:related_work}

The Asynchronous Adaptive Taylor Solver (AATS) sits at the intersection
of Taylor-series integration, multirate and asynchronous time
integration, and Delay Differential Equation (DDE) simulation. While
each of these areas has a rich literature, existing methods typically
retain some form of global temporal structure, synchronization barrier,
or secondary history reconstruction mechanism. AATS combines high-order
Taylor propagation, event-driven scheduling, and continuous
Taylor-history reconstruction into a unified framework that eliminates
the need for a global time grid.

\paragraph{Taylor Series Methods and Automatic Differentiation}
Modern compiler technologies and Automatic Differentiation (AD) have
renewed interest in Taylor series integration. Frameworks such as TIDES
\cite{abad2012tides},
\texttt{TaylorIntegration.jl}~\cite{perez2019taylorintegration}, and
symbolic-numeric Taylor engines~\cite{qiao2024efficient} demonstrate
that AD-generated Taylor polynomials can outperform traditional
Runge--Kutta methods for many ODE problems. However, these methods remain
fundamentally synchronous: all variables advance on a common temporal
grid and native support for large-scale DDEs is limited. AATS retains
the arithmetic efficiency of AD-generated Taylor expansions while
decentralizing polynomial construction into independent coordinate-wise
updates driven by local error control.

\paragraph{Multirate and Asynchronous Integration}
The motivation for AATS is related to the broader multirate integration
literature. Classical multirate methods, including multirate linear
multistep methods \cite{gear1984multirate}, allow rapidly varying
components to evolve with smaller time steps while slowly varying
components take larger steps. More recent advances in Multirate
Infinitesimal Generalized Additive Runge--Kutta (MRI-GARK)
frameworks~\cite{sandu2019multirate, gunther2016multirate} and decoupled
multirate architectures~\cite{constantinescu2007multirate,
  roberts2019design} extend this idea with sophisticated order
conditions and flexible subsystem coupling strategies. Local Time
Stepping (LTS) methods similarly reduce computational effort by
assigning different time steps to different spatial regions of a
system~\cite{cuenot2022analysis}. Despite their mathematical elegance
and efficiency, these traditional multirate and LTS methods still retain
synchronization structures through rigid macro-steps, communication
stages, or strict interface boundary conditions.

\paragraph{Quantized State Systems}
The closest conceptual relative to AATS is the Quantized State System
(QSS) framework \cite{cellier2006continuous}. QSS methods are
event-driven, component-wise, and asynchronous, making them highly
effective for sparse dynamical systems. Castro et
al.~\cite{castro2011quantization} extended QSS methods to DDEs through
the reconstruction of delayed states from asynchronous trajectories.
Despite these similarities, the event-generation mechanisms are
fundamentally different. QSS methods generate events when state
variables cross quantization thresholds. High-accuracy QSS variants
(QSS2, QSS3) therefore require continuous root-finding to predict
quantization crossings. As tolerances tighten, event densities may grow
rapidly and quantization-induced discontinuities can complicate delayed
history reconstruction. AATS instead operates entirely in the
continuous-time domain. Events are generated through adaptive Taylor
truncation error control rather than state quantization. Consequently,
delayed histories remain continuously smooth and delay evaluations are
performed without event-prediction root-finding.

\paragraph{Synchronous DDE Frameworks and Implicit Solvers}
The dominant DDE paradigm combines a time-stepping method with a
continuous history interpolant. Examples include MATLAB's \texttt{dde23}
\cite{shampine2001solving}, Monolix \cite{kuate2014delay}, JiTCDDE, and
the SciML ecosystem through \texttt{DelayDiffEq.jl}
\cite{widmann2022delaydiffeq, rackauckas2017differentialequations}.
These methods typically advance all variables on a common temporal grid
and reconstruct delayed states through dense-output interpolants. For
stiff systems and state-dependent delays, implicit formulations such as
RADAR5 \cite{guglielmi2001implementing} and SUNDIALS-based methods
\cite{hindmarsh2005sundials} remain among the most robust available
approaches. Their strength derives from sophisticated nonlinear and
linear algebra infrastructure, but this also introduces substantial
global coupling costs through Jacobian construction, factorization, and
solver synchronization. In contrast, AATS replaces global
synchronization with localized event-driven updates and evaluates
delayed states directly from stored Taylor histories. Delayed
evaluations become arithmetic operations on existing polynomial
representations rather than evaluations of secondary interpolants
defined over a global time grid.

\paragraph{Critical Architectural Distinctions}
The key distinction between AATS and existing approaches is that most
multirate methods seek to reduce synchronization frequency, whereas AATS
avoid global synchronization barriers. Existing methods typically retain
shared temporal structures through macro-steps, Runge--Kutta stages,
interface synchronization, dense-output grids, or quantization
crossings. As summarized in Table~\ref{tab:method_comparison}, AATS
instead combines coordinate-wise adaptive Taylor integration,
event-driven scheduling, and continuous Taylor-history reconstruction
within a fully asynchronous framework for DDEs.

\begin{table}[t]
  \centering
  \footnotesize
  \setlength{\tabcolsep}{4pt} 
  \caption{Conceptual comparison of AATS with representative classes of time-integration methods.}
  \label{tab:method_comparison}
  \begin{tabular}{lcccc}
    \hline
    Method Class & Local Steps & Event-Driven & Synchronous & Native DDE \\
    \hline
    Classical Multirate & Yes & No & Yes & Limited \\
    MRI-GARK / Multirate RK     & Yes & No & Yes & Limited \\
    QSS Methods                 & Yes & Yes & No & Partial \\
    RADAR5 / DelayDiffEq /      & \multirow{2}{*}{No} & \multirow{2}{*}{No} & \multirow{2}{*}{Yes} & \multirow{2}{*}{Yes} \\
    JiTCDDE                     &     &    &     &     \\
    \textbf{AATS}               & Yes & Yes & No & Yes \\
    \hline
  \end{tabular}
\end{table}

\paragraph{The AATS Architecture}
AATS synthesizes these ideas into a single high-performance
architecture. High-order Taylor polynomials generated through AD provide
continuous local state representations. These representations
simultaneously serve as the integration mechanism, the dense-output
mechanism, and the delay-history representation. As a result, arbitrary
state-dependent delay evaluations can be executed as $\mathcal{O}(1)$ polynomial
evaluations within statically allocated ring buffers. The resulting
framework combines continuous delay reconstruction, bounded memory
usage, event-driven execution, and asynchronous multirate integration
without requiring a global temporal grid.

\section{Conclusion and Future Work}
\label{sec:conclusion}

In this work, we introduced the Asynchronous Adaptive Taylor Solver
(AATS) to overcome the severe synchronization bottlenecks of
large-scale, multi-rate Delay Differential Equations (DDEs). By
combining an $\mathcal{O}(1)$ event scheduler, compile-time Automatic
Differentiation, and static memory allocation, AATS strictly confines
computational work to actively evolving sub-graphs while intrinsically
generating continuous dense-output histories.

Crucially, we provided a novel continuous proof of convergence for
asynchronous Taylor expansions and established formal $\mathcal{O}(N)$ algorithmic
complexity bounds for sparse networks. Extensive empirical validation
across neural, chaotic, and spatial topologies confirms these
theoretical breakthroughs. While explicit synchronous solvers can also
achieve linear time scaling on sparse networks, AATS fundamentally
minimizes the constant factor of computational work---allowing slow
variables to mathematically rest and preventing millions of redundant
evaluations---all while maintaining a verified zero-allocation dynamic
memory footprint.

However, asynchronous decoupling introduces dense network overhead.
Purely dense networks saturate the localized event queue. In maximally
dense regimes, the queue overhead acts as a computational penalty,
allowing traditional synchronous solvers to remain highly competitive by
exploiting contiguous memory layouts and hardware SIMD vectorization.

Future work will focus on expanding this asynchronous framework by: (1)
developing localized, SIMD-vectorized Automatic Differentiation
pipelines to recover hardware efficiency in dense systems; (2) extending
the AATS continuous history framework to support stiff and Stochastic
Delay Differential Equations (SDDEs); and (3) parallelizing the
asynchronous event queue to execute concurrent temporal updates across
multi-core processors.


\newpage

\appendix
\section{Appendix: Convergence Analysis}
\label{app:convergence_proof}

Consider the state-dependent delay differential equation
$$\dot{x}(t) = f\!\bigl(x(t), x(t-\tau(t,x(t)))\bigr), \qquad t\in[0,T],$$
with history function
$$x(t)=\phi(t), \qquad t\le 0.$$
Here
$$x:[-\tau_{\max},T]\rightarrow\mathbb{R}^m$$
denotes the exact solution, $f$ is the governing vector field, and $\tau$ is a state-dependent delay.

The numerical method considered in this work is an asynchronous adaptive Taylor-history solver. Each component is updated independently at event times generated by an adaptive scheduler. The solver stores local Taylor expansions and reconstructs delayed states by evaluating previously stored Taylor segments. The objective of the analysis is to establish:
\begin{enumerate}
\item consistency of the stored Taylor-history representation;
\item accuracy of delayed-state reconstruction;
\item stability of the asynchronous reconstruction process;
\item bounded propagation of numerical errors; and
\item convergence of the assembled numerical state $x_{num}(t)$ to the exact solution $x(t)$ as $h_{\max}\to 0$.
\end{enumerate}

The assembled numerical state produced by the solver will be denoted by
$$x_{num}(t).$$
The principal convergence result established is
$$\sup_{t\in[0,T]} \|x(t)-x_{num}(t)\| = O(h_{\max}^{K}).$$

\subsection{Event Notation}

Let
$$(t_k,i_k), \qquad k=0,\ldots,N-1,$$
denote the sequence of asynchronous events generated by the scheduler, where $t_k$ is the event time and $i_k$ is the updated component. Let $\Delta t_k$ denote the event interval generated at event $k$, and define
$$h_{\max}=\max_{0\le k<N}\Delta t_k.$$
The convergence analysis is carried out in the limit
$$h_{\max}\to 0.$$

\subsection{Assumptions}
\label{sec:assumptions}

\begin{assumption}[Regularity]
  $f\in C^{K+1}$, $\tau\in C^{K+1}$, and $\phi$ is sufficiently smooth.
\end{assumption}

\begin{assumption}[Lipschitz Conditions]
  There exist constants $L_0,L_d,L_\tau>0$ and a maximum delay $\tau_{\max} < \infty$ such that
  $$0 \le \tau(t,x) \le \tau_{\max},$$
  $$\|f(x_1,y_1)-f(x_2,y_2)\| \le L_0\|x_1-x_2\|+L_d\|y_1-y_2\|,$$
  and
  $$|\tau(t,x)-\tau(t,\hat x)| \le L_\tau\|x-\hat x\|.$$
\end{assumption}

\begin{assumption}[Bounded Derivatives]
  $$M_{K+1}:=\sup_{t\in[-\tau_{\max},T]}\|x^{(K+1)}(t)\|<\infty,$$
  and
  $$M_1:=\sup_{t\in[-\tau_{\max},T]}\|x'(t)\|<\infty.$$
  All constants $C,C_1,C_2,\ldots,C_T$ are independent of $h_{\max}$.
\end{assumption}

\begin{assumption}[Causality and Transversality]
  \label{assum:causality}
  The state-dependent delay function $\tau(t,x)$ satisfies the strict causality condition along the solution trajectory, meaning that the delayed time argument moves strictly forward. Formally, there exists a constant $c_0 > 0$ such that:
  $$1 - \frac{d}{dt}\tau\bigl(t, x(t)\bigr) \ge c_0 > 0.$$
  This ensures that the historical tracking argument $t_d(t) = t - \tau(t,x(t))$ is strictly monotonic, preventing temporal stagnation or information back-propagation.
\end{assumption}

\subsection{Consistency and Reconstruction}

\begin{definition}[History Reconstruction Operator]
  Let $R(t)$ denote the value obtained by evaluating the stored Taylor segment covering time $t$.
\end{definition}

\begin{definition}[Assembled Numerical State]
  Let $p_i(t)$ denote the most recent stored Taylor representation of component $i$ evaluated at time $t$. Define the global numerical state as:
  $$x_{num}(t) := (p_1(t), \ldots, p_m(t)).$$
\end{definition}

\begin{lemma}[Local Polynomial Consistency]
  \label{lem:local_consistency}
  Let $p_i(t)$ be the Taylor polynomial of degree $K$ generated by the solver for component $i$ at an event time $t_k$. Assuming the coefficient generation procedure is consistent of order $K$, the local approximation error over the active interval $t \in [t_k, t_k + \Delta t_k]$ satisfies:
  $$\|x_i(t) - p_i(t)\| \le C h_{\max}^{K+1},$$
  where $C = \frac{M_{K+1}}{(K+1)!}$.
\end{lemma}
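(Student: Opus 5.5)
The plan is to read ``consistent of order $K$'' as saying that the coefficients produced by ForwardAD at the event $(t_k,i_k)$ are the exact Taylor coefficients $x_i^{(j)}(t_k)/j!$, $j=0,\dots,K$, of the true solution component at $t_k$. Everything then reduces to Taylor's theorem with remainder. This is a purely local consistency statement, so no earlier result beyond Assumptions 1 and 3 is needed.

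\textbf{Step 1: exact Taylor polynomial.} Fix the event $(t_k,i_k)$ with $i=i_k$. By Assumption 1 and the smoothness of $\phi$, the exact solution is $C^{K+1}$ on the relevant interval. I will check whether a breaking point of the delay, where derivatives of $x$ can jump, falls inside $[t_k,t_k+\Delta t_k]$. If it can, I either assume the scheduler places such points on event times, or restrict the statement to intervals free of breaking points; Assumption 4 guarantees these points are isolated. Under the consistency hypothesis, $p_i(t)=\sum_{j=0}^{K} x_i^{(j)}(t_k)(t-t_k)^j/j!$.

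\textbf{Step 2: remainder.} Apply Taylor's theorem in integral form, componentwise, to avoid the mean-value issue for vector-valued functions:
$$x_i(t)-p_i(t)=\frac{1}{K!}\int_{t_k}^{t}(t-s)^K x_i^{(K+1)}(s)\,ds .$$
This gives $|x_i(t)-p_i(t)|\le \frac{M_{K+1}}{(K+1)!}|t-t_k|^{K+1}$. Here I use $|x_i^{(K+1)}|\le \|x^{(K+1)}\|\le M_{K+1}$ from Assumption 3, which holds for any standard norm.

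\textbf{Step 3: uniformity.} Since $t\in[t_k,t_k+\Delta t_k]$ and $\Delta t_k\le h_{\max}$, we get $|t-t_k|^{K+1}\le h_{\max}^{K+1}$. This yields the claim with $C=M_{K+1}/(K+1)!$, which is independent of $k$, $i$ and $h_{\max}$.

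\textbf{Main obstacle.} The real difficulty is the consistency hypothesis itself. In AATS the coefficients are generated from numerical, reconstructed inputs $x_{num}$ and delayed values, not exact ones. So, strictly, $p_i$ is the Taylor polynomial of a perturbed solution. If one wants more than the idealized statement, I would split the error as $x_i-p_i=(x_i-\hat p_i)+(\hat p_i-p_i)$, where $\hat p_i$ is the exact Taylor polynomial from Steps 1--3. The coefficient perturbation $\hat p_i-p_i$ would then be bounded through the Lipschitz constants of Assumption 2 in terms of the state error at $t_k$, and deferred to the global Gronwall argument. For this lemma, however, I would state explicitly that ``consistent of order $K$'' means exact coefficients, so that only the truncation term remains.
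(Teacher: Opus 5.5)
Your proposal is correct and follows the paper's proof essentially verbatim: you read the consistency hypothesis as exact Taylor coefficients of $x_i$ at $t_k$, bound the Taylor remainder using $M_{K+1}$ from Assumption 3, and use $t-t_k\le\Delta t_k\le h_{\max}$ to obtain the constant $M_{K+1}/(K+1)!$. Your differences are extra care rather than a different route: you use the integral instead of the Lagrange form of the remainder (harmless, though not needed since $x_i$ is scalar), and you flag breaking points and coefficients generated from $x_{num}$, which the paper's argument does not address.
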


\begin{proof}
  By the assumption of order $K$ consistency, the generated polynomial $p_i(t)$ matches the true local trajectory $x_i(t)$ and its first $K$ time derivatives evaluated at $t_k$. Applying the standard Taylor remainder theorem to the exact scalar component trajectory $x_i(t)$, the local truncation error is strictly bounded by the $(K+1)$-th derivative evaluated at some intermediate point $\xi \in (t_k, t)$. 
  
  Given Assumption 3 (Bounded Derivatives), the component error satisfies:
  $$\|x_i(t) - p_i(t)\| \le \frac{\|x_i^{(K+1)}(\xi)\|}{(K+1)!} (t - t_k)^{K+1}.$$
  Because the elapsed time $(t - t_k)$ is strictly bounded by the maximum event interval $h_{\max}$, and the continuous state derivative is uniformly bounded by $\|x^{(K+1)}(\xi)\| \le M_{K+1}$, the right-hand side simplifies exactly to $C h_{\max}^{K+1}$.
\end{proof}

\begin{lemma}[Local Assembled Reconstruction Error]
  \label{lem:async_consistency}
  Assuming exact state values at the most recent update events, the locally assembled numerical state over any single asynchronous interval $t \in [t_k, t_k + \Delta t_k]$ satisfies:
  $$\|x(t)-x_{num}(t)\| \le C^\ast h_{\max}^{K+1}.$$
\end{lemma}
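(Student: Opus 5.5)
The bound is assembled component by component from Lemma~\ref{lem:local_consistency}, switching between the Euclidean norm and the componentwise maximum norm on $\mathbb{R}^m$. Fix $t\in[t_k,t_k+\Delta t_k]$. By the definition of the assembled numerical state, $x_{num}(t)=(p_1(t),\ldots,p_m(t))$, where each $p_i$ is the most recent stored Taylor segment of component $i$. That segment was generated at some earlier event time $s_i\le t$, and the components are generally out of phase, so $s_i\neq t_k$ in general. The first step is to check that for every component the elapsed time satisfies $t-s_i\le h_{\max}$. Otherwise the scheduler would already have popped an event for component $i$ at $s_i+\Delta t_{(i)}\le s_i+h_{\max}<t$, and $p_i$ would not be the most recent segment. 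Preemption only inserts earlier events, so it can only shorten $t-s_i$. This uses the definition of $h_{\max}$ as the maximum event interval and the fact that the heap processes events in time order.

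Next, I would apply Lemma~\ref{lem:local_consistency} to each component on its own active interval $[s_i,s_i+\Delta t_{(i)}]\ni t$. The hypothesis of exact state values at the most recent update events gives $p_i^{(j)}(s_i)=x_i^{(j)}(s_i)$ for $j\le K$, assuming order-$K$ consistency of the coefficient generation. The Taylor remainder with $M_{K+1}$ from Assumption~3 then yields
\[
|x_i(t)-p_i(t)|\le \frac{M_{K+1}}{(K+1)!}(t-s_i)^{K+1}\le \frac{M_{K+1}}{(K+1)!}h_{\max}^{K+1}.
\]

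Finally, I would sum over components. Using $\|v\|\le\sqrt{m}\,\|v\|_\infty$, this gives
\[
\|x(t)-x_{num}(t)\|\le \sqrt{m}\,\frac{M_{K+1}}{(K+1)!}h_{\max}^{K+1},
\]
so $C^\ast=\sqrt{m}\,M_{K+1}/(K+1)!$, or $C^\ast=C$ in the max norm. This constant is independent of $h_{\max}$ and of the event index $k$, as Assumption~3 requires.

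The main obstacle is the first step, namely justifying that every component's stored segment is still within one event interval of $t$, despite asynchrony and lazy deletion of preempted events. The key observation I would rely on is that lazily deleted entries only ever correspond to later times than a valid pending event, so they never lengthen the time a segment stays in use. A secondary subtlety is that the Taylor coefficients are built from delayed neighbor values; this is absorbed by the exactness hypothesis, and the delayed-value error is deferred to the global analysis.
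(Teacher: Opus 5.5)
Your proposal is correct and follows the paper's proof. Both write $x_{num}(t)=(p_1(t),\ldots,p_m(t))$, apply Lemma~\ref{lem:local_consistency} componentwise under the exactness and order-$K$ consistency hypotheses, and combine the bounds by finite-dimensional norm equivalence; your scheduler argument that $t-s_i\le h_{\max}$ for every component, and your explicit $C^\ast=\sqrt{m}\,M_{K+1}/(K+1)!$, spell out what the paper leaves implicit in the phrase ``over a local interval where no new events occur''.
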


\begin{proof}
  By definition, the assembled state is $x_{num}(t)=(p_1(t),\ldots,p_m(t))$. Over a local interval where no new events occur, the true state difference is:
  $$x(t)-x_{num}(t) = (x_1(t)-p_1(t),\ldots,x_m(t)-p_m(t)).$$
  By Lemma \ref{lem:local_consistency}, assuming exact initial conditions for the generated local polynomials, the individual components strictly satisfy $\|x_i(t)-p_i(t)\| \le C h_{\max}^{K+1}$. Combining these componentwise bounds and applying norm equivalence in finite dimensions yields a scaled bounding constant $C^\ast$:
  $$\|x(t)-x_{num}(t)\| \le C^\ast h_{\max}^{K+1}.$$
\end{proof}

\begin{lemma}[Local Delayed-State Reconstruction Error]
  \label{lem:local_reconstruction_error}
  Assuming the stored Taylor segment covering the delayed time $t_d$ was generated from exact local initial conditions, the local reconstruction error of the history operator $R(t_d)$ satisfies:
  \begin{equation}
    \|x(t_d) - R(t_d)\| \le \frac{M_{K+1}}{(K+1)!} h_{\max}^{K+1},
  \end{equation}
  where $M_{K+1} = \sup_{t} \|x^{(K+1)}(t)\|$ over the local integration domain.
\end{lemma}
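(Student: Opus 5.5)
The plan is to reduce the statement to Lemma~\ref{lem:local_consistency}, applied to the particular segment that the history operator selects. First, I will pin down which segment that is. By the buffer sizing of Lemma~\ref{lem:history_preservation}, the segment covering $t_d$ has not been overwritten. Because the history operator $R$ locates the stored segment covering $t_d$, there is an event $(t_k,i_k)$ with
$$t_d\in[t_k,t_k+\Delta t_k],\qquad R(t_d)=p(t_d),$$
where $p$ is the degree-$K$ polynomial generated at $t_k$. By hypothesis this polynomial was generated from exact local initial conditions. Under the order-$K$ consistency of the coefficient generator, its coefficients are therefore $x^{(j)}(t_k)/j!$ for $j=0,\dots,K$.

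Next, I will apply Taylor's theorem with remainder to the exact solution $x$ on $[t_k,t_d]$. Assumption~1 gives $x\in C^{K+1}$ there, since $f,\tau\in C^{K+1}$ and $\phi$ is smooth, at least away from the breaking points. To avoid mean-value issues for vector-valued functions, I will use the integral form of the remainder:
$$x(t_d)-p(t_d)=\frac{1}{K!}\int_{t_k}^{t_d}(t_d-s)^K x^{(K+1)}(s)\,ds.$$
Taking norms and using $\|x^{(K+1)}(s)\|\le M_{K+1}$ from Assumption~3 gives
$$\|x(t_d)-p(t_d)\|\le M_{K+1}\,\frac{(t_d-t_k)^{K+1}}{(K+1)!}.$$
Finally, $0\le t_d-t_k\le\Delta t_k\le h_{\max}$, which yields the stated constant $M_{K+1}/(K+1)!$ with no dimension-dependent factor. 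This is why I prefer the integral form over a componentwise mean-value argument, which would introduce a norm-equivalence constant.

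The main obstacle is regularity. For DDEs the solution typically has derivative discontinuities at the propagated breaking points $0,\ t$ with $t-\tau(t,x(t))=0$, and so on. So $x^{(K+1)}$ may fail to exist, or fail to be continuous, across such a point inside $[t_k,t_d]$. I will handle this by reading $M_{K+1}$ as in the statement, namely as the supremum over the local integration domain, and by assuming that the segment interval contains no breaking point of order $\le K+1$. Assumption~\ref{assum:causality} makes the delayed argument strictly monotone, so breaking points are isolated, and a step controller can align events with them. If instead a breaking point does fall inside the segment, I would note that the bound still holds with $M_{K+1}$ taken as the supremum of the one-sided $(K+1)$-th derivatives, provided $x\in C^{K}$ with $x^{(K)}$ Lipschitz there. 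In that case the integral remainder is still valid, with $x^{(K+1)}$ understood almost everywhere.
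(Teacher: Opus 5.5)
Your proposal is correct and follows the paper's own argument. The paper's proof simply notes that $R(t_d)$ evaluates a finalized segment of the kind treated in Lemma~\ref{lem:local_consistency} and applies the Taylor remainder at $t_d$, bounding the elapsed time from the segment's expansion point by $h_{\max}$. Your additions are sound refinements of that same route rather than a different proof: the integral form of the remainder (which yields the constant $M_{K+1}/(K+1)!$ directly for a vector norm), the appeal to Lemma~\ref{lem:history_preservation} to guarantee the segment is still stored, and the caveat about breaking points.
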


\begin{proof}
  Because the history reconstruction operator $R(t)$ evaluates a previously finalized polynomial segment exactly as defined in Lemma~\ref{lem:local_consistency}, this bound evaluates the localized Taylor remainder formula at the historical time point $t_d$. Like the assembled dense state, this strictly bounds the local truncation error of the history representation prior to the accumulation of global error.
\end{proof}

\begin{remark}[Continuity Across History Segments]
  Adjacent Taylor segments are generated sequentially from continuous state values enforced at event times, guaranteeing that the global history reconstruction operator $R(t)$ remains $C^0$-continuous across all segment boundaries. Consequently, when a dynamic or state-dependent delay forces the historical argument $t_d$ to cross a segment boundary, the operator smoothly transitions between independent polynomial pieces. Because each piece satisfies Lemma~\ref{lem:local_reconstruction_error} natively, the reconstruction error remains bounded uniformly by $\mathcal{O}(h_{\max}^{K+1})$, avoiding the localized interpolation mismatches or artificial order reduction characteristic of static grid-allocation schemes.
\end{remark}

\subsection{Delay Perturbation Analysis}

Define the error
$$e(t):=x_{num}(t)-x(t).$$
Let
$$t_d=t-\tau(t,x(t)), \qquad \hat t_d=t-\tau(t,x_{num}(t)).$$

\begin{lemma}
  $$|t_d-\hat t_d| \le L_\tau\|e(t)\|.$$
\end{lemma}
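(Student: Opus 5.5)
The bound follows directly from the Lipschitz condition on the delay in Assumption~2, once both delayed times are written out. By definition, $t_d = t-\tau(t,x(t))$ and $\hat t_d = t-\tau(t,x_{num}(t))$ share the same current time $t$. Subtracting, the explicit $t$ cancels and
$$t_d-\hat t_d = \tau\bigl(t,x_{num}(t)\bigr)-\tau\bigl(t,x(t)\bigr).$$
So the discrepancy between the exact and numerical delayed arguments is purely the change in the delay function caused by perturbing its state argument, with the time argument held fixed.

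Next, take absolute values and apply the Lipschitz bound $|\tau(t,x)-\tau(t,\hat x)|\le L_\tau\|x-\hat x\|$ with $x=x_{num}(t)$ and $\hat x=x(t)$. This gives
$$|t_d-\hat t_d|\le L_\tau\|x_{num}(t)-x(t)\| = L_\tau\|e(t)\|,$$
using the definition $e(t)=x_{num}(t)-x(t)$.

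The only point needing care is the domain on which the Lipschitz condition is used. Assumption~2 states it globally in $x$, so no restriction is needed as stated. If one instead reads it as holding only on a compact neighborhood of the exact trajectory, the lemma should be applied inside the $\epsilon$-tube used in the bootstrap argument of Theorem~\ref{thm:global_convergence}. There $x_{num}(t)$ stays within distance $\epsilon$ of $x(t)$, and the same one-line estimate holds with the local Lipschitz constant. No other obstacle arises: the causality condition (Assumption~\ref{assum:causality}) is not needed for this pointwise bound and becomes relevant only later, when the shift in $t_d$ is converted into a bound on $\|x(t_d)-x(\hat t_d)\|$ via $M_1$.
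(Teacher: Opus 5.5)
Your proof is correct and is the same as the paper's: write $t_d-\hat t_d=\tau(t,x_{num}(t))-\tau(t,x(t))$, then apply the Lipschitz condition on $\tau$ from Assumption~2. Your side remarks are accurate but not needed for the lemma as stated: the Lipschitz bound is assumed globally, and the causality condition plays no role here.
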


\begin{proof}
  By definition,
  $$|t_d-\hat t_d| = |\tau(t,x_{num}(t))-\tau(t,x(t))|.$$
  Applying the Lipschitz condition on $\tau$ gives
  $$|t_d-\hat t_d| \le L_\tau\|x_{num}(t)-x(t)\| = L_\tau\|e(t)\|.$$
\end{proof}

\begin{lemma}
  If $\|x'(t)\|\le M_1$, then
  $$\|x(t_d)-x(\hat t_d)\| \le M_1L_\tau\|e(t)\|.$$
\end{lemma}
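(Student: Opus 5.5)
The plan is to read this as a direct consequence of the mean value inequality applied to the exact solution $x$ on the interval between the two delayed time arguments, combined with the preceding lemma bounding $|t_d-\hat t_d|$. No information about the numerical method is needed beyond the definition of $\hat t_d$. The estimate is purely a regularity statement about the exact trajectory.

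First, I will confirm that both $t_d=t-\tau(t,x(t))$ and $\hat t_d=t-\tau(t,x_{num}(t))$ lie in the domain $[-\tau_{\max},T]$ where $x$ is defined and $\|x'\|\le M_1$. This follows from Assumption~2: $0\le\tau\le\tau_{\max}$ and $t\in[0,T]$ give $t_d,\hat t_d\in[t-\tau_{\max},t]\subseteq[-\tau_{\max},T]$. On $[-\tau_{\max},0]$ the solution coincides with the smooth history $\phi$, so the derivative bound of Assumption~3 applies throughout by hypothesis.

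Second, I will write, assuming without loss of generality $\hat t_d\le t_d$,
$$x(t_d)-x(\hat t_d)=\int_{\hat t_d}^{t_d} x'(s)\,ds .$$
Taking norms gives $\|x(t_d)-x(\hat t_d)\|\le M_1|t_d-\hat t_d|$. I use the integral form rather than a pointwise mean value theorem because $x$ is vector-valued. If $x'$ has a jump at $t=0$, where history meets solution, $x$ is still absolutely continuous and Lipschitz with constant $M_1$, so the integral identity still holds. The preceding lemma gives $|t_d-\hat t_d|\le L_\tau\|e(t)\|$, and substituting this yields $M_1L_\tau\|e(t)\|$.

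The only real subtlety is the second step: justifying the Lipschitz bound across the possible derivative discontinuity at $t=0$, and checking that the interval between the two arguments stays inside the domain of $x$. I would handle the first through absolute continuity as above, and the second through the range argument in the first step.
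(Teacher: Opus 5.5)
Your proposal is correct and follows the paper's proof: the paper applies the mean-value theorem to get $\|x(t_d)-x(\hat t_d)\|\le M_1|t_d-\hat t_d|$ and then invokes the preceding lemma $|t_d-\hat t_d|\le L_\tau\|e(t)\|$. Your extra checks are sound refinements that the paper leaves implicit: that both delayed arguments lie in $[-\tau_{\max},T]$, and that the integral form handles vector-valued $x$ and a possible derivative jump at $t=0$.
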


\begin{proof}
  By the mean-value theorem,
  $$\|x(t_d)-x(\hat t_d)\| \le M_1|t_d-\hat t_d|.$$
  Applying the previous lemma yields
  $$\|x(t_d)-x(\hat t_d)\| \le M_1L_\tau\|e(t)\|.$$
\end{proof}

\subsection{Error Propagation and Continuous Defect}

Define the global error $e(t):=x_{num}(t)-x(t)$. 

\begin{definition}[Piecewise Continuous Defect]
  Let $\hat t_d=t-\tau(t,x_{num}(t))$. Because the assembled numerical state $x_{num}(t)$ is constructed from piecewise Taylor segments, it is globally continuous ($C^0$) but its derivative $x_{num}'(t)$ possesses jump discontinuities at the event times $t_k$. Therefore, we define the continuous residual, or defect, almost everywhere (for $t \neq t_k$) as:
  $$\rho(t) := x_{num}'(t) - f(x_{num}(t), x_{num}(\hat t_d)).$$
  This function measures how exactly the analytical derivative of the assembled numerical polynomial satisfies the governing dynamics evaluated at the reconstructed numerical state on the open intervals between events.
\end{definition}

\begin{definition}[Compact $\epsilon$-Tube]
  \label{def:epsilon_tube}
  Let $\epsilon > 0$. The compact $\epsilon$-neighborhood around the exact continuous solution history is defined as:
  $$\Omega_\epsilon := \left\{ y \in \mathbb{R}^m : \min_{t \in [-\tau_{\max}, T]} \|y - x(t)\| \le \epsilon \right\}.$$
\end{definition}

\begin{lemma}[Piecewise Bounded Evolution Derivatives]
  \label{lem:bounded_F}
  Under Assumption 1, $f, \tau \in C^{K+1}$. Because the assembled numerical state $x_{num}(t)$ is constructed from piecewise polynomials of degree $K$, it is globally $C^0$ and its time derivatives up to order $K$ are piecewise continuous. 
  
  Crucially, the delayed argument $t_d(t) = t - \tau(t, x_{num}(t))$ maps current integration times back into the numerical history, potentially crossing the discrete boundaries where history segments transition. Under Assumption \ref{assum:causality} (Causality), the delayed argument is strictly increasing ($1 - \frac{d\tau}{dt} > 0$), forcing $t_d(t)$ to pass transversally through these segment transitions. Because the set of asynchronous event times is finite over the bounded interval $[0,T]$, the pre-image of these discrete boundaries under the strictly monotonic mapping $t_d(t)$ is also finite. Consequently, the set of times where the numerical trajectory evaluates exactly on a non-smooth historical boundary has Lebesgue measure zero.
  
  Therefore, the multivariate chain rule expansion of the $K$-th time derivative $F^{(K)}(t)$ exists and is piecewise continuous almost everywhere (specifically, avoiding local event updates and their delay-induced echoes).
  
  Let $\Omega_\epsilon$ be the compact $\epsilon$-neighborhood defined in Definition \ref{def:epsilon_tube}. Because $F^{(K)}(t)$ is a piecewise continuous mapping composed of $C^{K+1}$ functions evaluated over a bounded domain, its magnitude is bounded almost everywhere. We can thus explicitly define the finite constant $M_F(\epsilon) < \infty$ as the essential supremum of this derivative mapping:
  $$M_F(\epsilon) := \operatorname*{ess\,sup} \left\{ \|F^{(K)}(t)\| : x_{num}(s) \in \Omega_\epsilon \text{ for all } s \le t \right\}.$$
  
  Consequently, for any arbitrary bootstrap interval
  $[0, t^*] \subseteq [0,T]$ where the numerical trajectory satisfies
  $x_{num}(s) \in \Omega_\epsilon$ for all $s \le t^*$, the continuous evolution
  function\\ $F(t) = f(x_{num}(t), x_{num}(t-\tau(t,x_{num}(t))))$ strictly
  satisfies:
  $$\|F^{(K)}(t)\| \le M_F(\epsilon) < \infty \quad \text{a.e.}$$
\end{lemma}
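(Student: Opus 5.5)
The plan is to establish the essential bound on $F^{(K)}$ in three steps: first show that $F$ is piecewise smooth with only finitely many breakpoints on $[0,t^*]$; then write $F^{(K)}$ on each smooth piece via Fa\`a di Bruno; finally bound every factor uniformly on the tube $\Omega_\epsilon$.

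\textbf{Step 1: the breakpoint set $\mathcal{E}$.} Let $\mathcal{E}\subset[0,t^*]$ be the union of two sets.
\begin{itemize}
\item The finitely many event times $t_k$. These exist because $h_{\max}$ is fixed and the scheduler produces finitely many events on $[0,T]$.
\item The times $t$ for which $\hat t_d(t)=t-\tau(t,x_{num}(t))$ equals some event time $t_k$ or the origin $0$.
\end{itemize}
On any interval avoiding the $t_k$, $x_{num}$ is a polynomial. Hence $\hat t_d\in C^{K+1}$ there, since $\tau\in C^{K+1}$.

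To control the second set, I would use Assumption~\ref{assum:causality} transferred to the numerical trajectory. On each subinterval,
$$\frac{d}{dt}\hat t_d = 1-\partial_t\tau-\partial_x\tau\cdot x_{num}'.$$
Its difference from the exact quantity $1-\frac{d}{dt}\tau(t,x(t))$ is bounded by moduli of continuity of $\nabla\tau$ times $\epsilon$, plus $\|\nabla\tau\|$ times $\|x_{num}'-x'\|$. For $\epsilon$ small (shrinking $\epsilon$ if necessary), this keeps $\frac{d}{dt}\hat t_d\ge c_0/2>0$. Therefore $\hat t_d$ is strictly increasing and continuous on $[0,t^*]$. Each level set $\{\hat t_d=t_k\}$ is then at most a single point, so $\mathcal{E}$ is finite and has measure zero. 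Off $\mathcal{E}$, the map $t\mapsto x_{num}(\hat t_d(t))$ is a composition of $C^{K+1}$ maps: on $\hat t_d\le0$ it is $\phi\circ\hat t_d$, and otherwise it is a polynomial segment composed with $\hat t_d$. Hence $F$ is $C^{K}$ on each component of $[0,t^*]\setminus\mathcal{E}$, and $F^{(K)}$ exists a.e.

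\textbf{Step 2: Fa\`a di Bruno expansion.} On each smooth piece, expand $F^{(K)}(t)$ as a finite sum of products. Each product consists of two kinds of factors.
\begin{itemize}
\item Partial derivatives of $f$ of order at most $K$, evaluated at $(x_{num}(t),x_{num}(\hat t_d))$.
\item Time derivatives of order at most $K$ of the inner functions $x_{num}(t)$ and $x_{num}(\hat t_d(t))$. The latter in turn expand into derivatives of $\tau$, of $x_{num}$ at $\hat t_d$, and of $\hat t_d$.
\end{itemize}
The combinatorial coefficients depend only on $K$ and $m$.

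\textbf{Step 3: uniform bounds on $\Omega_\epsilon$.} $\Omega_\epsilon$ is compact because $x$ is continuous on the compact interval $[-\tau_{\max},T]$. Take the delay time domain to be $[0,T]$, which is compact. The partial derivatives of $f$ and $\tau$ up to order $K+1$ are continuous, so they are bounded on $\Omega_\epsilon\times\Omega_\epsilon$ and on $[0,T]\times\Omega_\epsilon$ respectively. The derivatives of $\phi$ are bounded by its smoothness.

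The main obstacle is bounding the derivatives $p_i^{(j)}$ of the stored polynomial segments. Membership in the tube controls only the values of $x_{num}$, not its derivatives. My first attempt would use the fact that each segment's coefficients are the AD-generated Taylor coefficients $c^{(v)}_j$. These are recursively polynomial in lower coefficients and in derivatives of $f$ at points of $\Omega_\epsilon$, so by induction on $j$ they are bounded by a constant depending only on $\epsilon$ and $K$. On an interval of length at most $h_{\max}\le T$, the derivatives of the resulting polynomial are then bounded as well. The same bound also supplies the derivative control needed in Step 1.

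Combining Steps 2 and 3 gives $\|F^{(K)}(t)\|\le M_F(\epsilon)$ off $\mathcal{E}$. Taking $M_F(\epsilon)$ as the essential supremum over all admissible trajectories yields the claim a.e.
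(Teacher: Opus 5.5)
Your outline has the same skeleton as the paper's inline justification: finitely many breakpoints because the delayed argument is monotone, the chain rule on each smooth piece, and compactness of $\Omega_\epsilon$. You also correctly identify what the paper passes over with ``composed of $C^{K+1}$ functions evaluated over a bounded domain''. Membership in $\Omega_\epsilon$ controls only the values of $x_{num}$, so everything hinges on bounding the derivatives of the stored segments uniformly in $h_{\max}$ and in the number of events, which is what the bootstrap in Theorem~\ref{thm:global_convergence_app} needs.

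\textbf{The coefficient bound in Step 3 does not close.} At an event of component $i$ at time $t_k$, the AD pipeline consumes the other components' series re-centred at $t_k$,
\[
\tilde c^{(j)}_l=\sum_{r=l}^{K}\binom{r}{l}\,c^{(j)}_r\,\delta_j^{\,r-l},\qquad \delta_j=t_k-t^{(j)}\in[0,h_{\max}],
\]
and likewise the derivatives of a history segment at the interior point $\hat t_d(t_k)$. So $c^{(i)}_{l+1}$ depends on coefficients of all orders up to $K$ of earlier segments. Already for $x_1'=x_2$ one gets $c^{(1)}_2=\tfrac12\bigl(c^{(2)}_1+2c^{(2)}_2\delta_2+3c^{(2)}_3\delta_2^2+\cdots\bigr)$. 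The recursion is not triangular in the order, so induction on $j$ alone proves nothing.

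What you need is an induction over events in chronological order. The hypothesis is $|c_r|\le B_r$ for every previously stored segment, with the base case coming from $\phi$ and the initial polynomials. The $B_r$ must be chosen recursively with a margin that absorbs the carried-over terms weighted by $h_{\max}^{r-l}$. That absorption requires an explicit condition $h_{\max}\le h_1(\epsilon)$; bounding the elapsed time by $T$, as at the end of your Step 3, loses exactly this smallness. Without it, nothing stops the bounds from compounding from event to event. This condition is absent from the statement and must be folded into the threshold $h_0(\epsilon)$.

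\textbf{Step 1 has a second gap.} To get $\frac{d}{dt}\hat t_d\ge c_0/2$ you need $\|x_{num}(t)-x(t)\|$ and $\|x_{num}'(t)-x'(t)\|$ small at the same time $t$. The hypothesis $x_{num}(s)\in\Omega_\epsilon$ gives neither, because $\Omega_\epsilon$ is a neighbourhood of the range of $x$, not a time-indexed tube. A bound on $\|x_{num}'\|$, which is all Step 3 supplies, cannot produce a positive lower bound. A workable route is:
\begin{enumerate}
\item Use the time-indexed condition $\|x_{num}(s)-x(s)\|\le\epsilon$ that the bootstrap actually provides.
\item Write $p_i'(t)=c^{(i)}_1+O(h_{\max}\max_r B_r)$ with $c^{(i)}_1=F_i(t^{(i)}_k)$.
\item Apply the Lipschitz bounds on $f$ and $\tau$ to get $\|x_{num}'-x'\|\le C\epsilon+C(\epsilon)h_{\max}$.
\end{enumerate}
Step 1 therefore depends on Step 3, and it holds only for small $\epsilon$ and $h_{\max}$. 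The paper does not supply this step either: it applies the causality assumption, which is stated for the exact $x$, directly to $t-\tau(t,x_{num}(t))$.
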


\begin{lemma}[Component-Wise Exactness of Algorithmic Coefficient Generation]
  \label{lem:coeff_exactness}
  Let $t_k^{(i)}$ be the event time at which component $i$ is asynchronously updated. Because the solver maintains dense-output polynomials $p_j(t)$ for all components $j = 1, \dots, m$, the numerical state $x_{num}(t)$ is a fully defined, piecewise-polynomial vector function.
  
  Provided the delayed argument $t_d = t_k^{(i)} - \tau(t_k^{(i)}, x_{num}(t_k^{(i)}))$ lies in the interior of a stored history segment, the delay composition is differentiable through order $K$. Because historical segment boundaries form a set of measure zero, this condition holds almost everywhere. 
  
  Under this condition, when the Taylor arithmetic and forward-mode Automatic Differentiation pipeline executes for component $i$ at $t_k^{(i)}$, the generated coefficients for the local polynomial $p_i(t)$ exactly match the Taylor coefficients of the $i$-th component of the numerical composite function:
  \[
    F_i(t) = f_i(x_{num}(t), x_{num}(t-\tau(t,x_{num}(t))))
  \]
  evaluated strictly at $t = t_k^{(i)}$, up to the truncation order $K$ used by the method.
\end{lemma}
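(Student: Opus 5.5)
The proof reduces to checking that forward-mode Taylor arithmetic applied to the AD graph of $f_i$ is an exact algebra homomorphism on truncated power series. The inputs it receives must be the truncated Taylor series at $t_k^{(i)}$ of the arguments of $F_i$. Concretely, consider the ring $\mathbb{R}[s]/(s^{K+1})$ of degree-$K$ truncated series in the local variable $s=t-t_k^{(i)}$. Forward-mode AD over a compiled expression graph replaces every primitive ($+$, $\times$, division, $\tanh$, rational activations, and so on) by its truncated-series counterpart. The standard result of \cite{griewank2008evaluating} is that for any $C^{K+1}$ (indeed $C^K$ suffices) composition $g\circ u$, the truncated series of $g(u(t))$ at $t_k^{(i)}$ depends only on the truncated series of $u$ at that point, and the AD recurrences compute it exactly. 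I would therefore state this composition lemma first and induct over the nodes of the topologically ordered AD graph, using Assumption~1 to guarantee each primitive is $C^{K}$ near the evaluation point.

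The second step is to identify the inputs fed to the graph and check that each is the correct truncated series. There are three kinds of input. First, the current-state inputs $x_{num,j}(t)$ near $t_k^{(i)}$: for $j\neq i$ these are produced by the \textsc{BinomialShift} of the stored segment $p_j$. Re-expanding a degree-$K$ polynomial about $t_k^{(i)}$ is an exact change of basis, so the input equals the full Taylor series of $p_j$ there. This holds provided $t_k^{(i)}$ is interior to $p_j$'s active segment, which is true off the finite set of event times. Second, the delayed time $\hat t_d(t)=t-\tau(t,x_{num}(t))$ as a series in $s$: this is itself obtained by running $\tau$ through the same truncated-series arithmetic, so the same composition lemma applies. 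Third, the delayed state $x_{num}(\hat t_d(t))$: here the hypothesis that $\hat t_d(t_k^{(i)})$ lies in the interior of a stored segment $q$ is used. It guarantees that on a neighborhood of $t_k^{(i)}$ we have $x_{num}(\hat t_d(t))=q(\hat t_d(t))$, a composition of a polynomial with a $C^{K}$ function. That composition is again handled exactly by series composition, i.e.\ Horner evaluation of $q$ in truncated arithmetic after shifting $q$ to centre $\hat t_d(t_k^{(i)})$.

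The measure-zero remark in the statement then follows from Assumption~\ref{assum:causality}. Strict monotonicity of $\hat t_d$, the same transversality argument as in Lemma~\ref{lem:bounded_F}, makes the preimage of the finitely many segment boundaries finite. The condition therefore fails only on a finite set of event times.

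I expect the main obstacle to be the delayed-state input. One must argue that the segment selected by the ring-buffer lookup is the same piece of $x_{num}$ on a whole neighborhood of $t_k^{(i)}$, not just at the point. Otherwise the series of $x_{num}\circ\hat t_d$ is not that of $q\circ\hat t_d$. I would handle this by continuity of $\hat t_d$: since $\hat t_d(t_k^{(i)})$ is interior to $q$'s interval, a small neighborhood maps into that interval. Lemma~\ref{lem:history_preservation} ensures the segment has not been overwritten. With all inputs matching, the induction gives exact agreement of the generated coefficients with those of $F_i$ up to order $K$. The coefficients of $p_i$ are these shifted by one order through the integration recurrence $c_{n+1}=F_n/(n+1)$, which is the sense in which they match up to the truncation order.
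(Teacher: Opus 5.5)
Your argument is the paper's argument made explicit. The paper's proof also rests on exactness of truncated Taylor arithmetic under composition, with neighbours and history entering as explicit polynomials that are smooth away from segment boundaries. Your input-by-input treatment, the neighbourhood argument for the ring-buffer segment, and the observation that $p_i'$ (via $c_{n+1}=[F_i]_n/(n+1)$), not $p_i$, is what matches $F_i$ all sharpen it.

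The step that fails is the undelayed input $x_{num,j}$, $j\neq i$. You need $t_k^{(i)}$ to be interior to $j$'s active segment, and you dismiss this as true ``off the finite set of event times''. But $t_k^{(i)}$ is itself an event time, so that clause never applies. Algorithm~\ref{alg:aats} also makes coincident events systematic. Every component has an event at $t=0$, and Phase~1 builds them one at a time, so early components cannot see the others' series. A preempted neighbour $j$ is pushed with the \emph{same} time $t$ as the triggering event of $v$ and is processed after it.

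Suppose $f_v$ also depends on $x_j(t)$ undelayed, as with the symmetric nearest-neighbour coupling of the crystal heat benchmark. Then $v$'s AD used the binomially shifted \emph{old} segment of $j$, whereas for $t>t_k^{(v)}$ the assembled $x_{num,j}$ is $j$'s new segment. The two agree in value, but their slopes at $t_k^{(v)}$ differ by more than $\varepsilon$; that is exactly the preemption test. So the right-sided first-order coefficient of $F_v$ (the only meaningful reading at an event time), and hence $c_2$ of $p_v$, is off by $\partial_{x_j} f_v$ times that difference.

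The lemma as stated therefore needs an extra hypothesis. Either no component entering $f_i$ through its current state has an event at $t_k^{(i)}$ processed after $i$'s, or the coupling is purely delayed, as in the CTRNN, Ikeda and Mackey--Glass benchmarks. The delayed-argument hypothesis must also be read componentwise: $\hat t_d(t_k^{(i)})$ must avoid the event times of every component read there. The paper's proof does not address this either; it simply asserts that the AD reads the $p_j$.

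Two smaller points. First, your node-by-node induction presupposes that all input series are known. But one input, $x_{num,i}$ itself (and, through $\tau(t,x_{num}(t))$, the delayed time $\hat t_d$), is the series being generated. The induction therefore has to run over the order $n$, using that the $n$-th coefficient of every node depends only on input coefficients of order at most $n$; that is what makes $c_{n+1}=[F_i]_n/(n+1)$ a consistent recursion. Second, Assumption~\ref{assum:causality} is stated along the exact solution. Applying it to $t-\tau(t,x_{num}(t))$, as you and Lemma~\ref{lem:bounded_F} do, needs a separate argument controlling $x_{num}'$.
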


\begin{proof}
  This relies fundamentally on the architecture of asynchronous dense-output. The AD pipeline does not merely read the scalar values of neighboring components $j \neq i$; it reads their $K$-th degree polynomial representations $p_j(t)$. As long as the delayed evaluation avoids exact segment boundaries (which occurs a.e.), the historical argument accesses a purely smooth $C^\infty$ polynomial function. 
  
  Because the standard rules of algorithmic Taylor arithmetic exactly preserve univariate and multivariate series composition \cite{berz2003taylor, griewank2008evaluating, perez2019taylorintegration}, the AD engine evaluates the exact temporal derivatives of $F_i(t)$ at the specific local event time $t_k^{(i)}$. The extension to state-dependent delay systems preserves this exactness because retrieving historical state components maps strictly to evaluating an explicit history polynomial $P_{\text{hist}}$ via standard algebraic Cauchy multiplication.
\end{proof}

\begin{lemma}[Assembled Global Truncation Defect]
  \label{lem:local_defect}
  Assuming the exact algorithmic generation of local Taylor coefficients up to degree $K$, the assembled global continuous defect $\rho(t) := x_{num}'(t) - F(t)$ satisfies the strict bound:
  $$\|\rho(t)\| \le \frac{M_F(\epsilon)}{K!} h_{\max}^K,$$
  almost everywhere for any $t \in [0, T]$, provided the numerical trajectory remains strictly within the compact tube $\Omega_\epsilon$.
\end{lemma}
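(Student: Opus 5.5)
Fix an arbitrary time $t\in[0,T]$ that is not an event time and whose delayed argument does not fall on a history-segment boundary; by Lemma~\ref{lem:bounded_F} and Assumption~\ref{assum:causality}, the excluded times form a set of measure zero. For each component $i$, let $t_k^{(i)}$ be the most recent event time of component $i$ with $t_k^{(i)}\le t$. Then $s:=t-t_k^{(i)}\le \Delta t_k\le h_{\max}$. On $[t_k^{(i)},t]$ the assembled state is $p_i(t)=\sum_{n=0}^{K} c_n^{(i)}(t-t_k^{(i)})^n$. By Lemma~\ref{lem:coeff_exactness}, its coefficients are the Taylor coefficients of the numerical trajectory itself: $c_0^{(i)}=x_{num,i}(t_k^{(i)})$ and $c_{n}^{(i)}=F_i^{(n-1)}(t_k^{(i)})/n!$ for $1\le n\le K$. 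Differentiating termwise, I would write
$$p_i'(t)=\sum_{n=0}^{K-1}\frac{F_i^{(n)}(t_k^{(i)})}{n!}\,s^{n},$$
which is exactly the degree-$(K-1)$ Taylor polynomial of $F_i$ about $t_k^{(i)}$, evaluated at $t$. The defect component is therefore $\rho_i(t)=p_i'(t)-F_i(t)$, which is minus the Taylor remainder of $F_i$ of order $K$.

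The next step is to bound this remainder. I would use the integral form,
$$F_i(t)-\sum_{n<K}\frac{F_i^{(n)}(t_k^{(i)})}{n!}s^n=\int_{t_k^{(i)}}^{t}\frac{(t-r)^{K-1}}{(K-1)!}F_i^{(K)}(r)\,dr,$$
rather than the mean-value form, because $F_i^{(K)}$ is only piecewise continuous: the delayed argument of $F_i$ may cross history boundaries, the "delay echoes", inside $(t_k^{(i)},t)$. The integral form needs only an essential-supremum bound. Since $x_{num}(r)\in\Omega_\epsilon$ for all $r\le t$ by hypothesis, Lemma~\ref{lem:bounded_F} gives $|F_i^{(K)}(r)|\le M_F(\epsilon)$ a.e. It follows that $|\rho_i(t)|\le M_F(\epsilon)\,s^K/K!\le M_F(\epsilon)h_{\max}^K/K!$. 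Taking the maximum over components, with the convention that $M_F(\epsilon)$ bounds $\|F^{(K)}\|$ in the chosen norm and any norm-equivalence constant is absorbed into it, gives the stated bound almost everywhere on $[0,T]$.

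The main obstacle is justifying the remainder formula across delay echoes. There, $F_i$ is continuous but only piecewise $C^K$, so the Taylor expansion of $F_i$ about $t_k^{(i)}$ is not automatically valid on the whole subinterval. I would handle this by splitting $[t_k^{(i)},t]$ at the finitely many echo points given by Lemma~\ref{lem:bounded_F}. On each piece I would integrate $F_i^{(K-1)}$, checking that $F_i,\dots,F_i^{(K-1)}$ stay continuous across the echoes. That continuity holds only if the history polynomials match to order $K-1$ at segment boundaries, which the $C^0$ remark does not provide. If it fails, the fallback is to exploit the fact that AD expands the history segment actually covering the delayed argument at $t_k^{(i)}$: the Taylor identity is then exact for the one-sided smooth extension of that segment. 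With the a.e. convention, this restricts the claim to times $t$ whose interval $(t_k^{(i)},t)$ contains no echo. I expect this subtlety, rather than the remainder estimate itself, to require the most care.
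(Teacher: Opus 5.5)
Your main line is the paper's proof. Componentwise, Lemma~\ref{lem:coeff_exactness} identifies $p_i'$ with the degree-$(K-1)$ Taylor polynomial of $F_i$ at $t_k^{(i)}$. The remainder is then controlled through Lemma~\ref{lem:bounded_F}, and $t-t_k^{(i)}\le h_{\max}$ finishes the estimate. The paper uses the sup form $\sup_\xi\|F_i^{(K)}(\xi)\|/K!$; your integral form is the better tool when only an essential-supremum bound is available. The paper never confronts the obstacle you raise: it applies the remainder theorem on all of $[t_k^{(i)},t]$ as though $F_i\in C^K$ there. Your proposal leaves the obstacle open too, so there is a genuine gap exactly where you flagged it. The integral remainder identity needs $F_i,\dots,F_i^{(K-1)}$ continuous on $[t_k^{(i)},t]$. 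At each interior breakpoint $b$ it acquires extra terms $\sum_{n=1}^{K-1}[F_i^{(n)}](b)\,(t-b)^n/n!$, where $[\cdot](b)$ denotes the jump at $b$, and no bound on the essential supremum of $F_i^{(K)}$ can see these. The jumps are generically nonzero. A new segment of a neighbour $j$ starting at $b$ has $(p_j^{\mathrm{new}})'(b)=F_j(b)$, while the old one has $(p_j^{\mathrm{old}})'(b)=F_j(b)+\rho_j(b^-)$. So $x_{num,j}'$ jumps by $-\rho_j(b^-)\neq0$, and this passes into $F_i'$ whenever $f_i$ reads $x_j$. Such breakpoints come not only from delay echoes but also from the event times of neighbours entering $f_i$ undelayed (e.g.\ $x_{i\pm1}(t)$ in the heat-lattice benchmark).

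Your fallback does not rescue the \emph{almost everywhere} claim. Restricting to $t$ with no breakpoint in $(t_k^{(i)},t)$ discards everything from the first breakpoint to the next event of $i$, which is a set of positive measure. In the multirate regime the paper targets it discards essentially all times: the slow $u_2$ of the motivating example reads $u_1(t-\tau)$, whose history segments have length about $10^{-5}$. What is missing is a quantitative bound on the breakpoint contributions, and your fallback already points to it. Let $\tilde F_i$ be $f_i$ composed with the segments the AD actually expanded at $t_k^{(i)}$, continued as polynomials over $[t_k^{(i)},t]$ and the corresponding delayed range. Then $p_i'$ is exactly the degree-$(K-1)$ Taylor polynomial of the smooth function $\tilde F_i$, so your integral-remainder estimate applies verbatim to $p_i'-\tilde F_i$. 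The remaining term $\tilde F_i-F_i$ is bounded, via the Lipschitz constants of $f$ and $\tau$, by the drift between frozen and current neighbour polynomials. These coincide up to the neighbour's first switch (and, by $C^0$ continuity, in value at it). After that their derivatives differ by $O(h_{\max}^K)$ over a window of length $O(h_{\max})$, and an induction along the event sequence gives $\tilde F_i-F_i=O(h_{\max}^{K+1})$. The result is $\|\rho(t)\|\le\tfrac{\tilde M}{K!}h_{\max}^K+C\,h_{\max}^{K+1}$ a.e., with $\tilde M$ bounding the $K$-th derivatives of the frozen composites. That is the lemma with an enlarged constant, which is all Lemma~\ref{lem:conditional_gronwall} and the bootstrap use. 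It is not the literal constant $M_F(\epsilon)/K!$, which neither your argument nor the paper's establishes once breakpoints are present.
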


\begin{proof}
  We construct the global defect bound by analyzing the system component-wise. At any given continuous time $t \in (0, T)$ not exactly on an event boundary, every component $i \in \{1, \dots, m\}$ is governed by its most recently generated polynomial $p_i(t)$, updated at some local event time $t_k^{(i)} \le t$.
  
  The local continuous defect for component $i$ is defined as $\rho_i(t) := p_i'(t) - F_i(t)$. 
  Because $p_i(t)$ is a polynomial of degree $K$, its derivative $p_i'(t)$ is a polynomial of degree $K-1$. By Lemma \ref{lem:coeff_exactness}, $p_i'(t)$ exactly matches the Taylor expansion of $F_i(t)$ around $t_k^{(i)}$ up to degree $K-1$.
  
  Applying the standard Taylor remainder theorem for a $(K-1)$-degree polynomial strictly to the $i$-th component yields:
  $$|\rho_i(t)| \le \frac{\sup_{\xi} \|F_i^{(K)}(\xi)\|}{K!} (t-t_k^{(i)})^K.$$
  
  By definition of the asynchronous scheduler, the elapsed time since the local update is strictly bounded by the maximum global event interval: $(t-t_k^{(i)}) \le \Delta t_k^{(i)} \le h_{\max}$. Furthermore, because the assembled numerical trajectory $x_{num}(s)$ is confined to the compact tube $\Omega_\epsilon$, Lemma \ref{lem:bounded_F} guarantees that the continuous $K$-th derivative of the vector field is uniformly bounded almost everywhere by $M_F(\epsilon)$.
  
  Consequently, every individual component strictly satisfies:
  $$|\rho_i(t)| \le \frac{M_F(\epsilon)}{K!} h_{\max}^K \quad \text{a.e.}$$
  
  Because this strict $\mathcal{O}(h_{\max}^K)$ upper bound holds simultaneously for all components $i=1, \dots, m$ on their respective active asynchronous intervals, it assembles directly into the global vector defect norm:
  $$\|\rho(t)\| = \|x_{num}'(t) - F(t)\| \le \frac{M_F(\epsilon)}{K!} h_{\max}^K \quad \text{a.e.}$$
\end{proof}

From the definition of the continuous defect and the true analytical derivative $x'(t) = f(x(t), x(t_d))$, the continuous error dynamics exactly satisfy almost everywhere (for $t \neq t_k$):
$$e'(t) = x_{num}'(t) - x'(t) = f(x_{num}(t),x_{num}(\hat t_d)) - f(x(t),x(t_d)) + \rho(t).$$

To simplify notation in the subsequent error bounds, we explicitly define the local truncation defect constant parameterized by the tube radius $\epsilon$:
$$\beta(\epsilon) := \frac{M_F(\epsilon)}{K!}.$$

Using the Lipschitz property of $f$ and substituting the bound on the continuous defect, we can bound the piecewise derivative of the error:
$$\|e'(t)\| \le L_0\|e(t)\| + L_d\|x_{num}(\hat t_d)-x(t_d)\| + \beta(\epsilon) h_{\max}^K \quad \text{a.e.}$$

To handle the delayed state difference, we add and subtract the true state evaluated at the approximate delayed time, $x(\hat t_d)$, and apply the triangle inequality:
$$\|x_{num}(\hat t_d)-x(t_d)\| \le \|x_{num}(\hat t_d)-x(\hat t_d)\| + \|x(\hat t_d)-x(t_d)\|.$$

The first term on the right-hand side is exactly the global error evaluated at the approximate delayed time, $\|e(\hat t_d)\|$. Since $\hat t_d \le t$, we can bound this by the supremum of the error over the history up to time $t$. Let $\|e_t\| := \sup_{s \in [-\tau_{\max}, t]} \|e(s)\|$. Then,
$$\|x_{num}(\hat t_d)-x(\hat t_d)\| = \|e(\hat t_d)\| \le \|e_t\|.$$

Applying the delay perturbation lemma to the second term yields\\
$\|x(\hat t_d)-x(t_d)\| \le M_1L_\tau\|e(t)\|$. Substituting these bounds back
into the delayed state difference gives:
$$\|x_{num}(\hat t_d)-x(t_d)\| \le \|e_t\| + M_1L_\tau\|e(t)\|.$$

Substituting this into the differential inequality for $\|e'(t)\|$ and grouping the non-delayed terms yields:
$$\|e'(t)\| \le (L_0 + L_d M_1 L_\tau)\|e(t)\| + L_d \|e_t\| + \beta(\epsilon) h_{\max}^K \quad \text{a.e.}$$

Letting $\alpha = L_0 + L_d M_1 L_\tau$, we obtain the fundamental piecewise delay-differential inequality governing the error propagation:
$$\|e'(t)\| \le \alpha\|e(t)\| + L_d \|e_t\| + \beta(\epsilon) h_{\max}^K \quad \text{a.e.}$$

\subsection{Bounded Continuous Error Propagation}

To establish a global bound on the error, we employ a delay-differential extension of Gronwall's inequality. Because the defect bounding constant $M_F(\epsilon)$ requires the trajectory to remain inside the compact tube $\Omega_\epsilon$, the error propagation must first be established conditionally.

\begin{lemma}[Conditional Bounded Error Propagation]
  \label{lem:conditional_gronwall}
  Assume the numerical trajectory remains strictly inside the compact neighborhood $\Omega_\epsilon$ over an arbitrary bootstrap interval $[0, t^*] \subseteq [0,T]$. Then there exists a finite constant $C_T(\epsilon) > 0$ such that the error satisfies:
  $$\sup_{t\in[0,t^*]}\|e(t)\| \le C_T(\epsilon) h_{\max}^{K}.$$
\end{lemma}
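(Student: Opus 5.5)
The argument integrates the a.e. delay-differential inequality derived just above the statement and closes it with a Gronwall argument applied to the running supremum. On $[0,t^*]$ the trajectory stays in $\Omega_\epsilon$, so Lemma~\ref{lem:local_defect} and Lemma~\ref{lem:bounded_F} give
$$\|e'(t)\| \le \alpha\|e(t)\| + L_d \|e_t\| + \beta(\epsilon) h_{\max}^K \quad \text{a.e. on } [0,t^*].$$

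\textbf{Step 1: pass from the a.e. derivative to an integral inequality.} $x_{num}$ is continuous and piecewise polynomial with finitely many event times on $[0,T]$, and $x$ is smooth, so $e$ is absolutely continuous. Hence
$$e(t)=e(0)+\int_0^t e'(s)\,ds,$$
and the finitely many jump points of $e'$ do not matter. I then have
$$\|e(t)\| \le \|e(0)\| + \int_0^t \bigl(\alpha\|e(s)\| + L_d\|e_s\| + \beta(\epsilon)h_{\max}^K\bigr)\,ds.$$

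\textbf{Step 2: bound the initial and history contributions.} On $[-\tau_{\max},0]$ the solver evaluates the exact history $\phi$, so $e\equiv 0$ there. At $t=0$, the initial polynomials match $x_0$ exactly, so $e(0)=0$; at worst it is $O(h_{\max}^{K+1})$ by Lemma~\ref{lem:async_consistency}, which is absorbed into the final bound. Consequently $\|e_s\|=\sup_{r\in[0,s]}\|e(r)\|$ for $s\ge 0$.

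\textbf{Step 3: apply Gronwall to the nondecreasing function $E(t):=\|e_t\|$.} Since the right-hand side of Step 1 is nondecreasing in $t$, taking the supremum over $r\le t$ gives
$$E(t) \le \|e(0)\| + \beta(\epsilon)h_{\max}^K t + (\alpha+L_d)\int_0^t E(s)\,ds.$$
$E$ is continuous and bounded on $[0,t^*]$, so the standard integral Gronwall lemma applies and yields
$$E(t) \le \bigl(\|e(0)\| + \beta(\epsilon) T h_{\max}^K\bigr)e^{(\alpha+L_d)T}.$$
With $e(0)=0$ (or $O(h_{\max}^{K+1})$, which is dominated for $h_{\max}\le 1$), this gives
$$C_T(\epsilon) = \bigl(\beta(\epsilon)T + c\bigr)e^{(\alpha+L_d)T}.$$
This constant is independent of $h_{\max}$ and of $t^*$.

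\textbf{Main obstacle.} The delicate point is justifying Step 1 together with the delayed term. The inequality holds only almost everywhere, because $\hat t_d$ can land on segment boundaries and $e'$ has jumps at event times. I would handle this using the measure-zero argument of Lemma~\ref{lem:bounded_F} together with absolute continuity of $e$; I would not differentiate $\|e\|$ directly.

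A second subtlety is that $\alpha=L_0+L_dM_1L_\tau$ requires $\hat t_d\le t$. Causality (Assumption~\ref{assum:causality}) guarantees that $\hat t_d \le t$, so $\|e(\hat t_d)\|\le E(t)$ is legitimate.
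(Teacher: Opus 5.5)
Your proposal is correct and follows the paper's proof essentially step for step: you integrate the a.e.\ inequality using absolute continuity of $e$, use monotonicity of the right-hand side to pass to the running supremum $\|e_t\|$, and apply Gronwall with $e(0)=0$ to get $C_T(\epsilon)=\beta(\epsilon)T\exp((\alpha+L_d)T)$. Your explicit check that $e\equiv 0$ on $[-\tau_{\max},0]$ and your allowance for an $O(h_{\max}^{K+1})$ initial error are harmless refinements; one small correction is that $-\tau_{\max}\le\hat t_d\le t$ follows from $0\le\tau\le\tau_{\max}$ in Assumption 2, not from the causality assumption (Assumption 4), which only says the delayed argument is monotone along the exact solution.
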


\begin{proof}
  Since $\|e(t)\| \le \|e_t\|$, we can simplify the differential bound to:
  $$\|e'(t)\| \le (\alpha + L_d)\|e_t\| + \beta(\epsilon) h_{\max}^K \quad \text{a.e.}$$

  Because $e(t)$ is absolutely continuous (globally $C^0$ and piecewise $C^1$), its fundamental theorem of calculus holds over the integration domain. Integrating from $0$ to $t$ (for $t \le t^*$) and applying the triangle inequality yields:
  $$\|e(t)\| \le \|e(0)\| + \int_0^t \left( (\alpha + L_d)\|e_s\| + \beta(\epsilon) h_{\max}^K \right) ds.$$

  Because the right-hand side is a monotonically non-decreasing function of $t$, it strictly bounds the supremum of the left-hand side over the interval $[0, t]$. Assuming exact initial history conditions (i.e., $\|e(0)\| = 0$), we obtain an integral inequality for the supremum function:
  $$\|e_t\| \le \beta(\epsilon) t h_{\max}^K + \int_0^t (\alpha + L_d)\|e_s\| ds.$$

  Applying the standard continuous Gronwall's lemma \cite{gronwall1919note} to the scalar function $u(t) = \|e_t\|$ yields:
  $$\|e_t\| \le \beta(\epsilon) t h_{\max}^K \exp\left((\alpha + L_d)t\right).$$

  Evaluating this bound up to the maximal domain time $T$ yields an unconditionally safe upper bound for any $t^* \le T$:
  $$\sup_{t\in[0,t^*]}\|e(t)\| \le \left[ \beta(\epsilon) T \exp\left((\alpha + L_d)T\right) \right] h_{\max}^K.$$

  Defining the constant $C_T(\epsilon) := \beta(\epsilon) T \exp\left((\alpha + L_d)T\right)$ completes the proof.
\end{proof}

\subsection{Global Convergence}

\begin{theorem}[Global Convergence and A Priori Stability]
  \label{thm:global_convergence_app}
  Under Assumptions 1, 2, and 3, there exists an explicit step size threshold $h_0 > 0$ such that for all maximum event intervals $h_{\max} \le h_0$, the global numerical error satisfies:
  $$\sup_{t\in[0,T]} \|x(t)-x_{num}(t)\| \le C_T(\epsilon) h_{\max}^{K}.$$
\end{theorem}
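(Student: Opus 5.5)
The proof is a continuous induction (bootstrap) that removes the conditional hypothesis of Lemma~\ref{lem:conditional_gronwall}. First I fix the tube radius once and for all, say $\epsilon = 1$ (any fixed $\epsilon>0$ works). This freezes the constants $M_F(\epsilon)$, $\beta(\epsilon)$ and $C_T(\epsilon)$, none of which depend on $h_{\max}$. I then define the threshold explicitly by
$$h_0 := \left( \frac{\epsilon}{2\,C_T(\epsilon)} \right)^{1/K},$$
so that $C_T(\epsilon) h_{\max}^K \le \epsilon/2 < \epsilon$ whenever $h_{\max}\le h_0$.

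Next I introduce the exit time
$$t^* := \sup\{ s \in [0,T] : x_{num}(r) \in \Omega_\epsilon \text{ for all } r \in [0,s] \}.$$
Because the initial history is exact, $e(0)=0$, so $x_{num}(0)=x(0)$ lies in the interior of $\Omega_\epsilon$. Since $x_{num}$ is globally $C^0$ (it is assembled from continuous piecewise Taylor segments, as in the Continuity Remark), we get $t^*>0$ and, by closedness of $\Omega_\epsilon$, $x_{num}(s)\in\Omega_\epsilon$ on all of $[0,t^*]$. On $[-\tau_{\max},0]$ the numerical history equals $\phi = x$, so the delayed reads also stay in the tube. This lets me apply Lemma~\ref{lem:bounded_F}, Lemma~\ref{lem:local_defect} and the differential inequality derived before Lemma~\ref{lem:conditional_gronwall} on $[0,t^*]$. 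Lemma~\ref{lem:conditional_gronwall} then gives
$$\sup_{[0,t^*]}\|e\| \le C_T(\epsilon)h_{\max}^K \le \epsilon/2.$$

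Now suppose $t^*<T$. By continuity of $e$, there is a $\delta>0$ with $\|e(s)\|<\epsilon$ on $[t^*,t^*+\delta]$. Hence $\|x_{num}(s)-x(s)\|<\epsilon$ there, and since $x(s)$ is a point of the exact trajectory, $x_{num}(s)\in\Omega_\epsilon$. This contradicts the maximality of $t^*$, so $t^*=T$. The conditional bound of Lemma~\ref{lem:conditional_gronwall} then holds on all of $[0,T]$, which is the claimed estimate. Letting $h_{\max}\to 0$ gives convergence.

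The main obstacle is a hidden circularity in the constants: $M_F(\epsilon)$ is defined as an essential supremum over numerical trajectories that stay in the tube, and the method's trajectory itself depends on $h_{\max}$. I must argue that $M_F(\epsilon)$ is bounded uniformly in $h_{\max}$. To do this, I would bound $F^{(K)}$ by the chain rule in terms of sup-norms of derivatives of $f$ and $\tau$ on the compact set $\Omega_\epsilon\times\Omega_\epsilon$, together with the derivatives of the stored polynomials $p_j$. The latter are controlled recursively, because their coefficients are generated by Lemma~\ref{lem:coeff_exactness} from $f$ evaluated in the tube. If a clean uniform bound on the polynomial derivatives is hard to obtain, the fallback is to take this uniformity as part of the definition of $M_F(\epsilon)$, since the paper's statement of Lemma~\ref{lem:bounded_F} treats $M_F(\epsilon)$ as a finite constant independent of $h_{\max}$ under Assumption~3. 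A second, minor point is that the measure-zero exceptional sets do not affect the integrated inequality, because $e$ is absolutely continuous.
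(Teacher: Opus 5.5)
Your proposal is correct and follows the paper's own bootstrap argument. It uses the same threshold $h_0(\epsilon)=(\epsilon/(2C_T(\epsilon)))^{1/K}$ and the same exit time $t^*$, and Lemma~\ref{lem:conditional_gronwall} forces $\|e\|\le\epsilon/2$ on $[0,t^*]$. The only differences are cosmetic: you define $t^*$ by membership in $\Omega_\epsilon$ and close with an openness/continuation step, whereas the paper uses $\|e\|\le\epsilon$ and the contradiction ``$\|e(t^*)\|=\epsilon$ versus $\le\epsilon/2$''.

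The paper does not prove the uniformity of $M_F(\epsilon)$, and hence of $C_T(\epsilon)$ and $h_0$, in $h_{\max}$ either. It takes this as given through Lemma~\ref{lem:bounded_F} and the clause in Assumption~3 that all constants are independent of $h_{\max}$, so your fallback is exactly the paper's treatment.
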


\begin{proof}
  We establish this using a bootstrap argument (continuous induction). Choose an arbitrary tube radius $\epsilon > 0$ and let $\Omega_\epsilon$ be its corresponding compact neighborhood. 
  
  We explicitly select the step size threshold $h_0(\epsilon)$ such that the maximum theoretically accumulated error bounded by Lemma \ref{lem:conditional_gronwall} remains strictly less than the chosen tube radius $\epsilon$:
  $$h_0(\epsilon) := \left( \frac{\epsilon}{2 C_T(\epsilon)} \right)^{1/K}.$$
  
  Assume for the sake of contradiction that the numerical trajectory escapes the tube $\Omega_\epsilon$ before the final time $T$. Let $[0, t^*]$ be the maximal bootstrap interval, where the escape time $t^*$ is defined as:
  $$t^* := \sup \{ t \in [0,T] : \|x(s) - x_{num}(s)\| \le \epsilon \text{ for all } s \in [0, t] \}.$$
  
  Because the solver is initialized with exact history conditions ($\|e(0)\| = 0 < \epsilon$) and the numerical trajectory is piecewise continuous, $t^* > 0$.
  
  For the entire bootstrap interval $[0, t^*)$, the numerical trajectory remains strictly in the interior of $\Omega_\epsilon$. Thus, by topological compactness, Lemma \ref{lem:bounded_F} ensures $M_F(\epsilon)$ remains finite, which strictly validates the delay-differential Gronwall error propagation in Lemma \ref{lem:conditional_gronwall}. By continuity, exactly at the escape time $t^*$, the error must satisfy $\|e(t^*)\| = \epsilon$. 
  
  However, because the solver operates with a maximum step size $h_{\max} \le h_0(\epsilon)$, applying the conditional Gronwall bound over the validated bootstrap interval $[0, t^*]$ guarantees:
  $$\|e(t^*)\| \le C_T(\epsilon) h_{\max}^K \le C_T(\epsilon) h_0(\epsilon)^K = \frac{\epsilon}{2} < \epsilon.$$
  
  This is a strict mathematical contradiction. The continuous error cannot simultaneously equal $\epsilon$ and be bounded by $\epsilon/2$. Therefore, our assumption of escape is false, and $t^* = T$. 
  
  Consequently, the numerical trajectory never escapes the compact tube $\Omega_\epsilon$ during the integration domain $[0,T]$. This closes the bootstrap loop, satisfying the condition for Lemma \ref{lem:conditional_gronwall} globally, yielding the final convergence bound:
  $$\sup_{t\in[0,T]} \|x(t)-x_{num}(t)\| \le C_T(\epsilon) h_{\max}^K.$$
\end{proof}

\begin{corollary}[Uniform Convergence]
  $$\sup_{t\in[0,T]} \|x(t)-x_{num}(t)\| = O(h_{\max}^{K}).$$
\end{corollary}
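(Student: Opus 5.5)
The corollary is a restatement of Theorem~\ref{thm:global_convergence_app} in asymptotic notation. The plan is to fix the constants and read off the $O(h_{\max}^K)$ bound. Everything rests on one fact: once a tube radius $\epsilon>0$ is chosen, the constant $C_T(\epsilon)=\beta(\epsilon)\,T\exp\bigl((\alpha+L_d)T\bigr)$ from Lemma~\ref{lem:conditional_gronwall} does not depend on $h_{\max}$. Here $\beta(\epsilon)=M_F(\epsilon)/K!$ depends only on $f$, $\tau$, $\phi$, $K$ and $\Omega_\epsilon$. The quantities $\alpha=L_0+L_dM_1L_\tau$ and $L_d$ come from Assumptions~2--3, and $T$ is fixed.

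The steps, in order, are as follows. First, fix an arbitrary $\epsilon>0$, for instance $\epsilon=1$. Set $h_0=h_0(\epsilon)=(\epsilon/(2C_T(\epsilon)))^{1/K}$ exactly as in the proof of Theorem~\ref{thm:global_convergence_app}. Second, for every $h_{\max}\le h_0$, invoke Theorem~\ref{thm:global_convergence_app}. The bootstrap argument there keeps the trajectory in $\Omega_\epsilon$, and so gives $\sup_{t\in[0,T]}\|x(t)-x_{num}(t)\|\le C_T(\epsilon)h_{\max}^K$. Third, note that the definition of $O(\cdot)$ as $h_{\max}\to 0$ only requires a bound with a fixed constant on some neighbourhood $h_{\max}\le h_0$. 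That is precisely what the second step provides, with constant $C_T(\epsilon)$. Finally, since $h_{\max}^K\to 0$, uniform convergence follows, and hence so does pointwise convergence.

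The only real obstacle is justifying that $M_F(\epsilon)$ is a genuine finite constant uniform in $h_{\max}$. Its definition as an essential supremum over trajectories confined to $\Omega_\epsilon$ might, in principle, depend on the particular numerical trajectory and hence on $h_{\max}$. My first approach would be to bound $F^{(K)}$ via the chain rule by a polynomial in the derivatives of $f$ and $\tau$ on the compact set $\Omega_\epsilon\times\Omega_\epsilon$, together with the piecewise derivatives of $x_{num}$ up to order $K$. I would then control the latter inductively through the Taylor recursion, since each coefficient of $p_i$ is itself a derivative of $F_i$. Under Lemma~\ref{lem:bounded_F} this is taken as given, so here I would simply cite it and note that the resulting constant depends only on $\epsilon$.
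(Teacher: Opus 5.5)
This is exactly the paper's route: the corollary is stated without separate proof as a direct restatement of Theorem~\ref{thm:global_convergence_app}, and you read off $O(h_{\max}^K)$ from the bound $C_T(\epsilon)h_{\max}^K$, which holds for all $h_{\max}\le h_0(\epsilon)$. You rightly flag that $M_F(\epsilon)$, and hence $C_T(\epsilon)$, must be uniform in $h_{\max}$, but Lemma~\ref{lem:bounded_F} does not establish this, since its $M_F(\epsilon)$ is defined along the particular numerical trajectory; the paper instead postulates it in Assumption~3 (``all constants \dots\ $C_T$ are independent of $h_{\max}$''), so that assumption, not the lemma, is what your argument actually rests on.
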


\subsection{Adaptive Error Control}

Unlike embedded Runge-Kutta methods that rely on carefully tuned coefficient tableaus to estimate local error, Taylor-series methods natively possess an intrinsic error estimator. By evaluating the magnitude of the highest-order computed term (or the difference between a $K$-th and $(K-1)$-th degree expansion), the solver directly measures the local truncation error. Provided the $(K+1)$-th derivative of the state trajectory does not entirely vanish, this algorithmic estimator strictly mirrors the theoretical asymptotic scaling. We formalize this as an operational assumption for the adaptive controller:

\begin{assumption}[Asymptotically Exact Error Estimator]
  \label{assum:estimator}
  The asynchronous adaptive controller evaluates event intervals using a local truncation error estimator $\eta(h)$ that perfectly captures the theoretical polynomial scaling. There exist positive constants $c_1, c_2 > 0$ such that for any sufficiently small event interval $h$:
  $$c_1 h^{K+1} \le \eta(h) \le c_2 h^{K+1}.$$
\end{assumption}

\begin{theorem}[Adaptive Convergence Order]
  \label{thm:adaptive_convergence}
  Under Assumption \ref{assum:estimator}, if the controller strictly enforces the local tolerance bound $\eta(h) \le \varepsilon$ across all asynchronous components, then the global numerical error satisfies:
  $$\|x(T) - x_{num}(T)\| = O(\varepsilon^{K/(K+1)}).$$
\end{theorem}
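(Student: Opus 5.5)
The argument has two steps. First, convert the tolerance constraint into a bound on the maximal accepted event interval. Then substitute that bound into the unconditional convergence estimate of Theorem~\ref{thm:global_convergence_app}.

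\textbf{Step 1: tolerance to step size.} For every accepted interval $h$ of any component, the controller enforces $\eta(h)\le\varepsilon$. The lower bound in Assumption~\ref{assum:estimator} then gives $c_1 h^{K+1}\le\eta(h)\le\varepsilon$, so
\[
h\le (\varepsilon/c_1)^{1/(K+1)}.
\]
Taking the maximum over all events yields
\[
h_{\max}\le c_1^{-1/(K+1)}\,\varepsilon^{1/(K+1)}.
\]
Only the lower estimator constant $c_1$ is needed for this direction. The upper constant $c_2$ would only matter for a matching lower bound on step sizes, and hence for the event-count statement of Proposition~\ref{prop:step_scaling}.

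\textbf{Step 2: substitution.} Fix a tube radius $\epsilon>0$, with the associated $C_T(\epsilon)$ and $h_0(\epsilon)$ from Theorem~\ref{thm:global_convergence_app}. Choose $\varepsilon_0$ so that $c_1^{-1/(K+1)}\varepsilon_0^{1/(K+1)}\le h_0(\epsilon)$. Then for every $\varepsilon\le\varepsilon_0$ the hypothesis $h_{\max}\le h_0$ holds, and the bootstrap theorem applies. It gives
\[
\|x(T)-x_{num}(T)\|\le\sup_{t\in[0,T]}\|e(t)\|\le C_T(\epsilon)\,h_{\max}^K\le C_T(\epsilon)\,c_1^{-K/(K+1)}\,\varepsilon^{K/(K+1)}.
\]
The constant $C_T(\epsilon)\,c_1^{-K/(K+1)}$ is independent of $\varepsilon$, which is exactly the claim $O(\varepsilon^{K/(K+1)})$.

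\textbf{Main obstacle.} The delicate point is the phrase ``for any sufficiently small event interval $h$'' in Assumption~\ref{assum:estimator}. The lower bound $c_1h^{K+1}\le\eta(h)$ is only guaranteed for $h$ below some threshold $\bar h$. A priori, an accepted step could be large if the estimator happens to be small there. I would handle this in one of two ways:
\begin{enumerate}
\item Argue that the controller never proposes steps exceeding a cap $h_{\text{cap}}\le\bar h$. This is a standard safeguard, and I would add it as an implicit hypothesis.
\item Observe that, within the admissible range $h\le\bar h$, the step selection $h\approx(\varepsilon/\eta_{\mathrm{coef}})^{1/(K+1)}$ tends to zero as $\varepsilon\to0$. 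Hence for $\varepsilon$ small enough every accepted $h$ lies in the regime where the assumption is valid.
\end{enumerate}
I would state this reduction explicitly, then apply Step 1 uniformly over all asynchronous components and all events, including preempted ones. Preemptions only shorten intervals, so they cannot increase $h_{\max}$.
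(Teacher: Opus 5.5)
Your proposal is correct and follows the paper's proof: you use the lower estimator bound to get $h_{\max}\le(\varepsilon/c_1)^{1/(K+1)}$, then substitute into the $C_T(\epsilon)h_{\max}^K$ bound of Theorem~\ref{thm:global_convergence_app}. Your extra checks are refinements the paper's proof skips rather than a different route: that $h_{\max}\le h_0(\epsilon)$ once $\varepsilon\le\varepsilon_0$, and that accepted steps lie in the ``sufficiently small $h$'' regime of Assumption~\ref{assum:estimator}.
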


\begin{proof}
  By Assumption \ref{assum:estimator}, any accepted step size $h$ must satisfy the local tolerance constraint $\eta(h) \le \varepsilon$.
  
  Applying the lower bound of the estimator's asymptotic behavior, we have:
  $$c_1 h^{K+1} \le \eta(h) \le \varepsilon.$$
  
  Rearranging this inequality to isolate the step size $h$ yields:
  $$h \le \left( \frac{\varepsilon}{c_1} \right)^{\frac{1}{K+1}}.$$
  
  Because this bound applies universally to all accepted steps across all asynchronous coordinates, the maximum global event interval $h_{\max}$ is strictly bounded by:
  $$h_{\max} = O(\varepsilon^{1/(K+1)}).$$
  
  From the Unconditional Global Convergence theorem, we have already rigorously established that the global error is bounded by $C_T(\epsilon) h_{\max}^K$. Substituting our bound for $h_{\max}$ into this global error bound yields:
  $$\|x(T) - x_{num}(T)\| \le C_T(\epsilon) \left( \left( \frac{\varepsilon}{c_1} \right)^{\frac{1}{K+1}} \right)^K = \frac{C_T(\epsilon)}{c_1^{K/(K+1)}} \varepsilon^{K/(K+1)}.$$
  
  Because $C_T(\epsilon)$ and $c_1$ are constants independent of $\varepsilon$, it immediately follows that the global error scales strictly as $O(\varepsilon^{K/(K+1)})$.
\end{proof}

\end{document}